\newtheorem{assumption}{\bf Assumption}
\newtheorem{theorem}{\bf Theorem}
\newtheorem{proposition}{\bf Proposition}
\newtheorem{lemma}{\bf Lemma}
\newtheorem{corollary}{\bf Corollary}
\newtheorem{remark}{\bf Remark}
\title{Constrained nonlinear output regulation using model predictive control - extended version}
\author{Johannes K\"ohler$^1$, Matthias A. M\"uller$^2$, Frank Allg\"ower$^1$
\thanks{$^1$Johannes K\"ohler and Frank Allg\"ower are with the
Institute for Systems Theory and Automatic Control, University of Stuttgart,
70550 Stuttgart, Germany. (email:$\{$johannes.koehler, frank.allgower$\}$@ist.uni-stuttgart.de).}
\thanks{$^2$Matthias A. M\"uller is with the Institute of Automatic Control, Leibniz University Hannover, 30167 Hannover, Germany.
(email:mueller@irt.uni-hannover.de).}
\thanks{Johannes K\"ohler would like to thank the German Research Foundation (DFG) for financial support of the project within the International Research Training Group “Soft Tissue Robotics” (GRK 2198/1 - 277536708).} 
}
\begin{document}
\IEEEoverridecommandlockouts
\IEEEpubid{\begin{minipage}{\textwidth}\ \\[20pt] \\ \\
         \copyright 2021 IEEE.  Personal use of this material is  permitted.  Permission from IEEE must be obtained for all other uses, in  any current or future media, including reprinting/republishing this material for advertising or promotional purposes, creating new  collective works, for resale or redistribution to servers or lists, or  reuse of any copyrighted component of this work in other works.
     \end{minipage}}
\maketitle
\begin{abstract}
We present a model predictive control (MPC) framework to solve the constrained nonlinear output regulation problem. 
The main feature of the proposed framework is that the application does \textit{not} require the solution to  classical regulator  (Francis-Byrnes-Isidori) equations or any other offline design procedure. 
In particular, the proposed formulation simply minimizes the predicted output error, possibly with some input regularization.
Instead of using terminal cost/sets or a positive definite stage cost as is standard in MPC theory, we build on the theoretical results by Grimm et al.~\cite{grimm2005model} using a \textit{detectability} notion. 
The proposed formulation is applicable if the constrained nonlinear regulation problem is (strictly) feasible, the plant is incrementally stabilizable and incrementally input-output to state stable (i-IOSS/detectable). 
We show that for minimum phase systems such a design ensures exponential stability of the regulator manifold. 
We also provide a design procedure in case of unstable zero dynamics using an incremental input regularization and a nonresonance condition. 
Inherent robustness properties for the noisy error/output-feedback case are established under simplifying assumptions (e.g. no state constraints).
The theoretical results are illustrated with an example involving offset free tracking with noisy error feedback. 
The paper also contains novel results for MPC without terminal constraints with positive semidefinite input/output stage costs that are of independent interest.

This paper is an extended version of the accepted paper~\cite{koehler2021Reg}, and contains the following additional results: Exponential bounds on the suboptimality index $\alpha_N$ using an observability condition (App.~\ref{app:obs}) and an extension of the derived theory to the noisy error feedback case (App.~\ref{app:error_feedback}). 
\end{abstract}
\begin{IEEEkeywords}
Predictive control for nonlinear systems; Output regulation; Minimum phase; Nonresonance condition; Zero dynamics; Trajectory tracking; Disturbance rejection; Incremental system properties; Constrained control
\end{IEEEkeywords}
 

\section{Introduction}
\subsubsection*{Motivation}
Output regulation is one of the fundamental problems in control theory, combining dynamic trajectory tracking, disturbance rejection and output feedback in a common framework~\cite{isidori1990output,castillo1993nonlinear,pavlov2006uniform,byrnes2012output}, compare also the (robust) servomechanism problem~\cite{davison1976robust}.
The classical solution is to solve the regulator/Francis-Byrnes-Isidori (FBI) equations~\cite{isidori1990output,castillo1993nonlinear}. 
This reduces the problem to the stabilization of a dynamic reachable state and input trajectory, which can, e.g., be studied using the notion of convergent dynamics~\cite{pavlov2006uniform}. 
Alternatively, the plant can be augmented using the \textit{internal model principle}~\cite{francis1976internal}. 
This approach directly lends itself to the error feedback case and can also be applied to nonlinear systems using  an \textit{immersion property}  and an analytical description of the zero-dynamics (cf.~\cite{isidori2013zero}), compare~\cite{byrnes2003limit,marconi2004non,priscoli2009dissipativity}.
Overall, the classical solutions to the nonlinear output regulation problem require a non-trivial offline design procedure, in particular solving a partial differential equation~\cite{isidori1990output}, which is a bottleneck for practical implementation.
In this paper, we present a  model predictive control (MPC)~\cite{rawlings2017model} approach that solves the output regulation problem and does \textit{not} require any offline design such as, e.g., solving the regulator equations.

\subsubsection*{Related work}
Stabilization of a given steady-state using MPC is a largely solved problem, with approaches based on terminal ingredients (terminal set/cost)~\cite{rawlings2017model} or sufficiently long prediction horizons~\cite{grune2012nmpc,reble2012unconstrained,boccia2014stability}. 
Similar methods can be applied to dynamic problems in case a dynamically feasible state and input trajectory is given, compare~\cite{koehler2020quasi,faulwasser2015nonlinear} and~\cite{kohler2018nonlinear}. 
In the output regulation problem, typically only an output reference is known and hence such approaches are not directly applicable. 
The special case of constant exogenous signals is often studied in MPC under the rubric of offset-free tracking or setpoint tracking. 
Existing solutions compute the optimal steady-state offline/online~\cite{magni2001output,limon2018nonlinear}, use velocity formulations~\cite{betti2013robust,magni2005solution} or deploy disturbance observers~\cite{muske2002disturbance,morari2012nonlinear}, to reduce the problem to the stabilization of a given steady-state.
In case of exogenous signals with a known period length $T$, the output regulation problem can be solved by computing the optimal $T$-periodic trajectory offline/online~\cite{falugi2013tracking}/\cite{kohler2020nonlinear}. 
In~\cite{aguilar2014model}, output regulation is studied using a local (polynomial) approximation to the regulator equations and the dynamic programming equations, but no closed-loop guarantees are obtained. 
 %
In summary, the existing approaches reduce the output regulation problem to the stabilization of a given state and input trajectory by explicitly computing the solution to the regulator equations online or offline. 
In the proposed framework, we can drop this requirement since we use an analysis based on the \textit{detectability} notion from~\cite{grimm2005model}, which requires neither a positive definite stage cost nor terminal ingredients. 
Preliminary results in this direction, using a restrictive \textit{nonsingular input cost} condition, can be found in the conference paper~\cite{Kohler2020OutputRegulation}, which partially overlap with the results in Section~\ref{sec:reg}.

\subsubsection*{Contribution}
We present an MPC framework to solve the constrained nonlinear output regulation problem \textit{without} any offline design procedure, such as for example solving the regulator equations. 
We consider the following structural conditions:
a) the regulator equations admit a strictly feasible solution, b) the plant  is incrementally stabilizable, c) the plant is incrementally input-output to state stable (i-IOSS, detectable).  
As a preliminary result, we consider an MPC scheme with an input/output stage cost that minimizes the predicted output $y$ and uses an input regularization w.r.t. the (typically unknown)  feedforward input $\pi_u(w)$, using tools from~\cite{grimm2005model} (Sec.~\ref{sec:reg}).  
Then we present two MPC schemes that ensure exponential stability of the regulator manifold and constrained satisfaction without solving the regulator equations.
First, we show that simply minimizing the predicted output $y$ over a sufficiently long prediction horizon $N$ solves the constrained nonlinear output regulation problem, if the system is minimum phase, i.e., has stable zero dynamics (Sec.~\ref{sec:minPhase}). 
Second, in case of $T$-periodic exogenous signals, we obtain the same theoretical properties for unstable zero dynamics by including an incremental input regularization in the MPC formulation, assuming a nonresonance condition holds (Sec.~\ref{sec:input}). %
We also establish inherent robustness properties of the proposed MPC framework in case of noisy error feedback, given some simplifying assumptions (e.g., no state constraints). 
We demonstrate the applicability and simplicity of the proposed MPC approach using a numerical example.
Overall, the rigorous theoretical guarantees in combination with the fact that no complex design procedure is required for the implementation, makes the proposed MPC framework suitable for practical application.
As a separate contribution, we extend the analysis from~\cite{grimm2005model} for MPC with positive semidefinite stage costs using an observability condition to obtain exponential bounds on the suboptimality index $\alpha$ similar to~\cite[Variant~2]{grune2012nmpc}.

\subsubsection*{Outline} 
Section~\ref{sec:reg} presents the basic output regulation MPC, including a theoretical analysis. 
Section~\ref{sec:minPhase} shows that in case of minimum phase systems,  the stability properties from Section~\ref{sec:reg} remain valid without any input regularization. 
Section~\ref{sec:input} presents an incremental input regularization for periodic exogenous signals, assuming a nonresonance condition holds. 
Section~\ref{sec:linear} discusses the special case of linear systems. 
Section~\ref{sec:num}  demonstrates the applicability of the proposed approach with a numerical example.
Section~\ref{sec:sum} concludes the paper.
In Appendix~\ref{app:obs}, we  extend the analysis from~\cite{grimm2005model} using an observability condition to obtain exponential bounds on the suboptimality index $\alpha_N$ similar to~\cite[Variant~2]{grune2012nmpc}.
In Appendix~\ref{app:error_feedback}, we extend the analysis to  noisy error feedback. 

\subsubsection*{Notation}
For symmetric matrices $A=A^\top\in\mathbb{R}^{n\times n }$  the maximal and minimal eigenvalue are denoted by $\lambda_{\max}(A)$, $\lambda_{\min}(A)$, respectively. 
The quadratic norm with respect to a positive definite matrix $Q=Q^\top$ is denoted by $\|x\|_Q^2:=x^\top Q x$. 
The positive real numbers are denoted by $\mathbb{R}_{\geq 0}=\{r\in\mathbb{R}|~r\geq 0\}$. 
For vectors $x,y$, we abreviate $[x^\top,y^\top]^\top=(x,y)$.

\section{Constrained Output Regulation \\ Amidst Classical results and MPC}
\label{sec:reg}
This section presents the basic output regulation MPC scheme using an input regularization that requires knowledge of the feedforward input $\pi_u(w)$. 
This scheme and the corresponding analysis is the basis for Sections~\ref{sec:minPhase} and \ref{sec:input}, which provide MPC schemes that do \textit{not} require the solution to the regulator equations.
In Section~\ref{sec:reg_1}, we present the output regulation problem, including classical results. 
The proposed MPC scheme is presented in Section~\ref{sec:reg_2} and the theoretical analysis is contained in Section~\ref{sec:reg_3}. 
%
%
\subsection{Output regulation - setup and classical results}
\label{sec:reg_1}
We consider the following nonlinear discrete-time system
\begin{subequations}
\label{eq:sys}
\begin{align}
\label{eq:sys_dyn}
x^p_{t+1}&=f^p(x^p_t,u_t,w_t),\\
\label{eq:sys_exo}
w_{t+1}&=s(w_t),\\
\label{eq:sys_output}
y_t&=h(x^p_t,u_t,w_t), 
\end{align}
\end{subequations}
with $f^p$, $s$, $h$ continuous.
The plant dynamics are described by equation~\eqref{eq:sys_dyn} with the plant state $x^p\in\mathbb{X}^p=\mathbb{R}^{n_p}$ and the control input $u\in\mathbb{U}\subseteq\mathbb{R}^m$. 
The exogenous signal $w\in\mathbb{R}^q$ is generated by the exosystem~\eqref{eq:sys_exo} and represents both disturbances affecting the plant~\eqref{eq:sys_dyn} and desired reference values~\eqref{eq:sys_output}.
Equation~\eqref{eq:sys_output} describes a reference tracking error $y\in\mathbb{R}^p$, which is, e.g., the difference between the plant output and some output reference.
The exosystem state is assumed to be contained in some compact positive invariant set $\mathbb{W}$, i.e., $s:\mathbb{W}\rightarrow\mathbb{W}$. 
The control goal is to achieve output nulling ($\lim_{t\rightarrow\infty}\|y_t\|=0$), while the plant state and control input are supposed to satisfy general nonlinear constraints of the form $(x^p_t,u_t)\in\mathcal{Z}^p\subseteq\mathbb{X}^p\times\mathbb{U}\subseteq\mathbb{R}^{n_p+m}$. 
The classical solution to the output regulation problem is to find functions $\pi_x:\mathbb{W}\rightarrow\mathbb{X}^p$, $\pi_u:\mathbb{W}\rightarrow\mathbb{U}$, which satisfy
\begin{subequations}
\label{eq:reg}
\begin{align}
\label{eq:reg1}
\pi_x(s(w))=&f^p(\pi_x(w),\pi_u(w),w),\\
\label{eq:reg2}
0=&h(\pi_x(w),\pi_u(w),w),
\end{align} 
\end{subequations}
for any $w\in\mathbb{W}$. 
Equations~\eqref{eq:reg} are called the discrete-time regulator equations or Francis-Byrnes-Isidori (FBI) equations.
In~\cite[Thm.~2]{castillo1993nonlinear} it was shown that the regulator equations are locally solvable,  if the zero dynamics of the plant is hyperbolic and the exosystem is neutrally stable.\footnote{%
The zero dynamics is hyperbolic, if the eigenvalues of the Jacobian linearization do not lie on the unit circle.
If the exosystem is neutrally stable, then the eigenvalues of its Jacobian linearization lie on the unit circle.}
 Given a solution to the regulator equations~\eqref{eq:reg}, output regulation can be reduced to the problem of stabilizing the error $e:=x^p-\pi_x(w)$.
\begin{assumption}
\label{ass:regulator} (Regulator equations)
The regulator equations~\eqref{eq:reg} admit a continuous solution $\pi_x$, $\pi_u$ and $(\pi_x(w),\pi_u(w))\in\text{int}(\mathcal{Z}^p)$ for all $w\in\mathbb{W}$. 
\end{assumption}
Uniqueness of $\pi_x$ for a given $\pi_u$ follows from a detectability condition posed latter. 
Uniqueness of both, $\pi_x$ and $\pi_u$, will be ensured in Sections~\ref{sec:minPhase} and~\ref{sec:input} based on an additional minimum phase or nonresonance condition.
\begin{assumption}
\label{ass:increm_stab} (Local incremental exponential stabilizability)  
There exist a continuous incremental Lyapunov function $V_{s}:\mathbb{X}^p\times\mathbb{X}^p\times\mathbb{W}\rightarrow\mathbb{R}_{\geq 0}$,
 a Lipschitz continous feedback $\kappa:\mathbb{X}^p\times\mathbb{X}^p\times\mathbb{W}\times\mathbb{U}\rightarrow\mathbb{U}$ with $V_s(z^p,z^p,w)=0$, $\kappa(z^p,z^p,w,v)=v$, and constants $c_{s,l}$, $c_{s,u}$, $\delta_{loc}>0$, $\rho_s\in(0,1)$, such that for any $(z^p,v,w)\in\mathcal{Z}^p\times\mathbb{W}$ and all $x^p\in\mathbb{X}^p$ satisfying $V_{s}(x^p,z^p,w)\leq \delta_{loc}$, the following inequalities hold:
\begin{subequations}
\label{eq:increm}
\begin{align}
\label{eq:increm_a}
&V_{s}(f^p(x^p,\kappa(x^p,z^p,w,v),w),f^p(z^p,v,w),s(w))\nonumber\\
\leq &\rho_s V_{s}(x^p,z^p,w),\\
\label{eq:increm_b}
&c_{s,l}\|x^p-z^p\|^2\leq  V_{s}(x^p,z^p)\leq c_{s,u}\|x^p-z^p\|^2.
\end{align}
\end{subequations}
\end{assumption} 
This assumption implies that we can ensure convergence to any reachable target trajectory $(z^p,v,w)$ using the feedback $\kappa$.
Similar incremental stabilizability conditions have been considered in~\cite{kohler2018nonlinear} to study the stability of trajectory tracking MPC schemes without terminal ingredients.
Corresponding functions $V_s,\kappa$ can be designed offline  using control contraction metrics~\cite{manchester2017control} or a quasi-LPV design~\cite{koehler2020quasi,koelewijn2019linear}.
\begin{proposition}
\label{prop:output_regulation}
Let Assumptions~\ref{ass:regulator}--\ref{ass:increm_stab} hold. 
There exists a constant  $\delta>0$, such that for any initial condition $x^p_0$ satisfying $V_{s}(x^p_0,\pi_x(w_0),w_0)\leq \delta$, the feedback $u=\kappa(x^p,\pi_x(w),w,\pi_u(w))$ ensures (uniform) exponential stability of the origin $e=0$ and constraint satisfacton, i.e., $(x^p_t,u_t)\in\mathcal{Z}^p$ for all $t\geq 0$. 
\end{proposition}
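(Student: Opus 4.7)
The plan is to exhibit $(\pi_x(w_t), \pi_u(w_t))$ as a nominal feasible trajectory of the plant and then apply Assumption~\ref{ass:increm_stab} along this trajectory. Combining the regulator equation~\eqref{eq:reg1} with the exosystem recursion $w_{t+1}=s(w_t)$ yields $\pi_x(w_{t+1})=f^p(\pi_x(w_t),\pi_u(w_t),w_t)$, so $(z_t^p,v_t):=(\pi_x(w_t),\pi_u(w_t))$ is an admissible reference with $(z_t^p,v_t)\in\text{int}(\mathcal{Z}^p)$ for all $w_t\in\mathbb{W}$ by Assumption~\ref{ass:regulator}. The closed-loop input $u_t=\kappa(x_t^p,\pi_x(w_t),w_t,\pi_u(w_t))$ is then exactly the incremental tracking feedback that Assumption~\ref{ass:increm_stab} provides for this reference.

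Next, I would pick $\delta\leq\delta_{\mathrm{loc}}$ and argue by induction. Starting from $V_s(x_0^p,\pi_x(w_0),w_0)\leq\delta$, inequality~\eqref{eq:increm_a} gives $V_s(x_{t+1}^p,\pi_x(w_{t+1}),w_{t+1})\leq\rho_s V_s(x_t^p,\pi_x(w_t),w_t)$, so $V_s(x_t^p,\pi_x(w_t),w_t)\leq\rho_s^t\delta\leq\delta_{\mathrm{loc}}$ holds for every $t\geq 0$, keeping the contraction inequality applicable throughout. The sandwich bound~\eqref{eq:increm_b} then translates this into $\|x_t^p-\pi_x(w_t)\|^2\leq\rho_s^t\delta/c_{s,l}$, which is exactly the (uniform) exponential stability of $e=x^p-\pi_x(w)=0$ claimed.

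For constraint satisfaction I would use two ingredients. First, compactness of $\mathbb{W}$ together with continuity of $\pi_x,\pi_u$ gives a uniform $\eta>0$ such that the $\eta$-neighborhood of $(\pi_x(w),\pi_u(w))$ lies in $\mathcal{Z}^p$ for every $w\in\mathbb{W}$. Second, Lipschitz continuity of $\kappa$ combined with the identity $\kappa(z^p,z^p,w,v)=v$ yields a bound of the form $\|u_t-\pi_u(w_t)\|\leq L_\kappa\|x_t^p-\pi_x(w_t)\|$. Shrinking $\delta$ if necessary so that $\sqrt{\delta/c_{s,l}}\cdot\max\{1,L_\kappa\}\leq\eta$, both the state and input deviations stay within the admissible tube, so $(x_t^p,u_t)\in\mathcal{Z}^p$ for all $t\geq 0$.

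The step I expect to be the most delicate is the \emph{uniform} choice of $\delta$: both validity of the local inequality~\eqref{eq:increm_a} and the constraint-tube argument must hold for every initial exogenous state $w_0\in\mathbb{W}$ simultaneously. Compactness of $\mathbb{W}$ together with continuity of $\pi_x,\pi_u,\kappa$ makes this book-keeping routine, but it is the one place where the compactness of $\mathbb{W}$ genuinely enters, and without it the result would only be pointwise in $w_0$ rather than uniform.
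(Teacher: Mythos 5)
Your proposal is correct and follows essentially the same route as the paper: identify $(\pi_x(w_t),\pi_u(w_t))$ as the reference trajectory via~\eqref{eq:reg1}, contract $V_s$ along it using~\eqref{eq:increm_a}--\eqref{eq:increm_b}, and shrink $\delta$ via continuity (and compactness of $\mathbb{W}$) to keep $(x^p_t,u_t)$ inside $\mathcal{Z}^p$. The only cosmetic difference is that for the \emph{uniform} exponential stability claim you should state the decay as $\|e_t\|^2\leq (c_{s,u}/c_{s,l})\rho_s^t\|e_0\|^2$ (using the upper bound in~\eqref{eq:increm_b} at $t=0$) rather than in terms of $\delta$, which your sandwich argument already yields.
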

\begin{proof}
Consider $(z^p_t,u_t)=(\pi_x(w_t),\pi_u(w_t))$ and note that $e=x^p-\pi_x(w)=x^p-z^p$.
Due to Ass.~\ref{ass:regulator}, we know that  $(z^p_t,v_t)=(\pi_x(w_t),\pi_u(w_t))\in\text{int}(\mathcal{Z}^p)$. 
For $\delta\leq \delta_{loc}$, applying Inequality~\eqref{eq:increm_a} repeatedly ensures $c_{s,l}\|e_t\|^2\leq V_{s}(x^p_t,z_p^t,w_t)\leq \rho^t V_{s}(x^p_0,z^p_0,w_0)\leq c_{s,u}\rho^t\|e_0\|^2$ and thus exponential stability of $e=0$.  
Given that $(z^p_t,v_t)\in\text{int}(\mathcal{Z}^p)$ (Ass.~\ref{ass:regulator}), there exists a constant $\epsilon_s>0$, such that $(x^p_t,u_t)\in\mathcal{Z}^p$ if  $\|(x_t^p-z^p_t,u_t-v_t)\|\leq \epsilon_s$. 
Due to continuity of $\kappa$ and $V_{s}$,  there exists a constant $\delta\in(0,\delta_{loc}]$, such that $V_{s}(x^p,z^p,w)\leq \delta$ implies $(x^p,\kappa(x^p,z^p,w,v))\in\mathcal{Z}^p$, which finishes the proof.  
\end{proof}

Essentially, this result is similar to the output regulation results in~\cite{pavlov2006uniform}, which use $\pi_u$ as a feedforward input in combination with convergent dynamics, which is closely related to incrementally stable dynamics, compare~\cite{tran2019convergence}.
We point out that Proposition~\ref{prop:output_regulation} only provides a \textit{local} solution to the constrained output regulation problem and requires knowledge of $\pi_x$, $\pi_u$, the solution to the regulator equations~\eqref{eq:reg}.
Both of these restrictions will be relaxed in the proposed MPC approach.
%

\subsection{Output regulation MPC}
\label{sec:reg_2}
In the following, we present the proposed output regulation MPC scheme.  
To simplify the notation, we introduce the overall state by $x_t:=(x^p_t,w_t)\in\mathbb{X}=\mathbb{R}^{n_p}\times\mathbb{W}\subseteq\mathbb{R}^n$ consisting of the plant state $x^p$ and the state of the exosystem $w$.  
The constraints and dynamics for this overall system are given by $\mathcal{Z}=\{(x^p,w,u)\in\mathbb{X}\times\mathbb{U}|~(x^p,u)\in\mathcal{Z}^p\}$ and  $f(x,u):=(f^p(x^p,u,w),s(w))$.
We consider the following stage cost 
\begin{align}
\label{eq:ell_reg}
\ell(x,u):=\|h(x^p,u,w)\|_Q^2+\|u-\pi_u(w)\|_R^2, 
\end{align}
 which penalizes the  tracking error $y$ and contains an input regularization w.r.t. the feedforward input $\pi_u(w)$ using positive definite matrices $Q=Q^\top\in\mathbb{R}^{p\times p}$, $R=R^\top\in\mathbb{R}^{m\times m}$. 
 The stage cost $\ell$~\eqref{eq:ell_reg} requires the knowledge of the input feedforward $\pi_u(w)$. 
This may be rather restrictive and is only used for some preliminary analysis in Theorem~\ref{thm:MPC} based on existing methods from~\cite{grimm2005model}.
Sections~\ref{sec:minPhase} and \ref{sec:input} will be devoted to dropping this requirement by assuming that the system is minimum phase or suitable changing the MPC formulation in~\eqref{eq:MPC}.
The MPC optimization problem is given as
\begin{subequations}
\label{eq:MPC}
\begin{align}
\label{eq:MPC_1}
V_N(x_t):=\inf_{u_{\cdot|t}\in\mathbb{U}^N}& J_N(x_{\cdot|t},u_{\cdot|t}):=\sum_{k=0}^{N-1} \ell(x_{k|t},u_{k|t}),\\
\text{s.t. }&x^p_{0|t}=x_t^p,~w_{0|t}=w_t,\\
&w_{k+1|t}=s(w_{k|t}),\\
&x^p_{k+1|t}=f^p(x^p_{k|t},u_{k|t},w_{k|t}),\\
&(x^p_{k|t},u_{k|t})\in\mathcal{Z}^p,\\
&k=0,\dots,N-1,\nonumber
\end{align}
\end{subequations}
with a prediction horizon $N\in\mathbb{N}$. 
For simplicity we assume $\mathbb{U}$ compact. 
The solution\footnote{%
A minimizer exists, since $f^p,\ell,h$ are continuous and $\mathbb{U}$ is compact. 
If the minimizer is not unique, an arbitrary minimizer can be chosen.}
 to this optimization problem is the value function $V_N$ and optimal state and input trajectories $(x_{\cdot|t}^*,u_{\cdot|t}^*)$ with $x_{k|t}=(x^p_{k|t},w_{k|t})$.
The resulting closed-loop system is given by
\begin{align}
\label{eq:close}
u_t=u^*_{0|t},\quad x_{t+1}=f(x_t,u^*_{0|t})=x^*_{1|t},~t\geq 0. 
\end{align}
In order to solve the optimization problem~\eqref{eq:MPC} with the stage cost~\eqref{eq:ell_reg}, we need to be able to predict both the plant state $x^p$ and the exosystem state $w$, and thus we initially assume that the overall state $x=(x^p,w)$ can be measured and an accurate prediction model is available.
The extension to the classical error feedback setup~\cite[Ch.~8]{isidori2013nonlinear}, where only $y$ can be measured will be discussed in Remark~\ref{rk:error_feedback}.
We note that the computational complexity of the nonlinear MPC problem~\eqref{eq:MPC} and the optimization problems presented in Sections~\ref{sec:minPhase} and \ref{sec:input} are comparable to a standard nonlinear MPC without terminal ingredients~\cite{grune2012nmpc}, since $w_{\cdot|t}$ can be predicted independent of $u_{\cdot|t}$ and no additional decision variables or constraints are included in the optimization problem.
 
\subsection{Theoretical analysis}
\label{sec:reg_3}
We first extend the general analysis in~\cite{grimm2005model} using stabilizability and detectability conditions in Theorem~\ref{thm:MPC}. 
In Corollary~\ref{corol:main}, we show that the MPC scheme solves the output regulation problem using incremental properties of the plant.
\subsubsection{MPC  - semidefinite costs and set stabilization}
We consider the following continuous  state measure to be minimized
\begin{align}
\label{eq:sigma}
\sigma(x):=\|x^p-\pi_x(w)\|^2=\|e\|^2.
\end{align} 
Achieving $\sigma(x)=0$  is equivalent to driving the overall state to the regulator manifold $\mathcal{A}:=\{x\in\mathbb{X}|~x^p=\pi_x(w)\}$. 
We consider the following stabilizability and detectability conditions, similar to~\cite[SA3/4]{grimm2005model}. 
\begin{assumption}
\label{ass:stab}
(Local cost stabilizability)
There exist constants $\gamma_s,\delta_s>0$, such that for any $x\in\mathbb{X}_{\delta}:=\{x\in\mathbb{X}|~\sigma(x)\leq \delta_s\}$, Problem~\eqref{eq:MPC} is feasible and the value function satisfies 
\begin{align}
\label{eq:stab_grimm}
V_N(x)\leq \gamma_s \sigma(x),\quad \forall N\in\mathbb{N}.
\end{align}
\end{assumption} 
Compared to~\cite{grimm2005model}, Assumption~\ref{ass:stab} only assumes \textit{local} stabilizability, which is significantly easier to verify and also applicable to unstable systems and/or in the presence of state constraints, which are not control invariant.  
In case $\sigma(x)=\|x\|^2$ (no exosystem), Assumption~\ref{ass:stab} corresponds to the local \textit{exponential cost controllability} condition in~\cite[Ass.~1]{boccia2014stability}, which is less restrictive than the global condition used in~\cite[Ass.~3.5]{grune2012nmpc}. Similar local bounds on the value function are used in tracking MPC  with and without terminal constraints in~\cite[Ass.~2]{kohler2020nonlinear} and \cite[Prop.~2]{kohler2018nonlinear}, respectively.
\begin{assumption}
\label{ass:detect}
(Cost detectability) There exists a function $W:\mathbb{X}\rightarrow\mathbb{R}_{\geq 0}$ and constants $\gamma_o$, $\epsilon_o>0$, such that 
\begin{subequations}
\label{eq:detect_grimm}
\begin{align}
\label{eq:detect_grimm_1}
W(x)\leq &\gamma_o \sigma(x),\\
\label{eq:detect_grimm_2}
W(f(x,u))-W(x)\leq &-\epsilon_o\sigma(x)+\ell(x,u),  
\end{align}
\end{subequations}
for any $(x,u)\in\mathcal{Z}$.  
\end{assumption} 
\begin{remark}
\label{rk:dissip_detect}
Assumption~\ref{ass:detect} is a special case of the strict dissipativity condition typically used in economic MPC~\cite{muller2021dissipativity}. 
The main difference is that $W$ satisfies the upper bound~\eqref{eq:detect_grimm_1}, while in economic MPC only boundedness (from below) of $W$ is assumed, compare~\cite{hoger2019relation}. 
This small technical difference is the main reason that the analysis of economic MPC schemes and the resulting performance bounds are significantly more conservative, compare~\cite{grune2013economic}. 
Note that Assumption~\ref{ass:detect} is trivially satisfied with $W=0$ if $\ell(x,u)\geq \sigma(x)$, which is the standard case studied in the MPC literature, compare e.g.~\cite{grune2012nmpc,kohler2018nonlinear}. 
\end{remark}
The following theorem combines ideas from~\cite[Thm.~1--2]{grimm2005model} to deal with positive semidefinite  stage costs $\ell$ using a detectability condition (Ass.~\ref{ass:detect}) with the methods in~\cite[Thm.~1--2]{kohler2018nonlinear} to consider less restrictive \textit{local} stabilizability conditions (Ass.~\ref{ass:stab}). 
\begin{theorem}
\label{thm:MPC}
Let Assumptions~\ref{ass:stab}--\ref{ass:detect} hold. 
For any constant $\overline{Y}>0$,  there exist constants $N_{\overline{Y}}$, $\gamma_{\overline{Y}}>0$, such that for $N>N_{\overline{Y}}$ and  initial condition $x_0\in \mathbb{X}_{\overline{Y}}:=\{x\in\mathbb{X}|~V_N(x)+W(x)\leq \overline{Y}\}$, the MPC problem~\eqref{eq:MPC} is recursively feasible, the constraints are satisfied, and the function $Y_N:=V_N+W$ satisfies 
\begin{subequations}
\label{eq:Lyap_MPC}
\begin{align}
\label{eq:Lyap_1}
\epsilon_o\sigma(x_t)\leq Y_N(x_t) \leq &\gamma_{\overline{Y}}\sigma(x_t),\\
\label{eq:Lyap_2}
Y_N(f(x_t,u_t))-Y_N(x_t)\leq& -\alpha_N\epsilon_o\sigma(x_t),
\end{align}
for the resulting closed loop with
\begin{align}
\label{eq:def_alpha}
\alpha_N:=1-\dfrac{\gamma_s\gamma_{\overline{Y}}}{\epsilon_o^2(N-1)}>0.
\end{align}
Furthermore, the closed loop satisfies the following  transient performance bound   
\begin{align}
\label{eq:perform}
\alpha_N\sum_{t=0}^{T-1} \ell(x_t,u_t)\leq Y_N(x_0)\leq  \overline{Y},\quad \forall T\in\mathbb{N}. 
\end{align}
\end{subequations}
\end{theorem}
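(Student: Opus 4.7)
The plan is to combine the detectability-based analysis for positive-semidefinite costs from~\cite{grimm2005model} with a local-stabilizability tail argument in the spirit of~\cite{kohler2018nonlinear}, establishing the sandwich bound on $Y_N$ and the closed-loop decrease simultaneously by induction on $t$, with a pivot-and-tail candidate construction handling recursive feasibility.

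For the lower bound, I telescope Assumption~\ref{ass:detect} along the optimal open-loop trajectory $(x_{\cdot|t}^*, u_{\cdot|t}^*)$: using $W\ge 0$ yields $\epsilon_o \sum_{k=0}^{N-1}\sigma(x_{k|t}^*) \le Y_N(x_t)$, and retaining only the $k=0$ term gives $\epsilon_o \sigma(x_t) \le Y_N(x_t)$. For the upper bound, I split cases: if $\sigma(x_t) \le \delta_s$, Assumption~\ref{ass:stab} produces $V_N(x_t) \le \gamma_s \sigma(x_t)$; otherwise $x_t \in \mathbb{X}_{\overline{Y}}$ forces $V_N(x_t) \le \overline{Y} \le (\overline{Y}/\delta_s)\sigma(x_t)$. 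Combined with~\eqref{eq:detect_grimm_1}, these give $Y_N(x_t) \le \gamma_{\overline{Y}}\sigma(x_t)$ with $\gamma_{\overline{Y}} := \gamma_o + \max\{\gamma_s,\overline{Y}/\delta_s\}$.

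The main obstacle is the inductive step, which bundles feasibility with the decrease through a delicate coordination of constants. Applying the pigeonhole principle to the telescoped sum yields an index $k^* \in \{1,\dots,N-1\}$ with $\sigma(x_{k^*|t}^*) \le Y_N(x_t)/(\epsilon_o(N-1)) \le \gamma_{\overline{Y}}\sigma(x_t)/(\epsilon_o(N-1))$. Choosing $N_{\overline{Y}}$ large enough that both $\overline{Y}/(\epsilon_o(N-1)) \le \delta_s$ (so Assumption~\ref{ass:stab} applies at $x_{k^*|t}^*$) and $\alpha_N > 0$, I construct a candidate input at $x_{t+1}$ by concatenating the shifted optimal prefix that reaches $x_{k^*|t}^*$ with a locally stabilizing tail whose total cost is at most $\gamma_s \sigma(x_{k^*|t}^*)$. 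This yields $V_N(x_{t+1}) \le V_N(x_t) - \ell(x_t,u_t) + \gamma_s \sigma(x_{k^*|t}^*)$. Adding~\eqref{eq:detect_grimm_2} at $(x_t,u_t)$ and substituting the pigeonhole bound produces $Y_N(x_{t+1}) - Y_N(x_t) \le -\epsilon_o \sigma(x_t)\bigl(1 - \gamma_s\gamma_{\overline{Y}}/(\epsilon_o^2(N-1))\bigr) = -\alpha_N \epsilon_o \sigma(x_t)$, and monotonicity of $Y_N$ then forces $x_{t+1}\in\mathbb{X}_{\overline{Y}}$, closing the induction and yielding recursive feasibility together with constraint satisfaction.

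For the transient performance bound, I rewrite the candidate inequality as $\ell(x_t,u_t) \le V_N(x_t) - V_N(x_{t+1}) + (1-\alpha_N)\epsilon_o \sigma(x_t)$, and substitute $\epsilon_o \sigma(x_t) \le W(x_t) - W(x_{t+1}) + \ell(x_t,u_t)$ from Assumption~\ref{ass:detect}. Rearranging yields $\alpha_N \ell(x_t,u_t) \le Y_N(x_t) - Y_N(x_{t+1}) + \alpha_N(W(x_{t+1}) - W(x_t))$; summing over $t$ telescopes both terms and, using $V_N,W\ge 0$ together with $1-\alpha_N \in [0,1)$, gives $\alpha_N \sum_{t=0}^{T-1}\ell(x_t,u_t) \le V_N(x_0) + (1-\alpha_N) W(x_0) \le Y_N(x_0) \le \overline{Y}$.
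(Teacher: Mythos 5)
Your proposal is correct and follows essentially the same route as the paper's proof: the one-step/telescoped detectability bound for the lower estimate, the case distinction on $\sigma(x_t)\lessgtr\delta_s$ for the upper estimate, the pigeonhole (turnpike) argument yielding a pivot index where local stabilizability applies, the prefix-plus-tail candidate giving the value-function decrease, and the same rearrangement with the $\alpha_N$-weighted $W$-telescope for the performance bound. The only differences are cosmetic (a slightly larger but equally valid choice of $\gamma_{\overline{Y}}$ and deriving the lower bound by full telescoping rather than a single detectability step).
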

\begin{proof}
\begin{subequations}
\label{eq:Lyap_MPC_proof}
We first show  inequalities~\eqref{eq:Lyap_1}--\eqref{eq:Lyap_2} for all $x_t\in\mathbb{X}_{\overline{Y}}$, and then the performance bound~\eqref{eq:perform}.
 Abbreviate $\ell_{k}=\ell(x^*_{k|t},u^*_{k|t})$. \\
\textbf{Part I. }
The lower bound in~\eqref{eq:Lyap_1} follows with $\ell_0\geq 0$, and
\begin{align*}
&Y_N(x_t)\stackrel{\eqref{eq:MPC_1}}\geq \ell_{0}+W(x_t)
\stackrel{\eqref{eq:detect_grimm_2}}{\geq}\epsilon_o\sigma(x_t)+W(x^*_{1|t})\geq \epsilon_o\sigma(x_t).  
\end{align*}
For any $x_t\in\mathbb{X}_{\delta}$, we directly obtain the bound $Y_N(x_t)\leq (\gamma_s+\gamma_o)\sigma(x_t)$ using~\eqref{eq:stab_grimm}, \eqref{eq:detect_grimm_1}. 
The upper bound in~\eqref{eq:Lyap_1} holds with  $\gamma_{\overline{Y}}:=\max\left\{\gamma_s+\gamma_o,\frac{\overline{Y}}{\delta_s}\right\}$ using this bound,  $x_t\in\mathbb{X}_{\overline{Y}}$ and a case distinction whether or not $x_t\in\mathbb{X}_{\delta}$, similar to~\cite[Thm.~2]{kohler2018nonlinear}, \cite{boccia2014stability}. \\
\textbf{Part II. }
The detectability condition (Ass.~\ref{ass:detect}) implies
\begin{align}
\label{eq:rotated}
&W(x^*_{N|t})-W(x_t)=\sum_{k=0}^{N-1}W(x^*_{k+1|t})-W(x^*_{k|t})\nonumber\\
\stackrel{\eqref{eq:detect_grimm_2}}{\leq}& \sum_{k=0}^{N-1}-\epsilon_o \sigma(x^*_{k|t})+\sum_{k=0}^{N-1}\ell_{k}.
\end{align}
Using $W(x^*_{N|t})\geq 0$, and $x_t\in\mathbb{X}_{\overline{Y}}$, we arrive at
\begin{align}
\label{eq:rotated_2}
\epsilon_o\sum_{k=0}^{N-1} \sigma(x^*_{k|t})\stackrel{\eqref{eq:rotated}}{\leq} Y_N(x_t)\stackrel{\eqref{eq:Lyap_1}}{\leq} \min\{\overline{Y},\gamma_{\overline{Y}} \sigma(x_t)\}.
\end{align}
Thus, there exists a $k_x\in\{1,\dots,N-1\}$, such that
\begin{align}
\label{eq:turnpike}
\sigma(x^*_{k_x|t})\leq \dfrac{\min\{\overline{Y},\gamma_{\overline{Y}}\sigma (x_t)\}}{\epsilon_o(N-1)}.
\end{align}
Given $N\geq N_0:=1+\frac{\overline{Y}}{\delta_s\epsilon_o}$, this implies $x^*_{k_x|t}\in\mathbb{X}_\delta$. 
Thus, Assumption~\ref{ass:stab} ensures that starting at $x_{k_x|t}^*$ there exists a feasible state and input trajectory satisfying the bound~\eqref{eq:stab_grimm}, which implies
\begin{align}
\label{eq:Value_dec}
&V_N(x_{t+1})+\ell_0\leq \sum_{k=0} ^{k_x-1}\ell_{k}+V_{N-k_x+1}(x^*_{k_x|t})\\
\stackrel{\eqref{eq:stab_grimm}}{\leq} &V_N(x_t)+ \gamma_s\sigma(x^*_{k_x|t})\stackrel{\eqref{eq:turnpike}}{\leq} V_N(x_t)+\frac{\gamma_s\gamma_{\overline{Y}}}{\epsilon_o(N-1)}\sigma(x_t).\nonumber
\end{align}
Combining \eqref{eq:Value_dec} and \eqref{eq:detect_grimm_2}, the function $Y_N$ satisfies~\eqref{eq:Lyap_2}. 
Furthermore,  $\alpha_N>0$ follows from~\eqref{eq:def_alpha},  using $N> N_1:=1+\gamma_s\gamma_{\overline{Y}}/\epsilon_o^2$.
All the arguments hold with $N> N_{\overline{Y}}:=\max\{N_0,N_1\}$. 
In addition, Inequality~\eqref{eq:Lyap_2} ensures $Y_N$ is nonincreasing and thus $x_t\in\mathbb{X}_{\overline{Y}}$ for all $t\geq 0$. \\
\textbf{Part III. }
Using $Y_N=V_N+W \geq V_N\geq 0$, $\alpha_N\leq 1$ and the following inequality in a telescopic sum
\begin{align}
\label{eq:perform_intermediate}
&Y_N(x_{t+1})-Y_N(x_t)-\alpha_N(W(x_{t+1})-W(x_t))\nonumber\\
=&(1-\alpha_N)(Y_N(x_{t+1})-Y_N(x_t))+\alpha_N(V_N(x_{t+1})-V_N(x_t)) \nonumber\\
&\stackrel{\eqref{eq:Lyap_2},\eqref{eq:Value_dec}}{\leq}-\alpha_N\ell(x_t,u_t), 
\end{align}
implies the performance bound~\eqref{eq:perform}. 
\end{subequations}
 \end{proof}
Inequalities~\eqref{eq:Lyap_1}--\eqref{eq:Lyap_2} directly imply $\lim_{t\rightarrow\infty}\sigma(x_t)=0$ and hence asymptotic convergence. 
Compared to~\cite[Thm.~1]{grimm2005model}, Theorem~\ref{thm:MPC} considers a less restrictive (local) stabilizability condition to show stability and provides a performance bound~\eqref{eq:perform} similar to the suboptimality estimates usually obtained in MPC without terminal constraints (with $\ell$ positive definite), compare e.g.~\cite{grune2012nmpc,reble2012unconstrained}. 
The intermediate bound~\eqref{eq:perform_intermediate} and the definition of $\alpha_N$ in \eqref{eq:def_alpha} imply that as $N\rightarrow\infty$, we recover $\sum_{t=0}^{\infty}\ell(x_t,u_t)\leq V_\infty(x)$, similar to standard results with $\ell$ positive definite~\cite{grune2012nmpc,reble2012unconstrained}. 
 
\begin{remark}
\label{rk:observable} (Improved bounds using observability)
The (quantitative) bounds in Theorem~\ref{thm:MPC} for $\alpha_N$, $N_{\overline{Y}}$ can be improved if the detectability condition (Ass.~\ref{ass:detect}) is strengend to a finite-time observability condition w.r.t. the stage cost $\ell$. 
In particular, the suboptimality bound $1-\alpha_N$~\eqref{eq:def_alpha} in Theorem~\ref{thm:MPC} decreases with $\gamma^2/N$ {(consider $\gamma\approx \gamma_s\approx \gamma_{\overline{Y}}$ in this discussion to keep in line with the notation in~\cite{grune2012nmpc})}, which is comparable to the bound in~\cite[Variant 1]{grune2012nmpc} and \cite{grimm2005model}. 
Under the assumption that $\ell$ is positive definite (Ass.~\ref{ass:detect} holds with $W=0$) the derivations in~\cite{tuna2006shorter}, \cite[Variant 2]{grune2012nmpc} and \cite[Thm.~2]{kohler2018nonlinear}, provide bounds where $1-\alpha_N$  decreases exponentially in $N$.
Conceptually similar bounds can also be recovered for the present setting with the detectability condition (Ass.~\ref{ass:detect}), given an additional finite-time observability condition. 
The corresponding theoretical details can be found in Appendix~\ref{app:obs}.
\end{remark}

\subsubsection{Incremental system properties}
In the following, we derive sufficient conditions for Assumptions~\ref{ass:stab}--\ref{ass:detect}, assuming the regulator equations admit a solution (Ass.~\ref{ass:regulator}).
\begin{proposition}
\label{prop:stab}
Let Assumptions~\ref{ass:regulator} and \ref{ass:increm_stab} hold and suppose that $h$ is locally Lipschitz continuous. 
Then Assumption~\ref{ass:stab} holds.  
\end{proposition}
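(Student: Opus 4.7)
The plan is to construct an explicitly feasible candidate input for Problem~\eqref{eq:MPC} by using the incremental feedback $\kappa$ from Assumption~\ref{ass:increm_stab} to track the regulator trajectory $(\pi_x(w_t),\pi_u(w_t))$ supplied by Assumption~\ref{ass:regulator}, and then to bound the resulting predicted cost by $\gamma_s\sigma(x)$. Concretely, given $x_0=(x_0^p,w_0)$ with $\sigma(x_0)\leq\delta_s$ (to be chosen), I set $z^p_k:=\pi_x(w_k)$, $v_k:=\pi_u(w_k)$, and define the candidate input $u_k:=\kappa(x^p_k,z^p_k,w_k,v_k)$. Because $\kappa(z^p,z^p,w,v)=v$ and $\kappa$ is locally Lipschitz, there is a local constant $L_\kappa$ such that $\|u_k-v_k\|\leq L_\kappa\|e_k\|$ with $e_k:=x^p_k-z^p_k$.

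The heart of the argument is the exponential decay along this candidate. Choosing $\delta_s$ small enough that $V_s(x^p_0,z^p_0,w_0)\leq c_{s,u}\delta_s\leq\delta_{\mathrm{loc}}$, Inequality~\eqref{eq:increm_a} together with~\eqref{eq:increm_b} gives, by induction, $\|e_k\|^2\leq\tfrac{c_{s,u}}{c_{s,l}}\rho_s^k\|e_0\|^2$ for all $k\geq 0$, provided each $(z^p_k,v_k)\in\mathcal{Z}^p$ (Assumption~\ref{ass:regulator}) and the state stays in the local region, which is ensured by monotonicity of $V_s$ along the candidate. To bound $\ell$, I use the regulator equation $h(\pi_x(w),\pi_u(w),w)=0$ together with local Lipschitz continuity of $h$: there exists $L_h$ such that
\begin{equation*}
\|h(x^p_k,u_k,w_k)\|\leq L_h\bigl(\|e_k\|+\|u_k-v_k\|\bigr)\leq L_h(1+L_\kappa)\|e_k\|.
\end{equation*}
Combining with the input bound yields $\ell(x_k,u_k)\leq c_\ell\|e_k\|^2$ for a constant $c_\ell$ depending only on $L_h,L_\kappa,Q,R$. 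Summing the geometric series,
\begin{equation*}
J_N(x_{\cdot},u_{\cdot})\leq c_\ell\sum_{k=0}^{N-1}\|e_k\|^2\leq \frac{c_\ell c_{s,u}}{c_{s,l}(1-\rho_s)}\|e_0\|^2=:\gamma_s\sigma(x_0),
\end{equation*}
which is independent of $N$ and therefore gives $V_N(x_0)\leq\gamma_s\sigma(x_0)$ for all $N\in\mathbb{N}$.

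The one technical obstacle is ensuring feasibility of the candidate, i.e.\ $(x^p_k,u_k)\in\mathcal{Z}^p$ for all $k<N$. Since $(\pi_x(w_k),\pi_u(w_k))\in\mathrm{int}(\mathcal{Z}^p)$ uniformly over the compact set $\mathbb{W}$ (Assumption~\ref{ass:regulator} plus continuity of $\pi_x,\pi_u$), there is a uniform margin $\epsilon_s>0$ such that any perturbation of magnitude $\epsilon_s$ stays in $\mathcal{Z}^p$. The bound $\|(e_k,u_k-v_k)\|\leq\sqrt{1+L_\kappa^2}\sqrt{c_{s,u}/c_{s,l}}\,\|e_0\|$ together with exponential decay shows that choosing $\delta_s$ sufficiently small (smaller than the minimum of $\delta_{\mathrm{loc}}/c_{s,u}$ and a bound derived from $\epsilon_s$, the Lipschitz constants, and the ratios $c_{s,u}/c_{s,l}$) makes every predicted pair admissible for all $N$. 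This fixes $\gamma_s,\delta_s>0$ and establishes Assumption~\ref{ass:stab}.
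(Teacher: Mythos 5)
Your proposal is correct and follows essentially the same route as the paper: construct the candidate input $u_k=\kappa(x^p_k,\pi_x(w_k),w_k,\pi_u(w_k))$, use the incremental Lyapunov function to get geometric decay of $\|e_k\|$, bound the stage cost via Lipschitz continuity of $h$ and $\kappa$ together with $h(\pi_x(w),\pi_u(w),w)=0$, and sum the geometric series, while securing feasibility from the strict-interior condition in Assumption~\ref{ass:regulator}. The paper merely outsources the feasibility step to Proposition~\ref{prop:output_regulation} rather than re-deriving it as you do, which is an organizational rather than a mathematical difference.
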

\begin{proof}
In Proposition~\ref{prop:output_regulation} it was already shown that $u=\kappa(x^p,\pi_x(w),w,\pi_u(w))$ is a feasible control input for $V_{s}(x^p_0,z^p_0,w_0)\leq \delta$. Thus the optimization problem~\eqref{eq:MPC} is feasible with this candidate input for all $x_t\in\mathbb{X}_\delta$ with $\delta_s:=\delta/c_{s,l}$. 
It remains to show that the bound~\eqref{eq:stab_grimm} holds. 
Similar to~\cite[Prop.~2]{kohler2018nonlinear}, Inequalities~\eqref{eq:increm} and $\kappa$ being Lipschitz imply that there exists a constant $c>0$ such that 
\begin{align*}
\|(x^p_{k|t},u_{k|t})-(\pi_x(w_{k|t}),\pi_u(w_{k|t}))\|^2\leq c\rho_s^{k} \|x^p_t-\pi_x(w_t)\|^2.
\end{align*} 
Lipschitz continuity of $h$ with some Lipschitz constant $L_h$  and $h(\pi_x(w_{k|t}),\pi_u(w_{k|t}),w_{k|t})\stackrel{\eqref{eq:reg2}}{=}0$ imply
\begin{align*}
&V_N(x_t)\leq \sum_{k=0}^{N-1}\|h(x^p_{k|t},u_{k|t},w_{k|t})\|_{Q}^2+\|u_{k|t}-\pi_u(w_{k|t})\|_R^2\\
\stackrel{}{\leq}&  \max\{L_h^2\lambda_{\max}(Q),\lambda_{\max}(R)\}c \|x^p_t-\pi_x(w_t)\|^2\sum_{k=0}^{N-1}\rho_s^{k}\\
\leq &\underbrace{\dfrac{\max\{L_h^2\lambda_{\max}(Q),\lambda_{\max}(R)\}c }{1-\rho_s}}_{=:\gamma_s} \underbrace{\|x^p_t-\pi_x(w_t)\|^2}_{=\sigma(x_t)}.\nonumber &\qedhere
\end{align*} 
\end{proof}
Note that Ass.~\ref{ass:increm_stab} could be relaxed to only hold for $(z^p,v)=(\pi_x(w),\pi_u(w))$, which is less restrictive (compare convergent dynamics in~\cite{pavlov2006uniform}). 
However, the benefit of Ass.~\ref{ass:increm_stab} is that it can be verified without solving the regulator equations~\eqref{eq:reg}, which is one of the main motivations of the presented work.

The following assumption characterizes the detectability of the nonlinear system using the standard notion of incremental input-output-to-state stability (i-IOSS)~\cite{krichman2001input,cai2008input}.
\begin{assumption}
\label{ass:IOSS} (exponential i-IOSS)
There exist a continuous i-IOSS Lyapunov function $V_{o}:\mathbb{X}^p\times\mathbb{X}^p\times\mathbb{W}\rightarrow\mathbb{R}_{\geq 0}$ and
constants $c_{o,l}$, $c_{o,u}$, $c_{o,1}$, $c_{o,2}>0$, $\rho_o\in(0,1)$, such that for any  $(z^p,v,w,x^p,u)\in\mathcal{Z}^p\times\mathbb{W}\times\mathcal{Z}^p$, we have
\begin{subequations}
\label{eq:IOSS}
\begin{align}
\label{eq:IOSS_1}
&c_{o,l}\|x^p-z^p\|^2\leq V_{o}(x^p,z^p,w)\leq c_{o,u}\|x^p-z^p\|^2,\\
\label{eq:IOSS_2}
&V_{o}(f^p(x^p,u,w),f^p(z^p,v,w),s(w))-\rho_o V_{o}(x^p,z^p,w)\\
\leq &c_{o,1}\|u-v\|^2+c_{o,2}\|h(x^p,u,w)-h(z^p,v,w)\|^2.\nonumber
\end{align}
\end{subequations}
\end{assumption}
We note that i-IOSS is a necessary condition for the existance of a (robustly) stable observer~\cite{allan2020detect,knufer2020time} and hence not a too restrictive conditions for the output regulation problem.
A corresponding i-IOSS Lyapunov function $V_o$ can be computed using results for \textit{differential detectability}~\cite{sanfelice2012metric} or more generally results from \textit{incremental dissipativity}~\cite{verhoek2020convex}.
\begin{proposition}
\label{prop:detect}
Let Assumptions~\ref{ass:regulator} and \ref{ass:IOSS} hold.
Then Assumption~\ref{ass:detect} holds. 
\end{proposition}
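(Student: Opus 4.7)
The natural candidate is $W(x) := c\, V_o(x^p,\pi_x(w),w)$ for a sufficiently small positive constant $c$ to be chosen. Assumption~\ref{ass:regulator} guarantees $\pi_x,\pi_u$ exist and $(\pi_x(w),\pi_u(w))\in\mathcal{Z}^p$, which lets us legitimately plug $z^p=\pi_x(w)$, $v=\pi_u(w)$ into the i-IOSS inequality~\eqref{eq:IOSS_2}.

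The plan is then as follows. First, evaluate~\eqref{eq:IOSS_2} at $(z^p,v,w)=(\pi_x(w),\pi_u(w),w)$ and use the regulator equations~\eqref{eq:reg1}--\eqref{eq:reg2} to simplify the right-hand side: $f^p(\pi_x(w),\pi_u(w),w)=\pi_x(s(w))$ and $h(\pi_x(w),\pi_u(w),w)=0$. Consequently,
\begin{align*}
&V_o(f^p(x^p,u,w),\pi_x(s(w)),s(w)) -\rho_o V_o(x^p,\pi_x(w),w)\\
&\qquad\leq c_{o,1}\|u-\pi_u(w)\|^2+c_{o,2}\|h(x^p,u,w)\|^2.
\end{align*}
Recognizing the left-hand side as $\tfrac{1}{c}(W(f(x,u)) - \rho_o W(x))$ (using $f(x,u)=(f^p(x^p,u,w),s(w))$), and subtracting $(1-\rho_o) W(x)$ on both sides while using the lower bound $V_o(x^p,\pi_x(w),w)\geq c_{o,l}\sigma(x)$ from~\eqref{eq:IOSS_1}, we obtain
\begin{align*}
W(f(x,u))-W(x) \leq\;& c\, c_{o,1}\|u-\pi_u(w)\|^2+c\, c_{o,2}\|h(x^p,u,w)\|^2\\
&- (1-\rho_o)\,c\, c_{o,l}\,\sigma(x).
\end{align*}

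The second step is to pick $c$ small enough that the input and output terms are dominated by $\ell(x,u)$; concretely, choose
\begin{equation*}
c := \min\!\left\{\frac{\lambda_{\min}(R)}{c_{o,1}},\;\frac{\lambda_{\min}(Q)}{c_{o,2}}\right\} > 0,
\end{equation*}
so that $c\, c_{o,1}\|u-\pi_u(w)\|^2 \leq \|u-\pi_u(w)\|_R^2$ and $c\, c_{o,2}\|h(x^p,u,w)\|^2 \leq \|h(x^p,u,w)\|_Q^2$. This yields \eqref{eq:detect_grimm_2} with $\epsilon_o := (1-\rho_o)\, c\, c_{o,l} > 0$. Finally, the upper bound in~\eqref{eq:IOSS_1} gives \eqref{eq:detect_grimm_1} with $\gamma_o := c\, c_{o,u}$, and $W\geq 0$ follows from $V_o\geq 0$.

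There is no substantial obstacle: the construction is essentially a one-line application of i-IOSS at the reference trajectory, the regulator equations kill the reference output, and a straightforward scaling matches the cost $\ell$. The only mild subtlety is verifying that the domain assumptions of~\eqref{eq:IOSS_2} are met for all $(x,u)\in\mathcal{Z}$, but this is immediate since $(\pi_x(w),\pi_u(w))\in\mathcal{Z}^p$ by Assumption~\ref{ass:regulator} and $w\in\mathbb{W}$ by positive invariance of $\mathbb{W}$.
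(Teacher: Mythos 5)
Your proof is correct and follows essentially the same route as the paper: define $W(x)=c\,V_o(x^p,\pi_x(w),w)$, evaluate the i-IOSS inequality~\eqref{eq:IOSS_2} along the reference $(\pi_x(w),\pi_u(w))$ (admissible by Assumption~\ref{ass:regulator}), use the regulator equations~\eqref{eq:reg} to annihilate the reference output, and scale $c$ so the residual input/output terms are absorbed by $\ell$, giving $\epsilon_o=(1-\rho_o)c\,c_{o,l}$ and $\gamma_o=c\,c_{o,u}$ exactly as in the paper. In fact your choice $c=\min\{\lambda_{\min}(R)/c_{o,1},\lambda_{\min}(Q)/c_{o,2}\}$ is the right one: the paper's printed proof uses $\lambda_{\max}$, but the step $c\,c_{o,1}\|u-\pi_u(w)\|^2\leq \tfrac{c\,c_{o,1}}{\lambda_{\max}(R)}\|u-\pi_u(w)\|_R^2$ it invokes goes the wrong way (one needs $\lambda_{\min}(R)\|v\|^2\leq\|v\|_R^2$), so your version quietly corrects a typo rather than introducing a gap.
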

\begin{proof}
Assumption~\ref{ass:regulator} implies $(z^p,v)=(\pi_x(w),\pi_u(w))\in\text{int}(\mathcal{Z}^p)$ and thus Assumption~\ref{ass:IOSS} ensures~\eqref{eq:IOSS}.
Consider $W(x)=c V_o(x^p,\pi_x(w),w)$ with $c=\min\{\frac{\lambda_{\max}(R)}{c_{o,1}},\frac{\lambda_{\max}(Q)}{c_{o,2}}\}>0$. 
The upper bound~\eqref{eq:detect_grimm_1} follows directly from~\eqref{eq:IOSS_1} with $\gamma_o=c\cdot c_{o,u}$  and $\sigma(x)=\|x^p-z^p\|^2$. 
Inequality~\eqref{eq:detect_grimm_2} holds  with $\epsilon_o=(1-\rho_o)\cdot c\cdot c_{o,l}>0$ using
\begin{align*}
&W(f(x,u))-W(x)\\
\stackrel{\eqref{eq:reg2},\eqref{eq:IOSS_2}}{\leq}&c\left( c_{o,1}\|u-\pi_u(w)\|^2+ c_{o,2}\|h(x^p,u,w)\|^2\right)\\
& -(1-\rho_o)c V_o(x^p,w)\\
\stackrel{\eqref{eq:IOSS_1}}{\leq}&\dfrac{c\cdot c_{o,1}}{\lambda_{\max}(R)}\|u-\pi_u(w)\|_R^2+\dfrac{c\cdot c_{0,2}}{\lambda_{\max}(Q)}\|h(x^p,u,w)\|_Q^2\\
& -(1-\rho_o)c\cdot  c_{o,l}\|x^p-\pi_x(w)\|^2\\
\stackrel{\eqref{eq:ell_reg},\eqref{eq:sigma}}{\leq} &-\epsilon_o\sigma(x)+\ell(x,u).&\qedhere
\end{align*}
\end{proof}
\begin{remark}
\label{rk:ell_PDF}
One may be tempted to consider $\sigma(x)=\min_{u\in\mathbb{U}}\ell(x,u)$, which satisfies Assumption~\ref{ass:detect} with $W=0$, $\epsilon_o=1$. 
However, in this case Inequality~\eqref{eq:stab_grimm} in Assumption~\ref{ass:stab} is typically only satisfied if the stage cost $\ell$ is positive definite w.r.t. $(x^p,u)=(\pi_x(w),\pi_w(w))$, which is quite restrictive. 
This special case of tracking a given state-input reference trajectory is treated in~\cite[Thm.~2]{kohler2018nonlinear}.\footnote{%
We note that in~\cite[Thm.~4]{kohler2018nonlinear}, also the case of unreachable trajectories is treated, i.e., when Assumption~\ref{ass:regulator} does not hold.} 
\end{remark}

Given Prop.~\ref{prop:stab}--\ref{prop:detect}, we can now recast Thm.~\ref{thm:MPC} using intuitive assumptions on the inherent system properties of the plant. 
\begin{corollary}
\label{corol:main}
Let Assumptions~\ref{ass:regulator}, \ref{ass:increm_stab}, \ref{ass:IOSS}  hold. 
Suppose further that $\pi_x$ from Ass.~\ref{ass:regulator} and $h$ are Lipschitz continuous. 
For any constant $\overline{Y}>0$,  there exists a constant $N_{\overline{Y}}$, such that for $N>N_{\overline{Y}}$ and  initial condition $x_0\in \mathbb{X}_{\overline{Y}}$
, the MPC problem~\eqref{eq:MPC} is recursively feasible, the constraints are satisfied, and the regulator manifold $\mathcal{A}$ is exponentially stable for the resulting closed loop.
\end{corollary}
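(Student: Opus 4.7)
\textbf{Proof plan for Corollary~\ref{corol:main}.}
The strategy is to chain Propositions~\ref{prop:stab} and \ref{prop:detect} into Theorem~\ref{thm:MPC}, and then translate the decay of $\sigma$ given by Theorem~\ref{thm:MPC} into exponential stability of the set $\mathcal{A}$ using Lipschitz continuity of $\pi_x$.

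First I would verify the hypotheses of Theorem~\ref{thm:MPC}. Assumption~\ref{ass:stab} follows from Proposition~\ref{prop:stab} applied with Assumptions~\ref{ass:regulator}, \ref{ass:increm_stab} and Lipschitz $h$; this yields constants $\gamma_s,\delta_s>0$ and the local bound $V_N(x)\leq \gamma_s\sigma(x)$ on $\mathbb{X}_{\delta_s}$. Assumption~\ref{ass:detect} follows from Proposition~\ref{prop:detect} using Assumptions~\ref{ass:regulator}, \ref{ass:IOSS}, providing a function $W$ and constants $\gamma_o,\epsilon_o>0$ for the detectability inequalities~\eqref{eq:detect_grimm}. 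Given the fixed sublevel bound $\overline{Y}>0$, Theorem~\ref{thm:MPC} then delivers a horizon threshold $N_{\overline{Y}}$ such that for $N>N_{\overline{Y}}$ and any $x_0\in\mathbb{X}_{\overline{Y}}$, the MPC problem~\eqref{eq:MPC} is recursively feasible, constraints are satisfied, and the function $Y_N=V_N+W$ satisfies~\eqref{eq:Lyap_1}--\eqref{eq:Lyap_2} with $\alpha_N>0$.

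Next I would extract exponential decay of $\sigma$ along the closed loop. Combining the lower bound $\epsilon_o\sigma(x_t)\leq Y_N(x_t)$ from~\eqref{eq:Lyap_1} with the decrease~\eqref{eq:Lyap_2} yields
\begin{equation*}
Y_N(x_{t+1})\leq \left(1-\tfrac{\alpha_N\epsilon_o}{\gamma_{\overline{Y}}}\right)Y_N(x_t),
\end{equation*}
and iteration combined with the upper bound $Y_N(x_0)\leq \gamma_{\overline{Y}}\sigma(x_0)$ gives
\begin{equation*}
\sigma(x_t)\leq \tfrac{\gamma_{\overline{Y}}}{\epsilon_o}\left(1-\tfrac{\alpha_N\epsilon_o}{\gamma_{\overline{Y}}}\right)^{t}\sigma(x_0),
\end{equation*}
where the contraction factor lies in $(0,1)$ since $\alpha_N\in(0,1]$ and $\gamma_{\overline{Y}}\geq \gamma_s+\gamma_o\geq \epsilon_o$ by construction in Part~I of the proof of Theorem~\ref{thm:MPC}.

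The final step is the geometric translation from the scalar function $\sigma(x)=\|x^p-\pi_x(w)\|^2$ to the Euclidean distance $\mathrm{dist}(x,\mathcal{A})$. The upper inequality $\mathrm{dist}(x,\mathcal{A})\leq \sqrt{\sigma(x)}$ is trivial (take the candidate $(\pi_x(w),w)\in\mathcal{A}$). For the converse, Lipschitz continuity of $\pi_x$ with constant $L_{\pi_x}$ gives, for every $(\pi_x(w'),w')\in\mathcal{A}$, the estimate $\|x^p-\pi_x(w)\|\leq \|x^p-\pi_x(w')\|+L_{\pi_x}\|w-w'\|$, whence $\sigma(x)\leq 2\max\{1,L_{\pi_x}^2\}\,\mathrm{dist}(x,\mathcal{A})^2$ after squaring and taking the infimum over $(\pi_x(w'),w')\in\mathcal{A}$. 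Thus $\sigma$ and $\mathrm{dist}(\cdot,\mathcal{A})^2$ are equivalent, and the geometric decay of $\sigma$ transfers to exponential decay of $\mathrm{dist}(x_t,\mathcal{A})$ in terms of $\mathrm{dist}(x_0,\mathcal{A})$, which is the desired exponential stability of $\mathcal{A}$.

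The main obstacle is not conceptual—all heavy lifting is already encapsulated in Theorem~\ref{thm:MPC} and Propositions~\ref{prop:stab}--\ref{prop:detect}—but rather the careful bookkeeping to ensure that the chosen sublevel set $\mathbb{X}_{\overline{Y}}$ is positively invariant (which is built into Theorem~\ref{thm:MPC} via monotonicity of $Y_N$) and that the Lipschitz estimate on $\pi_x$ is used only to relate $\sigma$ to $\mathrm{dist}(\cdot,\mathcal{A})^2$, so that nothing beyond the hypotheses of the corollary is silently invoked.
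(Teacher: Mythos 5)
Your proposal is correct and follows essentially the same route as the paper: invoke Propositions~\ref{prop:stab}--\ref{prop:detect} to verify Assumptions~\ref{ass:stab}--\ref{ass:detect}, apply Theorem~\ref{thm:MPC}, and then establish the two-sided equivalence $c_\pi\sigma(x)\leq \|x\|_{\mathcal{A}}^2\leq\sigma(x)$ via the triangle inequality and Lipschitz continuity of $\pi_x$, exactly as in the paper's proof. The only difference is that you spell out the geometric decay of $Y_N$ explicitly (where, as a minor point, the contraction factor $1-\alpha_N\epsilon_o/\gamma_{\overline{Y}}$ actually comes from the \emph{upper} bound $Y_N(x_t)\leq\gamma_{\overline{Y}}\sigma(x_t)$ rather than the lower bound you cite), which the paper subsumes under ``standard Lyapunov arguments.''
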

\begin{proof}
First, note that Propositions~\ref{prop:stab}--\ref{prop:detect} ensure that Assumptions~\ref{ass:stab}--\ref{ass:detect} hold. 
Thus, for $N>N_{\overline{Y}}$, with $N_{\overline{Y}}$ from Thm.~\ref{thm:MPC}, the bounds~\eqref{eq:Lyap_MPC} hold with $\alpha_N>0$. 
Define the point-to-set distance $\|x\|_{\mathcal{A}}:=\inf_{s\in\mathcal{A}}\|x-s\|$. 
In the following, we show that there exists a constant $c_\pi>0$, such that $c_{\pi}\sigma(x)\leq \|x\|^2_{\mathcal{A}}\leq \sigma(x)$, which in combination with~\eqref{eq:Lyap_MPC} ensures exponential stability of $\mathcal{A}$ using standard Lyapunov arguments. 
For given $(x^p,w)$, denote some minimizer by $\tilde{w}:=\arg\min_{\tilde{w}\in\mathbb{W}}\|(\pi_x(\tilde{w}),\tilde{w})-(x^p,w)\|$. 
Given that $\pi_x$ is Lipschitz continuous with Lipschitz constant $L_\pi$, we have 
\begin{align*}
\sigma(x) =&\|x^p-\pi_x(w)\|^2\\
\leq& 2(\|x^p-\pi_x(\tilde{w})\|^2+\|\pi_x(\tilde{w})-\pi_x(w)\|^2)\\
\leq &2\max\{L_\pi^2,1\}\|(x^p,w)-(\pi_x(\tilde{w}),\tilde{w})\|^2=:1/c_\pi \|x\|_{\mathcal{A}}^2,
\end{align*}
where the first inequality uses $\|a+b\|^2\leq 2(\|a\|^2+\|b\|^2)$ for any $a,b\in\mathbb{R}^{n_p}$. 
 Furthermore, 
\begin{align*}
\|x\|_{\mathcal{A}}=&\|(\pi_x(\tilde{w}),\tilde{w})-(x^p,w)\|
{\leq}\|x^p-\pi_x(w)\|\stackrel{\eqref{eq:sigma}}{=} \sqrt{\sigma(x)},
\end{align*}
which finishes the proof. 
\end{proof}
Overall, this result implies that the proposed MPC scheme solves the nonlinear constrained regulation problem if:
\begin{enumerate}[label=(\alph*)]
\item The regulator problem is (strictly) feasible (Ass.~\ref{ass:regulator}),
\label{enum:cond_a}
\item The plant is incrementally stabilizable (Ass.~\ref{ass:increm_stab}) and detectable (i-IOSS, Ass.~\ref{ass:IOSS}).
\label{enum:cond_b}
\end{enumerate}
The main practical restriction of the proposed formulation is the fact that we need the feedforward $\pi_u$ to implement the input regularization in the stage cost~\eqref{eq:ell_reg}. 
This shortcoming will be removed in Sections~\ref{sec:minPhase} and \ref{sec:input} using additional conditions on the zero dynamics or a modified MPC formulation, respectively.
Thus, with these formulations the solution to the regulator equations $\pi_x,\pi_u$ is \emph{not} needed for the implementation and instead the closed loop will ``find'' the regulator manifold, which is the crucial advantage of the proposed formulation compared to, e.g., classical trajectory stabilization (Prop.~\ref{prop:output_regulation}, \cite{isidori1990output,castillo1993nonlinear,pavlov2006uniform}).
In addition, compared to Prop.~\ref{prop:output_regulation}, the proposed MPC scheme yields a larger region of attraction (despite the presence of hard constraints).

 \begin{remark}
 \label{rk:error_feedback}
(Error feedback and robustness)
The output regulation problem is classically posed without state measurements and solved using a dynamic error feedback, compare~\cite{isidori1990output} and \cite{byrnes2003limit,priscoli2009dissipativity}.  
In the present paper, we restrict ourselves to the nominal case of exact state measurements, but the proposed MPC framework can be naturally extended to the error feedback case using an observer and tools from output-feedback MPC~\cite{findeisen2003state}.
The theoretical details can be found in Appendix~\ref{app:error_feedback}, where we show finite-gain $\mathcal{L}_2$ stability in the presence of noisy output measurements, given some simplifying assumptions (mainly no state constraints).
 \end{remark}

\begin{remark}
\label{rk:IO_model} (Practical applicability: tracking \& I/O costs)
By addressing the output regulation problem, the provided framework is applicable to dynamic trajectory tracking, disturbance rejection and output feedback~\cite{isidori1990output,castillo1993nonlinear,pavlov2006uniform,byrnes2012output}.
In particular, any time-varying reference trajectory and disturbance signal can be viewed as an output of the exosystem (e.g., by treating the time as a state of the exosystem). 
The main restriction is the fact that the reference and disturbance signal can be predicted and the regulation problem is feasible (Ass.~\ref{ass:regulator}). 
For comparison, in the related literature on trajectory tracking MPC,  a given state reference trajectory is considered and also a suitable terminal cost (and terminal region) need to be constructed offline~\cite{koehler2020quasi,faulwasser2015nonlinear}. 
Using Theorem~\ref{thm:MPC}/Corollary~\ref{corol:main}, the need for constructing such terminal ingredients or determining the corresponding state trajectory can be dropped by choosing a  sufficiently  large $N$.

The stability results in this paper are also important for MPC with input-output stage costs $\ell=\|y\|_Q^2+\|u\|_R^2$. 
This is of relevance for input-output models resulting from system identification, e.g.  input-output LPV systems~\cite{abbas2018improved,cisneros2019stabilizing}. 
Similarly, input-output stage costs $\ell$ and avoidance of terminal ingredients or state references also appears in data-driven MPC, where (linear) models are implicitly represented using data~\cite{berberich2019data,coulson2019regularized}.
A corresponding extension of the results in this paper to derive a robust data-driven MPC can be found in~\cite{bongard2021robust}.
\end{remark}

\section{Minimum phase systems}
\label{sec:minPhase}
In this section, we show that in case of minimum phase systems, the proposed output regulation MPC (Sec.~\ref{sec:reg}) ensures stability, even without any input regularization. 
Section~\ref{sec:minPhase_1} introduces preliminaries regarding relative degree and the zero dynamics.
Section~\ref{sec:minPhase_2} shows that the minimum phase property implies the detectability condition (Ass.~\ref{ass:detect}) for a look-ahead stage cost $\ell_d$. 
Section~\ref{sec:minPhase_3} shows that the MPC formulation~\eqref{eq:MPC} in Section~\ref{sec:reg} also yields the desired closed-loop properties  without input regularization.
Some discussion can be found in Section~\ref{sec:minPhase_4}.

\subsection{Relative degree -  Byrnes-Isidori normal form}
\label{sec:minPhase_1}
For simplicity of exposition, we consider a single-input-single-output (SISO) system without a direct feed through term, i.e. $m=p=1$ and $h(x^p,u,w)=h(x)$. 
We assume that the system has no direct feed through to keep in line with the setup in the relevant literature, compare~\cite{monaco1987minimum}.
The multi-input-multi-output (MIMO) case is discussed in Remark~\ref{rk:MIMO_min_phase} below. 
For ease of notation, Assumptions~\ref{ass:BINF}--\ref{ass:minPhase} below regarding the relative degree and the zero dynamics will be posed globally.
\subsubsection*{Relative degree}
We consider the case, where the system has a well defined relative degree $d\in\mathbb{N}$, which is characterized using the Byrnes-Isidori normal form (similar to~\cite[Prop.~2.1]{monaco1987minimum}).
\begin{assumption}
\label{ass:BINF}
(Byrnes-Isidori normal form)
There exist a Lipschitz continuous  function $\Phi:\mathbb{X}\rightarrow\mathbb{R}^{n_p}$ and a constant $d\in\mathbb{N}$, such that the state $\zeta=(z,\eta)=:\Phi(x^p,w)=(\Phi_1(x^p,w),\Phi_2(x^p,w))$, $\eta\in\mathbb{R}^{n_p-d-1}$, $z=(z^1,\dots,z^{d+1})\in\mathbb{R}^{d+1}$ is subject to the following dynamics for all $t\geq 0$:
\begin{subequations}
\label{eq:BINF}
\begin{align}
\label{eq:BINF_1}
z^k_{t+1}=&z^{k+1}_t,\quad k=1,\dots,d,\\
\label{eq:BINF_2}
z^{d+1}_{t+1}=&F_1(\zeta_t,w_t,u_t),\\
\label{eq:BINF_3}
\eta_{t+1}=&F_2(\zeta_t,w_t,u_t),\\
\label{eq:BINF_4}
y_t=&z^1_t,
\end{align}
\end{subequations}
with $w_t$ according to~\eqref{eq:sys_exo} and  Lipschitz continuous maps $F_1,F_2$. 
Furthermore, there exists a Lipschitz continuous function $\tilde{\Phi}:\mathbb{R}^{n_p}\times\mathbb{W}\rightarrow\mathbb{R}^{n_p}$ satisfying
$\tilde{\Phi}(\Phi(x^p,w),w)=x^p$ for all $w\in\mathbb{W}$. \\
\end{assumption}
In the following, we abbreviate the dynamics~\eqref{eq:BINF_1}--\eqref{eq:BINF_3}  by $\zeta_{t+1}=:F(\zeta_t,w_t,u_t)$. 
Assumption~\ref{ass:BINF} ensures that the system can be transformed into the Byrnes-Isidori normal form (BINF). 
Lipschitz continuity of the inverse function $\tilde{\Phi}$ ensures that stability of the original plant $x^p$ can be studied based on the transformed state $\zeta$. 
With the presentation~\eqref{eq:BINF} we directly have $(y_t,...y_{t+d})=\Phi_1(x_t)$, i.e., in the next $d$ time steps the output $y$ cannot be influenced by the input $u$.
We have a well-defined relative degree if $\frac{\partial F_1}{\partial u}\neq 0$ for all $(w,\zeta,u)\in\mathbb{W}\times\mathbb{R}^{n_p+m}$, i.e., the input $u_t$ can influence the output $y_{t+d+1}$, which will be ensured through Assumption~\ref{ass:zero} below. 
We point out that in Assumption~\ref{ass:BINF} (and in Ass.~\ref{ass:minPhase} below) we only considered the BINF for the plant state $x^p$, but not the overall state $x=(x^p,w)$. 
In particular, the zero dynamics of $x$ contain the dynamics in $w$, which are in general not contractive.

\begin{assumption}
\label{ass:zero}
(Well-defined zero dynamics)
There exists a continuous control law $\tilde{\alpha}:\mathbb{R}^{n_p}\times\mathbb{W}\rightarrow\mathbb{R}^m$ and constants $c_{h_1},c_{h_2}>0$, such that 
\begin{align}
\label{eq:zero_dyn}
c_{h_1}|\Delta u|\leq |F_1(\zeta,w,\tilde{\alpha}(\zeta,w)+\Delta u)|\leq c_{h_2}|\Delta u|,
\end{align}
for all $\zeta\in\mathbb{R}^{n_p}$, $w\in\mathbb{W}$, $\Delta u\in\mathbb{R}^m$.
\end{assumption}
Consider the set $L_{D}=\{x\in\mathbb{R}^n|~\Phi_1(x)=0\}$, for which it holds that $y_t=0$ for $t=0,\dots,d$ if $x_0\in L_D$, as in~\cite{monaco1988zero}.
Condition~\eqref{eq:zero_dyn} ensures that there exists a \textit{unique} feedback law $\alpha(x):=\tilde{\alpha}(\Phi(x),w)$, such that the zero output manifold $L_D$ is positively
 invariant, which ensures that the system has a well-defined zero dynamics, compare~\cite[Sec.~V]{byrnes2003limit}.
The requirement of a \textit{unique} control law is relevant for well posedness of the zero dynamics and also the reason we restrict ourselves to SISO (or square MIMO) systems.

\subsection{Minimum phase systems and detectability}
\label{sec:minPhase_2}
The following assumption ensures that the system is minimum phase, i.e., the zero dynamics are asymptotically stable, using an ISS Lyapunov function.
\begin{assumption}
\label{ass:minPhase}
(Minimum phase)
There exist constants $\tilde{c}_{o,l},\tilde{c}_{o,u}>0$, $\rho_\eta\in(0,1)$ and an ISS Lyapunov function $V_\eta:\mathbb{R}^{n_p-d-1}\times\mathbb{W}$, such that for all  $(w,z,\eta,u)\in\mathbb{W}\times\mathbb{R}^{n_p+m}\rightarrow\mathbb{R}_{\geq 0}$ we have
\begin{subequations}
\label{eq:min_phase}
\begin{align}
\label{eq:min_phase_1}
&\tilde{c}_{o,l}\|\eta-\tilde{\eta}_w\|^2\leq V_\eta(\eta,w)\leq \tilde{c}_{o,u}\|\eta-\tilde{\eta}_w\|^2\\
\label{eq:min_phase_2}
&V_\eta(F_2(\zeta,w,u),s(w))\leq \rho_\eta V_\eta(\eta,w)+\|z\|^2+(u-\tilde{\alpha}(\zeta,w))^2,
\end{align}
\end{subequations}
with $\tilde{\eta}_w=\Phi_2(\pi_x(w),w)$,  $\zeta=(z,\eta)$.  
\end{assumption}
Given a system with exosystem state $w$ and consistently zero output ($z\equiv 0$, $u\equiv \tilde{\alpha}$,~cf. Ass.~\ref{ass:BINF}--\ref{ass:zero}), the state $\eta$ exponentially converges to $\tilde{\eta}_w$, which corresponds to the ``stationary'' value of $\eta$  for $(x^p,u)=(\pi_x(w),\pi_u(w))$. 
Furthermore, the dynamics of $\eta$ with $z=0,u=\tilde{\alpha}$ are a diffeomorphic copy of the dynamics of $f$ on $L_D$ and thus Assumption~\ref{ass:minPhase} characterizes the stability of the zero dynamics, i.e., Inequalities~\eqref{eq:min_phase} ensure the minimum phase property.
We point out that in~\cite{liberzon2002output}, the \textit{strongly} minimum phase property has been characterized using the notion of \textit{Output-input stability}, which is similar to the considered ISS characterization, compare~\cite[Example~2]{liberzon2002output}.

The following proposition shows that the minimum phase property guarantees the detectability condition (Ass.~\ref{ass:detect}) with a look-ahead stage cost $\ell_d$, similar to Proposition~\ref{prop:detect}.
\begin{proposition}
\label{prop:minphase_detect}
Let Assumptions~\ref{ass:regulator} and \ref{ass:BINF}--\ref{ass:minPhase} hold. 
Then Assumption~\ref{ass:detect} holds with the look-ahead stage cost $\ell_d(x,u):=h^2(x)+F_1^2(\Phi(x),w,u)$.
\end{proposition}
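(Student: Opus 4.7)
The plan is to work in the Byrnes-Isidori coordinates $\zeta = (z,\eta) = \Phi(x^p,w)$ provided by Assumption~\ref{ass:BINF} and to build $W$ explicitly as a weighted sum of the squared chain components $(z^k)^2$ plus a multiple of the zero-dynamics ISS Lyapunov function $V_\eta$. The first preparatory observation is that on the regulator manifold $\mathcal{A}$ we have $\Phi(\pi_x(w),w)=(0,\tilde\eta_w)$: the regulator equation~\eqref{eq:reg2} gives $y_t=h(\pi_x(w_t))=0$ for all $t\geq 0$, and since $z^k$ is the $k$-step-ahead output in the BINF chain~\eqref{eq:BINF_1}--\eqref{eq:BINF_4}, every component of $z$ vanishes on $\mathcal{A}$, while $\eta = \tilde\eta_w$ by definition. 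Lipschitz continuity of $\Phi$ and $\tilde\Phi$ (Ass.~\ref{ass:BINF}), combined with compactness of $\mathbb{W}$, then yields constants $c_1,c_2>0$ such that
\begin{equation*}
c_1\sigma(x)\;\leq\;\|z\|^2+\|\eta-\tilde\eta_w\|^2\;\leq\;c_2\sigma(x).
\end{equation*}

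With this equivalence in hand, I would posit the candidate
\begin{equation*}
W(x)\;:=\;\sum_{k=1}^{d+1}a_k(z^k)^2\;+\;b\,V_\eta(\eta,w),
\end{equation*}
where $a_k,b>0$ are to be fixed. The upper bound $W(x)\leq \gamma_o\sigma(x)$ then follows immediately from~\eqref{eq:min_phase_1} and the Lipschitz equivalence above. The main work is the one-step drop. Plugging in~\eqref{eq:BINF_1}--\eqref{eq:BINF_3}, re-indexing the chain (replace $z^k_t$ by $z^{k+1}_t$ for $k\leq d$ and $z^{d+1}_t$ by $F_1$) and telescoping produces
\begin{equation*}
W(f(x,u))-W(x)\;=\;-a_1(z^1)^2+\sum_{k=2}^{d+1}(a_{k-1}-a_k)(z^k)^2+a_{d+1}F_1^2+b\bigl(V_\eta(F_2,s(w))-V_\eta\bigr).
\end{equation*}
Substituting the minimum-phase bound~\eqref{eq:min_phase_2} produces the ``unwanted'' terms $b\|z\|^2$ and $b(u-\tilde\alpha(\zeta,w))^2$.

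The decisive step, and in my view the main obstacle, is absorbing the feedback-error term $b(u-\tilde\alpha)^2$ into $\ell_d$. This is exactly where the well-defined zero dynamics condition~\eqref{eq:zero_dyn} enters: with $\Delta u = u-\tilde\alpha(\zeta,w)$, the lower bound $|F_1(\zeta,w,u)|\geq c_{h_1}|\Delta u|$ gives $(u-\tilde\alpha)^2\leq F_1^2/c_{h_1}^2$. Collecting coefficients, I need
\begin{equation*}
(b-a_1)(z^1)^2+\sum_{k=2}^{d+1}(b+a_{k-1}-a_k)(z^k)^2+\bigl(a_{d+1}+b/c_{h_1}^2\bigr)F_1^2-b(1-\rho_\eta)V_\eta \;\leq\; -\epsilon_0\bigl(\|z\|^2+\|\eta-\tilde\eta_w\|^2\bigr)+\ell_d.
\end{equation*}
This is solved by choosing $a_k=k(b+\epsilon_0)$ so that the chain telescopes (each $z^k$ contributes $-\epsilon_0$), setting $\epsilon_0:=b(1-\rho_\eta)\tilde c_{o,l}$ so that the $V_\eta$ term covers $\epsilon_0\|\eta-\tilde\eta_w\|^2$ via~\eqref{eq:min_phase_1}, and finally picking $b>0$ small enough that $(d+1)(b+\epsilon_0)+b/c_{h_1}^2\leq 1$, which makes the coefficient of $F_1^2$ admissible.

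The closing step is cosmetic: convert the bound from $-\epsilon_0(\|z\|^2+\|\eta-\tilde\eta_w\|^2)$ to $-\epsilon_o\sigma(x)$ using the Lipschitz equivalence above with $\epsilon_o:=c_1\epsilon_0$. Together with the upper bound already established, this verifies Assumption~\ref{ass:detect} for the look-ahead cost $\ell_d(x,u)=h^2(x)+F_1^2(\Phi(x),w,u)$. The conceptual content of the argument is therefore twofold: the chain-of-integrators structure of the BINF lets a weighted-$z^2$ Lyapunov function telescope the undetectable intermediate outputs into the one-step look-ahead $F_1$, while Assumption~\ref{ass:zero} ensures that the deviation of $u$ from the zero-dynamics feedback $\tilde\alpha$ is itself detectable through $F_1$.
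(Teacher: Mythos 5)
Your proposal is correct and follows essentially the same route as the paper: decompose $W$ in the Byrnes--Isidori coordinates as a quadratic function of $z$ plus a multiple of $V_\eta$, use the well-defined zero dynamics bound~\eqref{eq:zero_dyn} to absorb $(u-\tilde{\alpha})^2$ into $F_1^2$, and use Lipschitz continuity of $\Phi,\tilde{\Phi}$ to pass between $\|z\|^2+\|\eta-\tilde{\eta}_w\|^2$ and $\sigma(x)$. The only (cosmetic) difference is that you construct the telescoping weights $a_k=k(b+\epsilon_0)$ for the integrator chain explicitly, where the paper instead invokes the existence of a quadratic i-IOSS Lyapunov function $\|z\|_{P_z}^2$ for the FIR subsystem; your explicit choice is a valid instance of that, and your constant $c_{h_1}^2$ in the bound $(u-\tilde{\alpha})^2\leq F_1^2/c_{h_1}^2$ is in fact the correct exponent.
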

\begin{proof}
\begin{subequations}
\label{eq:minphase_detect}
Assumptions~\ref{ass:zero}--\ref{ass:minPhase} directly imply
\begin{align}
\label{eq:minphase_detect_1}
&V_\eta(F_2(\zeta,w,u),s(w))-V_\eta(\eta,w)\\
\stackrel{\eqref{eq:min_phase}}{\leq}& -(1-\rho_\eta)\cdot \tilde{c}_{o,l}\|\eta-\tilde{\eta}^w\|^2+(u-\tilde{\alpha}(\zeta,w))^2+\|z\|^2\nonumber\\
\stackrel{\eqref{eq:zero_dyn}}{\leq}&-(1-\rho_\eta)\cdot \tilde{c}_{o,l}\|\eta-\tilde{\eta}^w\|^2+\dfrac{F^2_1(\zeta,w,u)}{c_{h_1}}+\|z\|^2.\nonumber
\end{align} 
Note that the dynamics in $z$~\eqref{eq:BINF_1}--\eqref{eq:BINF_2} correspond to an FIR-filter with input $F_1$ and output $y$, which is hence detectable.
Thus, there exists a quadratic IOSS Lyapunov function with $P_z=P_z^\top\succ 0$ and some constants $\tilde{c}_{o,1},\tilde{c}_{o,2}>0$ satisfying 
\begin{align}
\label{eq:minphase_detect_2}
&\|z_{t+1}\|_{P_z}^2-\|z_t\|_{P_z}^2\nonumber\\
\leq& -\|z_t\|^2+\tilde{c}_{o,1}F_1^2(\zeta_t,w_t,u_t)+\tilde{c}_{o,2} y_t^2,
\end{align}
compare~\cite{cai2008input}.
The function $\tilde{W}(z,\eta,w):=c_1V_\eta(\eta,w)+c_2\|z\|_{P_z}^2$ with  $c_2:=\frac{1}{\max\{\tilde{c}_{o,2},2\tilde{c}_{o,1}\}}>0$, $c_1:=\frac{\min\{c_{h_1},c_2\}}{2}>0$ satisfies
\begin{align*}
&\tilde{W}(\zeta_{t+1},w_{t+1})-\tilde{W}(\zeta_t,w_t)\\
\stackrel{\eqref{eq:minphase_detect_1}\text{--}\eqref{eq:minphase_detect_2}}{\leq} &
-\tilde{\epsilon}_o(\|z_t\|^2+\|\eta_t-\tilde{\eta}^w_t\|^2)+F^2_1(\zeta_t,w_t,u_t)+y_t^2\\
\leq& \ell_d(x_t,u_t)-\epsilon_o\|x^p_t-\pi_x(w_t)\|^2,
\end{align*}
with $\tilde{\epsilon}_o:=\min\{\frac{c_2}{2},c_1(1-\rho_\eta)\tilde{c}_{o,l}\}>0$ and $\epsilon_o=\tilde{\epsilon}_o/L_{\tilde{\Phi}}^2$, where $L_{\tilde{\Phi}}$ is the Lipschitz constant of $\tilde{\Phi}$ from Assumption~\ref{ass:BINF}.
The function $W(x):=\tilde{W}(\Phi(x^p,w),w)$ also satisfies the upper bound~\eqref{eq:detect_grimm_1} using
\begin{align*}
{W}(x)=&c_1V_\eta(\Phi_2(x^p,w),w)+c_2\|\Phi_1(x^p,w)\|^2_{P_z}\\
\leq &c_1\tilde{c}_{o,u}\|\Phi_2(x^p,w)-\Phi_2(\pi_x(w),w)\|^2\\
&+c_2\lambda_{\max}(P_z)\|\Phi_1(x^p,w)-\Phi_1(\pi_x(w),w)\|^2\\
\leq &\max\{c_1\tilde{c}_{o,u},c_2\lambda_{\max}(P_z)\}L_\Phi^2\|x^p-\pi_x(w)\|^2,
\end{align*}
and thus satisfies Assumption~\ref{ass:detect}. 
\end{subequations}
\end{proof}
Due to the well-defined zero dynamics (Ass.~\ref{ass:zero}), minimizing $F_1$ in the look-ahead stage cost $\ell_d$ corresponds to an input regularization with respect to the input $u=\alpha(x)$. 
Based on this, the result in Proposition~\ref{prop:minphase_detect} can be intuitively interpreted in the form of a detectability notion.  
In particular, detectability ensures that for $(u,y)\equiv 0$, the plant state $x^p$ is asymptotically stable. 
The minimum phase property implies that the state $x^p-\pi_x(w)$ is asymptotically stable on the set $L_D$ (zero dynamics: $y\equiv 0$, $u=\alpha(x)$).
Hence, the minimum phase property is similar to detectability for a shifted input $\Delta u=u-\alpha(x)$ and replaces the i-IOSS condition (Ass.~\ref{ass:IOSS}) in the analysis.

We point out that this (implicit) input regularization w.r.t. $u=\alpha(x)$ is different than the input regularization w.r.t. $\pi_u(w)$ considered in Section~\ref{sec:reg}, since in general $\alpha(x^p,w)\neq \pi_u(w)$, except for $x^p=\pi_x(w)$.

The considered look-ahead stage cost satisfies $\ell_d(x_t,u_t)=y^2_t+y^2_{t+d+1}$ and hence it is possible to directly implement an MPC scheme with this look-ahead stage cost $\ell_d$, without explicitly using the BINF from Assumption~\ref{ass:BINF}. 
The same stage cost has also been suggested in~\cite[Equ.~(44)]{aguilar2014model} to study infinite horizon optimal regulation and approximations thereof.
Even though this stage cost can be implemented, in Theorem~\ref{thm:minPhase} we show that we obtain the same properties using a more standard output stage cost, albeit with a potentially  larger prediction horizon $N$.

\subsection{Theoretical analysis}
\label{sec:minPhase_3}
In the following, we show that, given the minimum phase property (Ass.~\ref{ass:minPhase}), the proposed output regulation MPC from Section~\ref{sec:reg} also ensures stability without any input regularization, i.e., by using the output stage cost $\ell_y(x_t)=h^2(x_t)=y^2_t$. 
The corresponding open-loop cost is denoted by $J^y_N(x_{\cdot|t},u_{\cdot|t}):=\sum_{k=0}^{N-1}\ell_y(x_{k|t})$ and the optimal value function by $V_N^y(x_t)$.
\begin{theorem}
\label{thm:minPhase}
Consider a SISO system and the MPC scheme~\eqref{eq:MPC}  with the stage cost $\ell(x,u)$ replaced by the output stage cost $\ell_y(x)$ ($R=0$). 
Let Assumptions~\ref{ass:regulator}--\ref{ass:increm_stab} and \ref{ass:BINF}--\ref{ass:minPhase} hold. 
Suppose further that $\pi_x$ from Ass.~\ref{ass:regulator} and $h$ are Lipschitz continuous. 
For any constant $\overline{Y}>0$, there exists a constant $N_{\overline{Y}}$, such that for $N>N_{\overline{Y}}$ and initial condition $x_0\in \mathbb{X}_{\overline{Y}}:=\{x\in\mathbb{X}|~2V^y_N(x)-\|\Phi_1(x)\|^2+W(x)\leq \overline{Y}\}$, the corresponding MPC problem is recursively feasible, the constraints are satisfied, and the regulator manifold $\mathcal{A}$ is exponentially stable for the resulting closed loop.
\end{theorem}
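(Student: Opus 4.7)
The plan is to reduce the claim to the detectability-based analysis of Theorem~\ref{thm:MPC} applied with the look-ahead cost $\ell_d$ from Proposition~\ref{prop:minphase_detect}, and to bridge the two cost functions by an algebraic identity between $J_N^y$ and $J_N^d$. Using the Byrnes-Isidori normal form~\eqref{eq:BINF}, one checks that $F_1(\Phi(x_{k|t}),w_{k|t},u_{k|t}) = z^{d+1}_{k+1|t}$, and that the BINF iterates $z^{d+1}_{k}$ are exactly the outputs shifted by $d$ steps. Telescoping yields, for any horizon $N\geq d+2$,
\begin{equation*}
J_N^d(x_t,u) \;=\; 2J_N^y(x_t,u)\;-\;\|\Phi_1(x_t)\|^2\;+\;\|\Phi_1(x_{N|t})\|^2,
\end{equation*}
where $\|\Phi_1(x_t)\|^2=\sum_{k=0}^{d}y_{t+k}^2$ is the contribution of the $d+1$ outputs already fixed by $x_t$. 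This identity immediately gives $V_N^y(x_t)\geq \|\Phi_1(x_t)\|^2$, so the Lyapunov candidate $Y_N(x_t):=2V_N^y(x_t)-\|\Phi_1(x_t)\|^2+W(x_t)$ appearing in the definition of $\mathbb{X}_{\overline{Y}}$ is nonnegative and dominates $V_N^y+W$.

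For the upper bound on $Y_N$, I would instantiate the candidate feedback from Assumption~\ref{ass:increm_stab} as in Proposition~\ref{prop:stab}: the resulting predicted trajectory satisfies $\|x^p_{k|t}-\pi_x(w_{k|t})\|^2\leq c\rho_s^k\sigma(x_t)$, and Lipschitz continuity of $h$ and $\Phi$, together with $h\circ\pi_x=0$ and $\Phi_1(\pi_x(w),w)=0$, yield $V_N^y(x_t)\lesssim \sigma(x_t)$ and $\|\Phi_1(x_t)\|^2\lesssim \sigma(x_t)$. Combined with the bound $W(x)\leq \gamma_o\sigma(x)$ from Proposition~\ref{prop:minphase_detect}, this produces $Y_N(x_t)\leq \gamma_{\overline{Y}}\sigma(x_t)$ locally, and the case distinction of Theorem~\ref{thm:MPC} Part~I extends the bound to all of $\mathbb{X}_{\overline{Y}}$. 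The lower bound comes from summing~\eqref{eq:detect_grimm_2} along the $V_N^y$-optimizer and inserting the algebraic identity:
\begin{equation*}
\epsilon_o\sum_{k=0}^{N-1}\sigma(x^\ast_{k|t})\;\leq\; Y_N(x_t)\;+\;\|\Phi_1(x^\ast_{N|t})\|^2\;-\;W(x^\ast_{N|t}).
\end{equation*}
After a mild rescaling of the $P_z$-term in the construction of $W$ in Proposition~\ref{prop:minphase_detect}, the residual $\|\Phi_1(x^\ast_{N|t})\|^2-W(x^\ast_{N|t})$ is rendered nonpositive, yielding the desired lower bound $Y_N\geq\epsilon_o\sigma$.

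The decrease follows the turnpike scheme of Theorem~\ref{thm:MPC} Part~II: the above inequality forces the existence of an index $k_x\in\{1,\ldots,N-1\}$ with $\sigma(x^\ast_{k_x|t})\leq\delta_s$ once $N$ is sufficiently large, allowing a warm-start that concatenates the optimal tail up to $k_x$ with the contracting feedback of Assumption~\ref{ass:increm_stab} to bound $V_N^y(x_{t+1})$. Combined with the single-step BINF identity $\|\Phi_1(f(x,u))\|^2-\|\Phi_1(x)\|^2 = -h^2(x)+F_1^2(\Phi(x),w,u)$, which is an immediate calculation from~\eqref{eq:BINF_1}--\eqref{eq:BINF_2}, and the one-step drop of $W$, this yields $Y_N(x_{t+1})-Y_N(x_t)\leq -\alpha_N\epsilon_o\sigma(x_t)$ with $\alpha_N>0$ for $N>N_{\overline{Y}}$, from which exponential stability of $\mathcal{A}$ and constraint satisfaction follow by the closing argument of Corollary~\ref{corol:main}, using the equivalence $c_\pi\sigma\leq \|x\|_{\mathcal{A}}^2\leq \sigma$ established there. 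The principal technical obstacle will be the terminal residual $\|\Phi_1(x^\ast_{N|t})\|^2$, which is absent in Theorem~\ref{thm:MPC}; I expect to handle it either by rescaling $W$ so that $W\geq \|\Phi_1\|^2$ pointwise, or by using the turnpike estimate $\sigma(x^\ast_{N|t})\leq \gamma_{\overline{Y}}\sigma(x_t)/(\epsilon_o(N-1))$ together with the Lipschitz bound $\|\Phi_1(x^\ast_N)\|^2\leq L_{\Phi_1}^2\sigma(x^\ast_N)$ to make the residual negligible for large $N$.
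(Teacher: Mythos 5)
Your overall architecture coincides with the paper's: rewrite the output cost via the BINF as a look-ahead cost plus a positive semidefinite terminal term, use Proposition~\ref{prop:minphase_detect} for detectability of $\ell_d$, take $2V_N^y-\|\Phi_1\|^2+W$ as the Lyapunov candidate, and rerun the turnpike/warm-start argument of Theorem~\ref{thm:MPC}. However, there is a genuine gap exactly at the point you flag as the ``principal technical obstacle,'' and neither of your two proposed fixes closes it. By summing the dissipation inequality~\eqref{eq:detect_grimm_2} for $\ell_d$ over the \emph{full} horizon $k=0,\dots,N-1$, you pick up the residual $\|\Phi_1(x^*_{N|t})\|^2-W(x^*_{N|t})$, and $\|\Phi_1(x^*_{N|t})\|^2=\sum_{k=N}^{N+d}(y^*_{k|t})^2$ consists of \emph{post-horizon} outputs that are not penalized in $J_N^y$: the inputs $u^*_{N-d-1|t},\dots,u^*_{N-1|t}$ first influence $y_{N|t},\dots$ and therefore do not enter the minimized cost at all, so an (arbitrary) minimizer gives you no control whatsoever over $\sigma(x^*_{N|t})$ or $\|\Phi_1(x^*_{N|t})\|^2$. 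This kills your second fix --- the turnpike estimate only yields smallness of $\sigma(x^*_{k_x|t})$ for \emph{some} $k_x$, is itself derived from the very inequality containing the residual, and $\sigma(x^*_{N|t})$ is not even a summand of the left-hand side. Your first fix (rescaling the $P_z$-term of $W$ so that $W\geq\|\Phi_1\|^2$) is also not available: the constant $c_2$ in Proposition~\ref{prop:minphase_detect} is pinned down by the requirement that the supply rate stays below $\ell_d=y^2+F_1^2$; inflating $c_2$ (or $W$ as a whole) destroys the inequality $W(f(x,u))-W(x)\leq-\epsilon_o\sigma(x)+\ell_d(x,u)$, and there is no guarantee that $c_2\lambda_{\min}(P_z)\geq 1$ is compatible with the IOSS construction for the FIR chain.

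The paper's resolution, which your write-up misses, is to truncate the look-ahead sum rather than to bound the residual: one uses the identity $2J^y_N=\|\Phi_1(x_t)\|^2+J^d_{N-d-1}+\|\Phi_1(x_{N-d-1|t})\|^2$, so that every look-ahead output $y_{k+d+1}$ with $k\leq N-d-2$ stays inside the prediction horizon, and the leftover $\|\Phi_1(x^*_{N-d-1|t})\|^2$ appears with a \emph{plus} sign as part of the shifted value function $\tilde V_N=2V_N^y-\|\Phi_1(x_t)\|^2$. Summing~\eqref{eq:detect_grimm_2} only over $k=0,\dots,N-d-2$ then gives $\epsilon_o\sum_{k=0}^{N-d-2}\sigma(x^*_{k|t})\leq W(x_t)+J^d_{N-d-1}\leq Y_N^y(x_t)$ with no residual to dispose of, at the mild price that the turnpike index lives in $\{1,\dots,N-d-2\}$ and the horizon thresholds shift by $d+1$. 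With that modification the rest of your argument (upper bound via Assumption~\ref{ass:increm_stab}, lower bound via $\ell_d(x_t,u^*_{0|t})$ --- note you need $\ell_d$, not $\ell_y$, here, since $\ell_y$ alone is not detectable per Remark~\ref{rk:example_not_detect} --- and the closing step of Corollary~\ref{corol:main}) goes through as you describe.
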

\begin{proof}
\begin{subequations}
The proof is structured as follows: We first show that the MPC formulation is equivalent to an MPC using the look-ahead stage cost $\ell_d$ and a semi-definite terminal cost.
Then, we exploit the fact that $\ell_d$ satisfies the detectability condition (Ass.~\ref{ass:detect}) and extend the proof of Theorem~\ref{thm:MPC}/Corollary~\ref{corol:main}.\\
\textbf{Part I. }The output stage cost $\ell_y(x)$ and the look-ahead stage cost $\ell_d(x,u)$ are such that for any trajectory $x_t,u_t$ satisfying the dynamics, we have $\ell_d(x_t,u_t)=\ell_y(x_t)+\ell_y(x_{t+d+1})$, compare the BINF~\eqref{eq:BINF}. 
Define the open-loop cost $J^d_N(x_{\cdot|t},u_{\cdot|t}):=\sum_{k=0}^{N-1}\ell_d(x_{k|t},u_{k|t})$.
For any $N\geq d+1$, we have 
\begin{align}
\label{eq:open_loop_d}
2J^y_N(x_{\cdot|t},u_{\cdot|t})=&\|\Phi_1(x_t)\|^2+J^d_{N-d-1}(x_{\cdot|t},u_{\cdot|t})\nonumber\\
&+\|\Phi_1(x_{N-d-1|t})\|^2,
\end{align}
where we used the fact that $\|\Phi_1(x_t)\|^2=\sum_{k=0}^{d}\ell_y(x_{t+k})$, compare the BINF~\eqref{eq:BINF}. 
Hence, minimizing the cost $J_N^y$ yields the same optimal input as minimizing the look-ahead stage cost $\ell_d$ over a shorter prediction horizon and adding a positive-semi definite terminal cost.\\
\textbf{Part II. }
In the following, we analyse the closed loop using the shifted value function 
\begin{align}
\label{eq:open_loop_opt_d}
\tilde{V}_N(x_t):=&2V^{y}_N(x_t)-\|\Phi_1(x_t)\|^2\nonumber\\
\stackrel{\eqref{eq:open_loop_d}}{=}&J_{N-d-1}^d(x_{\cdot|t}^*,u_{\cdot|t}^*)+\|\Phi_1(x_{N-d-1|t}^*)\|^2.
\end{align}
Similar to Proposition~\ref{prop:stab}, the function $\tilde{V}_N$ also satisfies Assumption~\ref{ass:stab} with the same constant $\delta_s>0$ and $\tilde{\gamma}_s=2\gamma_s$.
Furthermore, due to Proposition~\ref{prop:minphase_detect}, we have $W(f(x,u))-W(x)\leq- \epsilon_o\sigma(x)+\ell_d(x,u)$. 
Consider the Lyapunov candidate function $Y_N^y(x):=W(x)+\tilde{V}_N(x)$, which also satisfies the upper and lower bound in~\eqref{eq:Lyap_1} from Theorem~\ref{thm:MPC} with $\tilde{\gamma}_{\overline{Y}}:=\max\{\tilde{\gamma}_s+\gamma_o,\overline{Y}/\delta_s\}$ and $\mathbb{X}_{\overline{Y}}=\{x\in\mathbb{X}|~Y^y_N(x)\leq \overline{Y}\}$.
Analogous to \eqref{eq:rotated}--\eqref{eq:turnpike} in the proof of Theorem~\ref{thm:MPC}, we use $W,\ell^y$ non-negative to ensure
\begin{align*}
\epsilon_o\sum_{k=0}^{N-d-2}\sigma(x_{k|t}^*)\stackrel{\eqref{eq:detect_grimm_2}}{\leq} W(x_t)+J_{N-d-1}^d(x_{\cdot|t}^*,u_{\cdot|t}^*)
\stackrel{\eqref{eq:open_loop_opt_d}}{\leq} Y_N^y(x_t). 
\end{align*}
Hence, there exists a $k_x\in\{1,\dots,N-d-2\}$, such that
\begin{align}
\label{eq:turnpike_term}
\sigma(x_{k_x|t}^*)\leq\frac{Y_N^y(x_t)}{\epsilon_o(N-d-2)}\leq\frac{ \min\{\overline{Y},\tilde{\gamma}_{\overline{Y}}\sigma(x_t)\}}{\epsilon_o(N-d-2)}.
\end{align}
Given $N\geq N_0:=2+d+\frac{\overline{Y}}{\delta_s\epsilon_o}$, this implies $x_{k_x|t}^*\in\mathbb{X}_\delta$. 
Denote $\ell^d_k:=\ell_d(x^*_{k|t},u^*_{k|t})$. 
Similar to~\eqref{eq:Value_dec}, we obtain
\begin{align*}
&\tilde{V}_N(x_{t+1})+\ell^d_0\leq \sum_{k=0}^{k_x-1}\ell_k^d+\tilde{V}_{N-k_x+1}(x^*_{k_x|t})\\
\stackrel{\eqref{eq:stab_grimm},\eqref{eq:turnpike_term}}{\leq}&\tilde{V}_N(x_t)+\dfrac{\tilde{\gamma}_s\tilde{\gamma}_{\overline{Y}}}{\epsilon_o(N-d-2)}\sigma(x_t). 
\end{align*}
The remainder of the proof is analogous to Theorem~\ref{thm:MPC} and Corollary~\ref{corol:main} resulting in
$\alpha_N:=1-\dfrac{\tilde{\gamma}_s\tilde{\gamma}_{\overline{Y}}}{\epsilon_o^2(N-d-2)}>0$ for $N>N_{\overline{Y}}:=\max\{N_0,N_1\}$, $N_1:=\tilde{\gamma}_s\tilde{\gamma}_{\overline{Y}}/\epsilon_o^2+d+2$.
\end{subequations}
\end{proof}
For the special case of minimum phase systems, this result shows that we can use the MPC formulation~\eqref{eq:MPC} without any input regularization to solve the output regulation problem. 
We point out that the zero dynamics are also vital in the classical output regulation literature~\cite{isidori2013zero} and while there exist results for non-minimum phase systems, ``most methods [$\dots$] only address systems in normal form with a (globally) stable zero dynamics''~\cite{priscoli2009dissipativity}.

We emphasize that in order to apply the proposed MPC scheme, we do not need to solve the regulator equations~\eqref{eq:reg}. 
This is only possible, since we do not use a positive definite stage cost $\ell$ or terminal ingredients, both of which would drastically simplify the theoretical analysis but would necessitate knowledge of $\pi_x(w)$.
Thus, compared to Proposition~\ref{prop:output_regulation} (classical solution), the proposed MPC scheme has the following advantages:
\begin{itemize}
\item Explicit solution to the regulator equations~\eqref{eq:reg} is not required,
\item No explicit stabilizing controller $\kappa$ (Ass.~\ref{ass:increm_stab}) is needed,
\item The MPC scheme enjoys a larger region of attraction. 
\end{itemize}
Compared to the MPC schemes in~\cite{falugi2013tracking,kohler2020nonlinear} and \cite{limon2018nonlinear,magni2005solution}, we do not pose any periodicity conditions on $w$ or restrict ourselves to constant values $w$.
The restriction to minimum phase systems will be relaxed in Section~\ref{sec:input} using a modified MPC formulation.

\subsection{Discussion}
\label{sec:minPhase_4}
\begin{remark}
\label{rk:MIMO_min_phase}
(MIMO systems)
The results in this section can be naturally extended to (square) MIMO systems with the output stage cost $\ell_y(x):=\|y\|_Q^2$ with $Q=\emph{diag}(q_i)$. 
In this case, the Byrnes-Isidori normal form (Ass.~\ref{ass:BINF}) contains integrator states $z^{i,k}$ and nonlinear maps $F_1^i$ for each output component $y^i$, with  different relative degrees $d_i$.  
Assumptions~\ref{ass:zero}--\ref{ass:minPhase} remain unchanged with $F_1=(F_1^1,\dots ,F_1^p)$. 
Proposition~\ref{prop:minphase_detect} remains true with the look-ahead cost $\ell_d(x_t,u_t)=\ell_y(x_t)+\sum_{i=1}^p q_i (y^i_{t+d_i+1})^2$. 
In Theorem~\ref{thm:minPhase} we consider $J_{N-\overline{d}-1}$ with $\overline{d}=\max_i d_i$ and obtain the different  non-negative ``terminal cost''  $\|\Phi_1(x_{N-\overline{d}-1|t})\|_{Q_d}^2+\sum_{i=1}^p\sum_{k=N-(\overline{d}-d_i)}^{N-1}q_i(y^i_{k|t})^2$ with $Q_d=\text{diag}(q_i)\in\mathbb{R}^{\sum_{i=1}^p (1+d_i)\times \sum_{i=1}^p (1+d_i)}$. The remainder of the proof remains unchanged.
\end{remark}

\begin{remark}
\label{rk:term_cost}
(Implicit terminal cost and extremely short prediction horizons)
The analysis contains a terminal cost $\|\Phi_1(x)\|^2=J^y_{d+1}(x_{N|t})$, which is locally equivalent to the value function $V^y_N$, $N\geq d+1$, with the optimal input $u=\alpha(x)$ due to Assumptions~\ref{ass:BINF}--\ref{ass:zero}. 
Thus, in the absence of constraints, a horizon $N>\overline{d}+1$ is sufficient to ensure stability for such minimum phase systems, which can be significantly less conservative than the usual bounds obtained in MPC without terminal constraints~\cite{grune2012nmpc,reble2012unconstrained,tuna2006shorter}.
We conjecture that stronger guarantees regarding  the prediction horizon $N_{\overline{Y}}$ and the suboptimality index $\alpha_N$ can be derived, even in the presence of state and input constraints.
\end{remark}

\begin{remark}
\label{rk:example_not_detect}
(Ass.~\ref{ass:detect} does not hold with $\ell_y$)
Although Theorem~\ref{thm:minPhase} ensures stability and utilizes a proof similar to Thm.~\ref{thm:MPC}, this was only possible by utilizing the look-ahead stage cost $\ell_d$ in the analysis.  Assumption~\ref{ass:detect} is in general not valid with the output stage cost $\ell_y$. 
Consider the trivial FIR filter $y_{t}=u_{t-2}$, with $x_t=(u_{t-1},u_{t-2})$, $\sigma(x)=\|x\|^2$, which clearly satisfies the conditions in Thm.~\ref{thm:minPhase}.
Considering Inequality~\eqref{eq:detect_grimm_2} for $x=(0,0)$ implies that $W((u,0))=0$ for all $u\in\mathbb{R}$. 
Now consider $x=(x_0,0)$ with $x_0\neq 0$ and $u\in\mathbb{R}$: Inequality~\eqref{eq:detect_grimm_2} implies $W((u,x_0))\leq W((x_0,0))-\epsilon_ox_0^2+0<0$, which contradicts the assumption that $W$ is non-negative. 
Thus, this system does not satisfy Assumption~\ref{ass:detect} with the output stage cost $\ell_y$.
\end{remark}

\section{Incremental input regularization}
\label{sec:input}
The MPC design using an input regularization (Sec.~\ref{sec:reg}) can only be implemented if the optimal feedforward input $\pi_u(w)$ is known, while the analysis in Section~\ref{sec:minPhase} is  restricted to minimum phase systems.
In this section we show how these restrictions can be relaxed by using an  incremental input regularization. 
The proposed formulation is presented in Section~\ref{sec:input_1}. 
The theoretical analysis is contained in Section~\ref{sec:input_2} and a discussion can be found in Section~\ref{sec:input_3}. 

\subsection{Incremental input formulation for periodic signals}
\label{sec:input_1}
The main idea is to reformulate the problem, such that the optimal feedforward input vanishes by using an incremental input regularization in the MPC formulation.
To allow for this reformulation, we focus on periodic exogenous signals.
\begin{assumption}
\label{ass:exo_periodic}
(Periodic exogenous signals)
There exists a known period length $T\in\mathbb{N}$, such that $w_{t+T}=w_t$ for all $t\geq 0$ with $w$ evolving according to~\eqref{eq:sys_exo}. 
\end{assumption}
In the classical literature on output regulation, compare e.g.~\cite{isidori1990output,castillo1993nonlinear,pavlov2006uniform,byrnes2012output}, the exosystem is assumed to be neutrally/poisson stable, which in the linear case reduces to either constant or harmonic/periodic exogenous signals $w$ and hence Assumption~\ref{ass:exo_periodic} holds with $T$ being the least common multiple of the different period lengths.  
We point out that the complexity of the following MPC formulation does not scale with the period length $T$, but depends on the prediction horizon $N$, and hence large values of $T$ are not a problem. 

Define a memory state for the past applied inputs as $\xi_t:=(u_{t-1},\dots u_{t-T})\in\mathbb{U}^T$. 
The proposed input regularized MPC formulation is based on the following optimization problem:
\begin{subequations}
\label{eq:MPC_input}
\begin{align}
\label{eq:MPC_input_1}
V^a_N(x^p_t,w_t,\xi_t):=&\min_{u_{\cdot|t}\in\mathbb{U} ^N}\sum_{k=0}^{N-1}\|y_{k|t}\|_Q^2+\|\Delta u_{k|t}\|_{R}^2\\
\label{eq:MPC_input_2a}
&x^p_{0|t}=x^p_t,\quad w_{0|t}=w_t,\\
\label{eq:MPC_input_2b}
& u_{-j|t}=u_{t-j},\\
\label{eq:MPC_input_3a}
&x^p_{k+1|t}=f^p(x_{k|t},u_{k|t},w_t),\\
\label{eq:MPC_input_3b}
&w_{k+1|t}=s(w_{k|t}),\\
\label{eq:MPC_input_3c}
& u_{k|t}=u_{k-T|t}+\Delta u_{k|t},\\
\label{eq:MPC_input_3d}
&y_{k|t}=h(x^p_{k|t},u_{k|t},w_{k|t}),\\
\label{eq:MPC_input_4}
&(x^p_{k|t},u_{k|t})\in\mathcal{Z}^p,\\
&k=0,\dots,N-1,\nonumber\\
&j=\max\{T-N,0\}+1,\dots,T,\nonumber
\end{align}
\end{subequations}
with positive definite matrices $Q$, $R$.  
The difference to the MPC formulation in Sections~\ref{sec:reg} and \ref{sec:minPhase} is the usage of an incremental input regularization $\|\Delta u\|_R^2$ that penalizes nonperiodic input signals $u$. 
Although the optimal feedforward solution $(\pi_x(w),\pi_u(w))$ is unknown, we know that $w$ and hence $\pi_u(w)$ is $T$-periodic (Ass.~\ref{ass:exo_periodic}). 
Thus, intuitively speaking, we know that the optimal solution should drive $(y,\Delta u)$ to the origin using the considered stage cost with incremental input regularization. 
We note that except for the case of constant exogenous signals (T=1, Sec.~\ref{sec:input_3}), the considered input cost penalizing non-periodicity differs structurally from existing trajectory tracking MPC formulations~\cite{koehler2020quasi,faulwasser2015nonlinear,falugi2013tracking,kohler2020nonlinear}, which require a known input (and state) reference. 
A similar penalty on nonperiodic trajectories was used in~\cite{gutekunst2020economic} for periodic optimal control.

\subsection{Theoretical analysis}
\label{sec:input_2}
In the following, we show that the input regularized optimization problem~\eqref{eq:MPC_input} is equivalent to Problem~\eqref{eq:MPC} for an augmented plant model with modified state and  input.  

\subsubsection*{Augmented plant}
Consider the augmented plant model
\begin{subequations}
\label{eq:periodic_model}
\begin{align}
\label{eq:periodic_model_1}
{x}^{ap}_t:=&(x^p_t,\xi_t),~{u}^{a}_t:=u_t-u_{t-T},
\end{align}
Define the block cyclic permutation matrix $E_0\in\mathbb{R}^{mT\times mT}$ and the selection matrices $E_1$, $E_2\in\mathbb{R}^{mT\times m}$:
{\small
\begin{align*}
E_0:=&\begin{pmatrix}
0_m&\hdots&0_m&I_m\\
I_m&\hdots &0_m&0_m\\
0_m&\ddots&\vdots&\vdots\\
\vdots&\ddots&0_m&0_m\\
0_m&0_m& I_m&0_m
\end{pmatrix},
E_1:=\begin{pmatrix}
I_m\\0_m\\\vdots\\0_m
\end{pmatrix},
E_2:=\begin{pmatrix}
0_m\\\vdots\\0_m\\I_m
\end{pmatrix}.\nonumber
\end{align*}}
This matrix satisfies $\Pi_{k=0}^{T-1}E_0=I_{mT}$ and the eigenvalues of  $E_0$ are $\lambda_k=e^{i 2\pi k/T}$, $k=0,\dots,T-1$, all with a geometric and algebraic multiplicity of $m$ (due to the block structure).
We note that $E_1^\top E_0=E_2^\top$. 
The dynamics of the memory state $\xi$ and the input $u$ can be compactly expressed as 
\begin{align} 
\label{eq:periodic_model_4}
&\xi_{t+1}=E_0 \xi_t+E_1{u}^a_t,~u_t=E_2^\top \xi_t+{u}^a_t.
\end{align}
Thus, the augmented plant dynamics ${f}^{ap}$ can be expressed as
\begin{align}
\label{eq:periodic_model_5}
{f}^{ap}({x}^{ap}_t,{u}^a_t,w_t):=&(f^p(x^p_t,E_2^\top\xi_t+{u}^a_t,w_t),E_0\xi_t+E_1{u}^a_t).
\end{align}
The corresponding constraint sets are given by
\begin{align}
\label{eq:periodic_model_6}
\mathbb{X}^{ap}:=&\{{x}^{ap}=(x^p,\xi)|~x\in\mathbb{X},~\xi\in\mathbb{U}^T\},~\mathbb{U}^a=\mathbb{R}^m,\\ 
\mathcal{Z}^{ap}:=&\{(x^p, \xi,{u}^a)\in\mathbb{X}^{ap}\times\mathbb{U}^a|~({x}^p,{u}^a+E_2^\top\xi)\in\mathcal{Z}^p\}. \nonumber
\end{align}
The overall augmented state ${x}^a=({x}^{ap},w)$ is subject to the exosystem dynamics~\eqref{eq:sys_exo} and the following output equation
\begin{align}
\label{eq:periodic_model_7}
y_t=h^{a}(x^{ap}_t,u^a_t,w_t):=&h(x^{p}_t,E_2^\top \xi_t+{u}^a_t,w_t).
\end{align}
\end{subequations}
\begin{lemma}
\label{lemma:augmented_MPC}
The optimization problem~\eqref{eq:MPC} with the augmented state ${x}^a=({x}^{ap},w)$, input ${u}^a$, the dynamics~\eqref{eq:periodic_model},
and stage cost $\ell^a({x}^a,{u}^a)=\|h^a(x^{ap},u^a,w)\|_{Q}^2+\|u^a\|_R^2$ using incremental input regularization is equivalent to the optimization problem~\eqref{eq:MPC_input}. 
\end{lemma}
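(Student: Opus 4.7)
The plan is to exhibit an explicit bijection between the feasible sequences of the two optimization problems under which the stage-wise costs and constraints coincide pointwise. Since Problem~\eqref{eq:MPC_input} optimizes over $u_{\cdot|t}$ (with $\Delta u_{\cdot|t}$ determined by~\eqref{eq:MPC_input_3c}) while the augmented problem~\eqref{eq:MPC} applied to~\eqref{eq:periodic_model} optimizes over $u^a_{\cdot|t}$, the crucial identity to verify is $u^a_{k|t}=\Delta u_{k|t}$ for all $k\in\{0,\dots,N-1\}$, once the initial memory state is fixed to $\xi_{0|t}=\xi_t=(u_{t-1},\dots,u_{t-T})$.

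First, I would establish by induction on $k$ that the augmented memory state acts exactly as a cyclic FIFO buffer of length $T$: namely $\xi_{k|t}=(u_{k-1|t},\dots,u_{k-T|t})$, using the convention $u_{-j|t}:=u_{t-j}$ from~\eqref{eq:MPC_input_2b}. The base case $k=0$ holds by construction. For the inductive step, one uses the block structure of $E_0$: the bottom $(T-1)$ block rows shift positions $1,\dots,T-1$ of $\xi_{k|t}$ down by one block, while the top block row $E_1^\top E_0=E_2^\top$ selects the last entry $u_{k-T|t}$, to which $E_1 u^a_{k|t}$ adds $u^a_{k|t}$ in the first position. Combined with the algebraic identity $u_{k|t}=E_2^\top\xi_{k|t}+u^a_{k|t}$ from~\eqref{eq:periodic_model_4}, this yields both $\xi_{k+1|t}=(u_{k|t},u_{k-1|t},\dots,u_{k-T+1|t})$ and $u^a_{k|t}=u_{k|t}-u_{k-T|t}=\Delta u_{k|t}$.

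Second, with the buffer identity established, I would verify that the remaining ingredients match term by term. The plant dynamics~\eqref{eq:periodic_model_5} reproduce~\eqref{eq:MPC_input_3a} exactly, and the exosystem dynamics~\eqref{eq:sys_exo} are unchanged. The output map satisfies $h^a(x^{ap}_{k|t},u^a_{k|t},w_{k|t})=h(x^p_{k|t},E_2^\top\xi_{k|t}+u^a_{k|t},w_{k|t})=h(x^p_{k|t},u_{k|t},w_{k|t})=y_{k|t}$, matching~\eqref{eq:MPC_input_3d}. The constraint set $\mathcal{Z}^{ap}$ in~\eqref{eq:periodic_model_6} was defined precisely so that $(x^p,u^a+E_2^\top\xi)=(x^p,u)\in\mathcal{Z}^p$, which recovers~\eqref{eq:MPC_input_4} (membership of $\xi_{k|t}$ in $\mathbb{U}^T$ also follows, since each entry equals some $u_{j|t}\in\mathbb{U}$). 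Finally, the stage cost simplifies as $\ell^a(x^a_{k|t},u^a_{k|t})=\|y_{k|t}\|_Q^2+\|\Delta u_{k|t}\|_R^2$, giving equality of the objectives in~\eqref{eq:MPC_1} and~\eqref{eq:MPC_input_1}.

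Third, the map $u_{\cdot|t}\mapsto u^a_{\cdot|t}$ defined by $u^a_{k|t}:=u_{k|t}-u_{k-T|t}$ and its inverse $u^a_{\cdot|t}\mapsto u_{\cdot|t}$ defined by $u_{k|t}:=E_2^\top\xi_{k|t}+u^a_{k|t}$ (propagating $\xi$ via~\eqref{eq:periodic_model_4}) are mutually inverse bijections between feasible sequences of the two problems, preserving the cost. This yields $V_N$ (of the augmented problem, evaluated at $x^a_t=(x^p_t,\xi_t,w_t)$) equal to $V_N^a(x^p_t,w_t,\xi_t)$, and the optimizers correspond in one-to-one fashion. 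The only real obstacle is the bookkeeping on indices in the inductive verification of the cyclic-buffer identity; once that is settled, every other equivalence is a straightforward unpacking of the definitions in~\eqref{eq:periodic_model}.
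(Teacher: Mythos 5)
Your proposal is correct: the cyclic-buffer induction, the identity $u^a_{k|t}=\Delta u_{k|t}$, and the term-by-term matching of dynamics, output, constraints and cost are exactly what the equivalence rests on. The paper states Lemma~\ref{lemma:augmented_MPC} without any proof, treating it as immediate from the construction of the augmented model~\eqref{eq:periodic_model}, so your argument is simply the routine verification the paper leaves implicit, and it checks out.
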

Lemma~\ref{lemma:augmented_MPC} ensures that we can analyse the output regulation MPC with incremental input regularization~\eqref{eq:MPC_input} using the results in Theorem~\ref{thm:MPC} and Corollary~\ref{corol:main}. 
We only need to show that the augmented system also satisfies  Assumptions~\ref{ass:regulator}, \ref{ass:increm_stab}, \ref{ass:IOSS}, which will be done in the following.
\begin{proposition}
\label{prop:periodic_cond}
Let Assumptions~\ref{ass:regulator}, \ref{ass:increm_stab} and \ref{ass:exo_periodic} hold.
The agumented system~\eqref{eq:periodic_model} also satisfies Assumptions~\ref{ass:regulator}, \ref{ass:increm_stab}.
\end{proposition}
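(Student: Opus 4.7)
The plan is to construct the regulator pair $(\pi_x^a, \pi_u^a)$ and the incremental pair $(V_s^a, \kappa^a)$ for the augmented plant~\eqref{eq:periodic_model} by lifting the corresponding objects of the original plant through the cyclic memory structure. The main technical issue is that the memory update is a pure shift---the permutation $E_0$ is unitary and in particular non-contractive---so the required contraction for $V_s^a$ on $\xi$ must be extracted indirectly from the plant input tracking $u \to v$ rather than from any intrinsic dynamics on the memory state.

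For Ass.~\ref{ass:regulator}, $T$-periodicity of $w$ makes $\pi_u(w_t)$ itself $T$-periodic, so on the regulator manifold one would have $u_t = u_{t-T}$ and hence $u_t^a = 0$. Set $\pi_u^a(w) := 0$ and $\pi_x^a(w) := (\pi_x(w), \pi_\xi(w))$ with
\begin{equation*}
\pi_\xi(w) := \bigl(\pi_u(s^{T-1}(w)),\, \pi_u(s^{T-2}(w)),\, \ldots,\, \pi_u(w)\bigr).
\end{equation*}
The output equation reduces to $h^a(\pi_x^a(w), 0, w) = h(\pi_x(w), \pi_u(w), w) = 0$ via $E_2^\top \pi_\xi(w) = \pi_u(w)$; the plant part of the state equation is exactly~\eqref{eq:reg1}; and the memory part amounts to the cyclic identity $E_0 \pi_\xi(w) = \pi_\xi(s(w))$, which holds since $\pi_u(s^T(w)) = \pi_u(w)$ by Ass.~\ref{ass:exo_periodic}. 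Interior feasibility transfers to $\mathcal{Z}^{ap}$ because $\mathbb{U}^a = \mathbb{R}^m$ imposes no constraint on $\pi_u^a$ and $(\pi_x(w), \pi_u(w)) \in \text{int}(\mathcal{Z}^p)$ by Ass.~\ref{ass:regulator}.

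For Ass.~\ref{ass:increm_stab}, given a reference $(z^{ap}, v^a, w)$ with $z^{ap} = (z^p, \xi_z)$, let $v := E_2^\top \xi_z + v^a$ denote the induced plant input reference and define $\kappa^a(x^{ap}, z^{ap}, w, v^a) := \kappa(x^p, z^p, w, v) - E_2^\top \xi$, so the applied plant input is $\kappa(x^p, z^p, w, v)$ and the plant part inherits the contraction~\eqref{eq:increm_a}. Note $\kappa^a(z^{ap}, z^{ap}, w, v^a) = v^a$ and $\kappa^a$ is Lipschitz by Lipschitz continuity of $\kappa$. Propose
\begin{equation*}
V_s^a(x^{ap}, z^{ap}, w) := \alpha\, V_s(x^p, z^p, w) + \sum_{j=1}^T \mu_j \|\xi^j - \xi_z^j\|^2,
\end{equation*}
with geometrically decaying weights $\mu_j := \mu_0 \gamma^{j-1}$, $\gamma \in (\rho_s, 1)$, $\mu_0 > 0$, and $\alpha > 0$ to be fixed below. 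The quadratic bounds~\eqref{eq:increm_b} for $V_s^a$ follow directly from those for $V_s$ together with $\|x^{ap} - z^{ap}\|^2 = \|x^p - z^p\|^2 + \|\xi - \xi_z\|^2$.

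The contraction is the crux. The updates $\xi^1(t+1) = u_t$ and $\xi^j(t+1) = \xi^{j-1}(t)$ for $j \geq 2$ (and analogously for $\xi_z$ with $v_t$ in place of $u_t$), combined with $\mu_{k+1} - \mu_k = -(1-\gamma)\mu_k$, yield the telescoping bound
\begin{equation*}
\sum_{j=1}^T \mu_j \|\xi^j(t+1) - \xi_z^j(t+1)\|^2 \leq \gamma \sum_{j=1}^T \mu_j \|\xi^j(t) - \xi_z^j(t)\|^2 + \mu_0 \|u_t - v_t\|^2.
\end{equation*}
Lipschitz continuity of $\kappa$ with constant $L_\kappa$ gives $\|u_t - v_t\|^2 \leq L_\kappa^2 \|x^p_t - z^p_t\|^2 \leq (L_\kappa^2/c_{s,l}) V_s$, so choosing $\alpha \geq \mu_0 L_\kappa^2/(c_{s,l}(\gamma - \rho_s))$ delivers $V_s^a(t+1) \leq \gamma V_s^a(t)$, i.e.,~\eqref{eq:increm_a} with $\rho_s^a := \gamma \in (0,1)$. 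The local region $V_s^a \leq \alpha\delta_{loc}$ ensures $V_s \leq \delta_{loc}$, keeping Ass.~\ref{ass:increm_stab} applicable. The geometric decay of $\mu_j$ is chosen precisely so that telescoping the cyclic shift produces a strictly negative definite quadratic term in the full vector $\xi - \xi_z$, while the only unsigned residual enters through the scalar $\|u_t - v_t\|^2$ that is absorbed by enlarging $\alpha$.
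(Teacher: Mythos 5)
Your proof is correct and follows essentially the same route as the paper: the same lift $\pi_x^a(w)=(\pi_x(w),\pi_u(s^{T-1}(w)),\dots,\pi_u(w))$, $\pi_u^a(w)=0$, the same feedback $\kappa^a=\kappa-E_2^\top\xi$, and a composite Lyapunov function that adds to $V_s$ a quadratic form in $\xi-\Xi$ whose shift-induced residual $\|u-v\|^2$ is absorbed via the Lipschitz bound on $\kappa$. The only (cosmetic) difference is that the paper invokes the abstract existence of a quadratic i-ISS function $\|\cdot\|^2_{P_{\xi,s}}$ for the nilpotent FIR dynamics, whereas you construct $P_{\xi,s}$ explicitly as a diagonal matrix with geometrically decaying weights.
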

\begin{proof}
\textbf{Assumption~\ref{ass:regulator}}: 
Define the recursive composition $s^{k+1}:=s\circ s^{k}$ with the dynamics $s$ of the exosystem in~\eqref{eq:sys_exo}. 
Given that the original plant satisfies the regulator equations (Ass.~\ref{ass:regulator}) and the exogenous signal is $T$-periodic, the augmented system also satisfies the regulator equations~\eqref{eq:reg} with ${\pi}^a_x(w)=(\pi_x(w),\pi_u(s^{T-1}(w)),\dots,\pi_u(w))$, ${\pi}^a_u(w)=0$. Similarly, we have $({\pi}^a_x(w),{\pi}^a_u(w))\in\text{int}(\mathcal{Z}^{ap})$ for all $w\in\mathbb{W}$.\\
\textbf{Assumption~\ref{ass:increm_stab}:}  
Consider $(x^{ap},u^a,z^{ap},v^a,w)\in\mathcal{Z}^{ap}\times\mathcal{Z}^{ap}\times\mathbb{W}$  with ${z}^{ap}=(z^p,\Xi)$, $x^{ap}=(x^p,\xi)$, $\xi,\Xi\in\mathbb{U}^T$. 
We obtain the  input $u$ and plant dynamics $x^p$ from Assumption~\ref{ass:increm_stab} with
\begin{align*}
{u}^a={\kappa}^a({x}^{ap},z^{ap},w,v^a):= \kappa(x^p,z^p,w,v)-E_2^\top\xi.
\end{align*} 
Denote that successor state of $\xi$ by
$\xi^+=E_0\xi+E_1u_a=\tilde{E}_0\xi+E_1u$,  $\tilde{E}_0:=E_0-E_1E_2^\top$. 
The matrix $\tilde{E}_0$ corresponds to the dynamics of a finite impulse response (FIR) system and is thus nilpotent. 
Hence, the $\xi$ dynamics are incrementally input to state stable (i-ISS) w.r.t $u$ with an arbitrarily small contraction rate and thus w.l.o.g. with $\rho_s\in(0,1)$ from Assumption~\ref{ass:increm_stab}, i.e., there exists a positive definite matrix $P_{\xi,s}=P_{\xi,s}^\top\succ 0$, such that
\begin{align}
\label{eq:increm_ISS}
\|\xi^+-\Xi^+\|_{P_{\xi,s}}^2\leq \rho_s\|\xi-\Xi\|_{P_{\xi,s}}^2+\|{u}-{v}\|^2.
\end{align}
Given $\kappa$ Lipschitz continuous with $\kappa(z^p,z^p,w,v)=v$ and~\eqref{eq:increm_b}, we have $\|u-v\|^2\leq L_\kappa V_s(x^p,z^p,w)$ with some $L_\kappa> 0$.
For any ${\rho}_{a,s}\in(\rho_s,1)$, choose $c=\frac{{\rho}_{a,s}-\rho_s}{L_\kappa}>0$. 
The joint incremental Lyapunov function ${V}_{a,s}({x}^{ap},z^{ap},w):=V_s(x^p,z^p,w)+c\cdot \|\xi-\Xi\|_{P_{\xi,s}}^2$ satisfies~\eqref{eq:increm_a} with
\begin{align*}
&V_{a,s}({f}^{ap}({x}^{ap},{u}^a,w),{f}^{ap}({z}^{ap},{v}^a,w),s(w))\\
\stackrel{\eqref{eq:increm_a},\eqref{eq:increm_ISS}}{\leq}& (\rho_s+cL_\kappa) V_s(x^p,z^p,w)+\rho_s \|\xi-\Xi\|_{P_{\xi,s}}^2\\
\leq& {\rho}_{a,s} {V}_{a,s}({x}^{ap},z^{ap},w).
\end{align*}
Conditions~\eqref{eq:increm_b} holds with ${c}_{a,s,u}=\max\{c_{s,u},c\lambda_{\max}(P_{\xi,s})\}$ and ${c}_{a,s,l}=\min\{c_{s,l},c\lambda_{\min}(P_{\xi,s})\}$.
\end{proof}
The main benefit of this formulation is that we do not need $\pi_u(w)$ for the implementation, since the incremental input formulation in combination with the assumed periodicity guarantees $\pi^a_u(w)=0$.

\subsubsection*{Detectability and the nonresonance condition}
The augmented plant ${x}^{ap}$ is a series connection of two detectable systems: $x^p$ and $\xi$ with $u$ being the respective input and output. 
To ensure that this system also satisfies the detectability condition (Ass.~\ref{ass:IOSS}), we need an additional \textit{nonresonance condition}. 
\begin{assumption}
\label{ass:nonres_nonlin}
(Nonlinear nonresonance condition)
There exist a  continuous incremental storage function $V_R:\mathbb{X}^{ap}\times\mathbb{X}^{ap}\times\mathbb{W}\rightarrow\mathbb{R}_{\geq 0}$ and constants $c_{R,u}$, $c_{R}>0$, such that for any $(z^{ap},v^a,x^{ap},u^a,w)\in\mathcal{Z}^{ap}\times\mathcal{Z}^{ap}\times\mathbb{W}$, we have
\begin{subequations}
\label{eq:nonlin_nonresonance}
\begin{align}
\label{eq:nonlin_nonresonance_1}
&V_R({x}^{ap},z^{ap},w)\leq c_{R,u}\|{x}^{ap}-{z}^{ap}\|^2,\\
\label{eq:nonlin_nonresonance_2}
&V_{R}({f}^{ap}({x}^{ap},{u}^a,w),{f}^{ap}({z}^{ap},{v}^a,w),s(w))\\
\leq&V_R({x}^{ap},z^{ap},w)-\|u-v\|^2\nonumber\\
&+ c_{R}(\|{u}^a-{v}^a\|^2+\|h^a(x^{ap},u^a,w)-h^a(z^p,v^a,w)\|^2),\nonumber
\end{align}
\end{subequations}
with ${x}^{ap}=(x^p,\xi)$, ${z}^{ap}=(z^p,\Xi)$, $u=E_2^\top \xi+{u}^a$, and $v=E_2^\top \Xi+{v}^a$.
\end{assumption}
Given that condition~\eqref{eq:nonlin_nonresonance} corresponds to an incremental dissipativity condition, it can be verified using the results in~\cite{verhoek2020convex} based on differential dissipativity.
Loosely speaking, conditions~\eqref{eq:nonlin_nonresonance} imply that if two systems have a similar initial condition, produce a similar output and are driven by a similar incremental input ${u}^a/{v}^a$, then the  input $u/v$ applied to the plant has to be similar. 
In particular, if both systems are driven by a periodic input $u,v$ and generate the same output trajectory $y$, then the two periodic input trajectories $u,v$ must be equivalent. 
Thus, this condition excludes the possibility of two distinct periodic inputs $u,v$ resulting in the same output $y$. 
This condition seems to be a relaxed version of \textit{input detectability/observability}, as for ${u}^a={v}^a=0$ (periodic inputs) it essentially requires that $y\equiv 0$ implies  $u\equiv 0$ (assuming zero initial conditions), similar to~\cite[Def.~3]{hou1998input}. 
In the linear case, this is equivalent to assuming that the poles generating here a $T$-periodic input signal $u$ with~\eqref{eq:periodic_model_4} (assuming ${u}^a=0$) are not cancelled by zeros of the plant, which corresponds to the well established \textit{nonresonance condition}, compare Section~\ref{sec:linear} for a detailed proof.
We point out that in~\cite{marconi2004non} a different nonlinear extension of the \textit{nonresonance condition}  has been proposed, which is characterized using a rank condition on the lie derivatives as opposed to a dissipativity characterization. 
Although both characterizations correspond to the classical \textit{nonresonance condition} in the linear case, the considered formulation using dissipation inequalities with a storage function allows us to directly use an i-IOSS Lyapunov function to establish detectability of the augmented plant, as shown in the following proposition.  
\begin{proposition}
\label{prop:non_res}
Let Assumptions~\ref{ass:IOSS}, \ref{ass:exo_periodic} and \ref{ass:nonres_nonlin} hold.
The augmented system~\eqref{eq:periodic_model} satisfies Ass.~\ref{ass:IOSS}. 
\end{proposition}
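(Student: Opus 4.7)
The plan is to construct an i-IOSS Lyapunov function on the augmented state space as a weighted sum
\[
V_o^a(x^{ap}, z^{ap}, w) := \eta V_o(x^p, z^p, w) + \alpha_1 V_R(x^{ap}, z^{ap}, w) + \alpha_2 \|\xi - \Xi\|^2_{P_\xi},
\]
with positive weights $\eta, \alpha_1, \alpha_2$ and a symmetric positive definite matrix $P_\xi$ to be fixed below. The quadratic sandwich~\eqref{eq:IOSS_1} for $V_o^a$ then follows from $V_R \geq 0$, the bounds~\eqref{eq:IOSS_1} for $V_o$, and the eigenvalue bounds on $P_\xi$ (for the lower bound), together with $\|x^{ap}-z^{ap}\|^2 = \|x^p-z^p\|^2 + \|\xi-\Xi\|^2$ and the upper bound~\eqref{eq:nonlin_nonresonance_1} on $V_R$ (for the upper bound).

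The contraction inequality~\eqref{eq:IOSS_2} is assembled from three ingredients. First, noting that the actual plant inputs are $u = E_2^\top \xi + u^a$ and $v = E_2^\top \Xi + v^a$, Ass.~\ref{ass:IOSS} delivers $V_o(x^{p+}) \leq \rho_o V_o + c_{o,1}\|u-v\|^2 + c_{o,2}\|y - y_z\|^2$. Second, the $\xi$-dynamics written in terms of $u$ read $\xi^+ - \Xi^+ = \tilde{E}_0(\xi-\Xi) + E_1(u-v)$ with $\tilde{E}_0 = E_0 - E_1 E_2^\top$ nilpotent of order $T$; a standard Lyapunov construction (a weighted sum of $(\tilde{E}_0^k)^\top \tilde{E}_0^k$, compare~\eqref{eq:increm_ISS}) supplies, for any prescribed $\bar{\rho}\in(0,1)$, a matrix $P_\xi \succ 0$ with $\|\xi^+-\Xi^+\|^2_{P_\xi} \leq \bar{\rho}\|\xi-\Xi\|^2_{P_\xi} + c_u \|u-v\|^2$. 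Third, Ass.~\ref{ass:nonres_nonlin} gives $V_R(x^{ap+}) \leq V_R - \|u-v\|^2 + c_R\|u^a-v^a\|^2 + c_R\|y-y_z\|^2$. Adding the three with coefficients $\eta,\alpha_1,\alpha_2$ and selecting $\alpha_1 := \eta c_{o,1} + \alpha_2 c_u$ cancels every $\|u-v\|^2$ term, leaving
\[
V_o^a(x^{ap+}) \leq \eta \rho_o V_o + \alpha_1 V_R + \alpha_2 \bar{\rho}\|\xi-\Xi\|^2_{P_\xi} + (\eta c_{o,2} + \alpha_1 c_R)\|y-y_z\|^2 + \alpha_1 c_R\|u^a-v^a\|^2.
\]

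The main obstacle is that the storage function $V_R$ supplied by Ass.~\ref{ass:nonres_nonlin} has no intrinsic decay, so the middle term $\alpha_1 V_R$ on the right is not of the form $\rho_o^a \cdot \alpha_1 V_R$. To close the argument I would use the quadratic overbound $V_R \leq c_{R,u}\|x^{ap}-z^{ap}\|^2 \leq \tfrac{c_{R,u}}{c_{o,l}}V_o + \tfrac{c_{R,u}}{\lambda_{\min}(P_\xi)}\|\xi-\Xi\|^2_{P_\xi}$ to redistribute $\alpha_1 V_R$ across $V_o$ and $\|\xi-\Xi\|^2_{P_\xi}$, thereby reducing the target inequality $V_o^a(x^{ap+}) \leq \rho_o^a V_o^a + c_{o,1}^a\|u^a-v^a\|^2 + c_{o,2}^a\|y-y_z\|^2$ to two scalar coefficient matches on $V_o$ and on $\|\xi-\Xi\|^2_{P_\xi}$. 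Picking first $\bar{\rho}$ small (legitimate, as enlarging $P_\xi$ only tightens the subsequent constants through $\lambda_{\min}(P_\xi)$), then $\eta$ large relative to $\alpha_2 c_u$ (so that $\alpha_1/\eta$ stays bounded), and finally $\rho_o^a \in (\max\{\rho_o,\bar{\rho}\}, 1)$ close enough to $1$ simultaneously satisfies both conditions and yields~\eqref{eq:IOSS_2} with $c_{o,1}^a = \alpha_1 c_R$, $c_{o,2}^a = \eta c_{o,2} + \alpha_1 c_R$. The conceptual crux is recognizing that the nilpotency of $\tilde{E}_0$ contributes an arbitrarily large decay margin for $\xi$ at the cost of a $\|u-v\|^2$ disturbance, which is precisely what the nonresonance storage function $V_R$ is designed to absorb into $\|u^a-v^a\|^2$ and $\|y-y_z\|^2$; everything after that is bookkeeping of constants.
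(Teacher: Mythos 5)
Your proposal is correct and follows essentially the same route as the paper: a weighted sum of the plant i-IOSS function $V_o$, a quadratic Lyapunov function for the $\xi$-memory subsystem, and the nonresonance storage $V_R$, with the weight on $V_R$ chosen so that its $-\|u-v\|^2$ supply cancels the $\|u-v\|^2$ terms produced by the other two pieces, and the non-decaying $V_R$ contribution absorbed via its quadratic upper bound~\eqref{eq:nonlin_nonresonance_1} into the decay of the rest. The only cosmetic difference is that you model the $\xi$-dynamics as driven by $u$ through the nilpotent matrix $\tilde{E}_0=E_0-E_1E_2^\top$, whereas the paper uses a quadratic i-IOSS function for $\xi$ with input $u^a$ and output $u$; both yield the same cancellation and the same final constants up to bookkeeping.
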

\begin{proof}
\begin{subequations}
\label{eq:proof_non_res}
Consider $(x^{ap},u^a,z^{ap},v^a,w)\in\mathcal{Z}^{ap}\times\mathcal{Z}^{ap}\times\mathbb{W}$  with ${z}^{ap}=(z^p,\Xi)$, $x^{ap}=(x^p,\xi)$, $\xi,\Xi\in\mathbb{U}^T$. 
First note that the linear dynamics of $\xi$ with the input ${u}^a$ and the output $u=E_2^\top \xi+{u}^a$ are observable. 
Thus, there exists a quadratic  i-IOSS Lyapunov function $V_\xi(\xi,\Xi)=\|\xi-\Xi\|_{P_{\xi,o}}^2$ (cf.~\cite{cai2008input}) satisfying 
\begin{align}
\label{eq:IOSS_xi}
V_\xi(\xi^+,\Xi^+)\leq \rho_\xi V_\xi(\xi,\Xi)+\|{u}^a-{v}^a\|^2+\|u-v\|^2,
\end{align}
with $\rho_\xi\in(0,1)$, $\xi^+=E_0\xi+E_1{u}^a$, $u=E_2^\top \xi+{u}^a$, $\Xi^+=E_0\Xi+E_1{v}^a$, $v=E_2^\top \Xi+{v}^a$. 
Consider the candidate i-IOSS Lyapunov function
\begin{align}
\label{eq:IOSS_joint}
&{V}_{a,o}({x}^{ap},z^{ap},w)\\
:=&V_o(x^p,z^p,w)+V_\xi(\xi,\Xi)+c_2 V_R({x}^{ap},z^{ap},w),\nonumber
\end{align}
with  $c_2=c_{o,1}+1$. 
The lower and upper bounds~\eqref{eq:IOSS_1} follow directly with with ${c}_{a,o,u}:=\max\{c_{o,u},\lambda_{\max}(P_{\xi,0})\}+c_2 c_{R,u}$ and ${c}_{a,o,l}:=\min\{c_{o,l},\lambda_{\min}(P_{\xi,o})\}$. 
The i-IOSS condition~\eqref{eq:IOSS_2} holds with 
\begin{align*}
&{V}_{a,o}({f}^{ap}({x}^{ap},{u}^a,w),{f}^{ap}({z}^{ap},{v}^a,w),s(w))\\
&-{V}_{a,o}({x}^{ap},z^{ap},w)\\
\stackrel{\eqref{eq:IOSS_2},\eqref{eq:nonlin_nonresonance_2},\eqref{eq:IOSS_xi}}{\leq} &-(1-\rho_o) V_o(x^p,z^p,w)+(c_{o,1}+1-c_2)\|u-v\|^2\\
&+(c_{o,2}+c_2c_{R})\|h(x^p,u,w)-h(z^p,v,w)\|^2\\
&-(1-\rho_\xi)V_\xi(\xi,\Xi)+(1+c_2c_{R})\|{u}^a-{v}^a\|^2\\
\stackrel{\eqref{eq:IOSS_1}}{\leq}
&-{\epsilon}_a(\|x^p-z^p\|^2+\|\xi-\Xi\|^2)+{c}_{a,o,1}\|{u}^a-{v}^a\|^2\\
&+{c}_{a,o,2}\|h(x^p,u,w)-h(z^p,v,w)\|^2\\
\leq &-(1-\rho_{a,o}) {V}_{a,o}({x}^{ap},z^{ap},w)+{c}_{a,o,1}\|{u}^a-{v}^a\|^2\\
&+{c}_{a,o,2}\|h^{ap}(x^{ap},u^a,w)-h^{ap}(z^{ap},v^a,w)\|^2,
\end{align*}
with
$\epsilon_a:=\min\{(1-\rho_o)c_{o,l},(1-\rho_\xi)\lambda_{\min}(P_{\xi,o})\}>0$, 
 $\rho_{a,o}:=1-\frac{\epsilon_a}{{c}_{a,o,u}}\in(0,1)$, ${c}_{a,o,1}=1+c_2c_{R}$, ${c}_{a,o,2}=c_{o,2}+c_2c_{R}$.
\end{subequations}
\end{proof}
\subsubsection*{Final result}
With Propositions~\ref{prop:periodic_cond}--\ref{prop:non_res} and Lemma~\ref{lemma:augmented_MPC}, we can  summarize the theoretical properties of the incremental input regularized MPC scheme~\eqref{eq:MPC_input}.
\begin{corollary}
\label{corol:input}
Suppose the plant~\eqref{eq:sys} satisfies Assumptions~\ref{ass:regulator},  \ref{ass:increm_stab}, \ref{ass:IOSS}, and Assumptions~\ref{ass:exo_periodic}--\ref{ass:nonres_nonlin} hold. 
Suppose further that $\pi_x$, $\pi_u$, $s$ and $h$ are Lipschitz continuous. 
For any constant $\overline{Y}>0$,  there exist a constant $N_{\overline{Y}}$, such that for $N>N_{\overline{Y}}$ and  initial condition $(x_0,\xi_0)={x}^a_0\in \mathbb{X}^a_{\overline{Y}}:=\{{x}^a\in\mathbb{X}^a|~V_N^a(x^a)+W(x^a)\leq \overline{Y}\}$, the MPC problem~\eqref{eq:MPC_input} is recursively feasible, the constraints are satisfied  and the (augmented) regulator manifold $\mathcal{A}^{a}:=\{{x}^a|~{x}^{ap}={\pi}^a_x(w)\}$ is exponentially stable for the resulting closed loop.
\end{corollary}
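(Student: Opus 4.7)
The plan is to reduce the statement to an application of Corollary~\ref{corol:main} on the augmented system~\eqref{eq:periodic_model}. The machinery for this reduction has already been prepared in the preceding lemmata and propositions, so the proof consists mainly of checking that all the ingredients required by Corollary~\ref{corol:main} transfer to the augmented setting.

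First I would invoke Lemma~\ref{lemma:augmented_MPC} to conclude that the incremental input regularized problem~\eqref{eq:MPC_input} is equivalent to the standard output regulation MPC~\eqref{eq:MPC} applied to the augmented plant with state $x^a=(x^{ap},w)$, input $u^a$, dynamics $(f^{ap},s)$, constraints $\mathcal{Z}^{ap}$ and stage cost $\ell^a(x^a,u^a)=\|h^a(x^{ap},u^a,w)\|_Q^2+\|u^a\|_R^2$. In particular, the value functions coincide, i.e. $V^a_N(x^p,w,\xi)=V_N(x^a)$ where $V_N$ is the value function~\eqref{eq:MPC_1} for the augmented system, and the closed loops agree through~\eqref{eq:periodic_model_4}.

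Next I would verify the hypotheses of Corollary~\ref{corol:main} for this augmented system. Assumptions~\ref{ass:regulator} and \ref{ass:increm_stab} carry over by Proposition~\ref{prop:periodic_cond}, and Assumption~\ref{ass:IOSS} carries over by Proposition~\ref{prop:non_res}. What remains to check are the Lipschitz continuity requirements: the augmented regulator map $\pi^a_x(w)=(\pi_x(w),\pi_u(s^{T-1}(w)),\dots,\pi_u(w))$ is Lipschitz because it is a finite composition of the assumed Lipschitz maps $\pi_x,\pi_u,s$, and the augmented output map $h^a(x^{ap},u^a,w)=h(x^p,E_2^\top\xi+u^a,w)$ is Lipschitz since $h$ is. With these in hand, Corollary~\ref{corol:main} directly yields a constant $N_{\overline{Y}}$, recursive feasibility, constraint satisfaction in $\mathcal{Z}^{ap}$ (which by construction~\eqref{eq:periodic_model_6} is equivalent to $(x^p_t,u_t)\in\mathcal{Z}^p$), and exponential stability of the augmented regulator manifold $\mathcal{A}^a=\{x^a\mid x^{ap}=\pi^a_x(w)\}$ for the closed loop of the augmented system.

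The main subtlety, rather than obstacle, is bookkeeping: one must make sure that the sublevel set $\mathbb{X}^a_{\overline{Y}}$ defined in terms of $V^a_N+W$ matches the sublevel set used in Corollary~\ref{corol:main} for the augmented system, which follows from the value function identity of Lemma~\ref{lemma:augmented_MPC}, and that the exponential stability of $\mathcal{A}^a$ translates back to the original coordinates $(x^p_t,\xi_t,w_t)$ via the correspondence in~\eqref{eq:periodic_model_4}. No further estimates are required beyond this verification, so the proof is essentially a one-line reduction followed by a careful identification of symbols.
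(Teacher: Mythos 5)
Your proposal is correct and follows essentially the same route as the paper: invoke Lemma~\ref{lemma:augmented_MPC} for the equivalence with the augmented problem, use Propositions~\ref{prop:periodic_cond}--\ref{prop:non_res} to transfer Assumptions~\ref{ass:regulator}, \ref{ass:increm_stab}, \ref{ass:IOSS}, check Lipschitz continuity of $\pi^a_x$ and $h^a$, and apply Corollary~\ref{corol:main}. The extra bookkeeping you spell out (the sublevel-set identification and the composition argument for $\pi^a_x$) is implicit in the paper's proof and entirely consistent with it.
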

\begin{proof}
 Lemma~\ref{lemma:augmented_MPC} ensures that the MPC problem~\eqref{eq:MPC_input} corresponds to the MPC problem~\eqref{eq:MPC} for an augmented plant ${x}^{ap}$ and Propositions~\ref{prop:periodic_cond}--\ref{prop:non_res} ensure that this augmented plant satisfies Assumptions~\ref{ass:regulator}, \ref{ass:increm_stab}, \ref{ass:IOSS} with $\pi_x^a$ Lipschitz continuous.
 Thus, the closed-loop properties follow from Corollary~\ref{corol:main}.
\end{proof}

 \subsection{Discussion}
\label{sec:input_3}
The computational demand of the proposed approach scales with the prediction horizon $N$, but not directly with the period length $T$. 
However, the sufficient prediction horizon $N_{\overline{Y}}$ may increase compared to bounds derived for the MPC scheme in Section~\ref{sec:reg}.

\subsubsection*{Existing MPC solutions for periodic problems}
For the special case of periodic signals $w$, there also exist competing approaches to solve the regulator problem. Given $w_0$ and the period length $T$, the $T$-periodic trajectory $(\pi_x(w_t),\pi_u(w_t))$ can be obtained by solving one (potentially large) nonlinear program (NLP), as suggested in~\cite{falugi2013tracking}. 
Then the output regulation problem reduces to the problem of stabilizing a given state and input trajectory, for which established MPC approaches with and without terminal ingredients exist, compare~\cite{koehler2020quasi,faulwasser2015nonlinear} and \cite{kohler2018nonlinear}, respectively. 
If we consider online changing operating conditions or the error feedback setting (Remark~\ref{rk:error_feedback}), the estimates for $w_t$ may change online and thus the large scale NLP would have to be repeatedly solved during online operation.
The problem of online recomputing a periodic reference trajectory can be integrated in the MPC formulation using artificial reference trajectories, as e.g. done in~\cite{kohler2020nonlinear}. 
The additional complexity of recomputing a periodic solution can be further reduced using a partially decoupled MPC design~\cite[Sec.~3.4]{kohler2020nonlinear}. 

\subsubsection*{Offset-free setpoint tracking - incremental input penalty} 
The problem of offset-free setpoint tracking is a special case with $s(w)=w$ and $T=1$.
In this case, $\Delta u$ penalizes the change in the control input $u$, which is quite common in the MPC literature, especially in case of offset-free setpoint tracking, compare e.g.~\cite{betti2013robust,magni2005solution,muske2002disturbance} and~\cite[Cor.~4]{magni2001output}. 
Thus, the proposed formulation is rather intuitive and similar to existing standard approaches for tracking MPC. 
For comparison, in~\cite{magni2005solution} a linear dynamic controller is used to characterize the terminal cost and set and in~\cite{limon2018nonlinear} artificial setpoints are used to track changing setpoints. 
The issue of estimating the disturbances has, e.g., been treated in~\cite{muske2002disturbance,morari2012nonlinear} for linear and nonlinear systems with disturbance observers and can also be treated in the proposed framework, compare Remark~\ref{rk:error_feedback}. 
\subsubsection*{Nonresonance condition = tracking condition}
In the case of nonlinear setpoint tracking MPC, it is often assumed that there exists a unique (Lipschitz continuous) map from any output $y$ to a corresponding steady state and input $(x^p_s,u_s)$, compare e.g. \cite[Ass.~1]{limon2018nonlinear} or \cite[Ass.~1]{magni2005solution}. 
Given $f^p,h$ continuously differentiable, this condition is equivalent to a rank condition on the linearized system (cf.~\cite[Remark~1]{limon2018nonlinear}, \cite[Lemma~1.8]{rawlings2017model}), which is equivalent to the nonresonance condition for constant exogenous signals, compare~\cite{marconi2004non}. 
We point out that the rank-based and dissipation-based nonresonance characterizations are equivalent in the linear case (cf.~Prop.~\ref{prop:linear_equiv} below).
Thus, intuitively this tracking condition~\cite[Ass.~1]{limon2018nonlinear} is strongly related (if not equivalent) to the dissipation-based characterization in Assumption~\ref{ass:nonres_nonlin} for $T=1$. 
Furthermore, a similar characterization to~\cite[Ass.~1]{limon2018nonlinear} can be used for periodic trajectories~\cite[Ass.~6]{kohler2020nonlinear}, which seems to be an alternative characterization for the property in Assumption~\ref{ass:nonres_nonlin}.

\section{Special case - linear systems}
\label{sec:linear}
In this section, we consider the special case of linear systems
\begin{subequations}
\label{eq:sys_lin}
\begin{align}
\label{eq:sys_dyn_lin}
x^p_{t+1}&=Ax^p_t+Bu_t+P_x w_t,\\
\label{eq:sys_exo_lin}
w_{t+1}&=Sw_t,\\
\label{eq:sys_output_lim}
y_t&=Cx^p_t+Du_t-P_y w_t,
\end{align}
\end{subequations}
and discuss how the Assumptions in Sections~\ref{sec:reg}--\ref{sec:input} simplify. 
In addition, Proposition~\ref{prop:linear_equiv} shows that in the linear case the dissipation characterization in Assumption~\ref{ass:nonres_nonlin} is equivalent to the classical rank based nonresonance condition.
\subsection{Stabilizability/Detectability}
Assumption~\ref{ass:increm_stab} reduces to stabilizability of $(A,B)$ and Assumption~\ref{ass:IOSS} reduces to detectability of $(A,C)$ (cf.~\cite{cai2008input}).
The regulator equations~\eqref{eq:reg} (Ass.~\ref{ass:regulator}) reduce to
\begin{align}
\label{eq:reg_lin}
\Pi S=A\Pi+B\Gamma+P_x,~
0=C\Pi+D\Gamma-P_y,
\end{align} 
with $\pi_x(w)=\Pi w$, $\pi_u(w)=\Gamma w$. 
Given solvability of~\eqref{eq:reg_lin} and the input-output stage cost $\ell$~\eqref{eq:ell_reg}, Assumption~\ref{ass:stab} and Assumption~\ref{ass:detect} reduce to stabilizability of $(A,B)$ (cf. Prop.~\ref{prop:stab}) and detectability of $(A,C)$  (cf. Prop.~\ref{prop:detect}).  
Satisfaction of Assumption~\ref{ass:detect} for $R\succ 0$ and $(A,C)$ detectable has also been shown in~\cite[Corollary~2]{hoger2019relation}. 
In case of polytopic constraints $\mathcal{Z}^p$ the MPC optimization problems in Sections~\ref{sec:reg}--\ref{sec:input} reduce to standard quadratic programs (QPs).

\subsection{Nonresonance  condition}
Consider the case where the matrix $S$ has only eigenvalues $\lambda$  of the form $\lambda=e^{2\pi i k/T}$ with some period length $T\in\mathbb{N}$ (Ass.~\ref{ass:exo_periodic}), which encompasses constant and sinusoidal exogenous signals $w$. 
Correspondingly, all the eigenvalues are on the unit circle, i.e., $|\lambda|=1$, as is standard in the literature~\cite[(A1)]{castillo1993nonlinear}, \cite[H1]{isidori1990output}. 
For simplicity, we consider square systems, i.e., $m=p$. 
To characterize the transmission zeros of a linear transfer matrix, we use Rosenbrock's system matrix
\begin{align}
G(\lambda):=\begin{pmatrix}
A-\lambda I_{n_p}&B\\
C&D
\end{pmatrix}.
\end{align}
In particular, $\lambda\in\mathbb{C}$ is a zero of the transfer matrix if the matrix $G(\lambda)$ does not have full rank, compare, e.g.,~\cite{davison1974properties}. 
The classical nonresonance condition (cf.~\cite[Lemma~4.1]{isidori2017lectures}) reduces to $\text{rank}(G(\lambda_k))=n_p+m$ for all $\lambda_k$ which are eigenvalues of $S$, i.e., the transmission zeros of the plant do not coincide with the poles of the exosystem. 
Solvability of~\eqref{eq:reg_lin} can be ensured if this \textit{nonresonance condition} holds and the matrices $\Pi$, $\Gamma$ are even unique since $m=p$, compare~\cite[Lemma~4.1]{isidori2017lectures}. 
Hence, Assumption~\ref{ass:regulator} holds if $\mathcal{Z}^p=\mathbb{R}^{n_p+m}$ and $\text{rank}(G(\lambda))=n_p+m$ for all $\lambda_k$ which are eigenvalues of $S$.

The following proposition shows that if the \textit{nonresonance condition} holds for all $T$-periodic exosystems, then results in Prop.~\ref{prop:non_res} remain valid, i.e., the augmented plant is detectable. 
\begin{proposition}
\label{prop:periodic_IOSS_linear}
Consider a square linear system, with $(A,C)$ detectable and $\text{rank}(G(\lambda_k))=n_p +m$ for all $\lambda_k=e^{2 i k\pi/T}$, $k=0,\dots,T-1$. 
Then the augmented plant~\eqref{eq:periodic_model}  is detectable.  
\end{proposition}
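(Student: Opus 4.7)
The plan is to verify detectability of the augmented plant directly via the Popov-Belevitch-Hautus (PBH) test applied to the pair $(A^a,C^a)$, where the augmented matrices read
$$A^a=\begin{pmatrix}A & BE_2^\top\\ 0 & E_0\end{pmatrix},\quad C^a=\begin{pmatrix}C & DE_2^\top\end{pmatrix},$$
after which the equivalent i-IOSS characterization (Ass.~\ref{ass:IOSS}) follows from standard linear detectability arguments (cf.~\cite{cai2008input}). Concretely, I need to show that for every $\lambda\in\mathbb{C}$ with $|\lambda|\ge 1$, the matrix $\bigl(\begin{smallmatrix}A^a-\lambda I\\ C^a\end{smallmatrix}\bigr)$ has full column rank.

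First I would dispose of the case where $\lambda$ is not a $T$-th root of unity: the spectrum of $E_0$ is exactly $\{\lambda_k=e^{2\pi i k/T}\}_{k=0}^{T-1}$, so $E_0-\lambda I_{mT}$ is invertible, which forces the second block component of any null-space vector $(v,w)$ to vanish. The PBH condition at $\lambda$ then collapses to full column rank of $\bigl(\begin{smallmatrix}A-\lambda I\\ C\end{smallmatrix}\bigr)$, which holds by detectability of $(A,C)$. The core case is $\lambda=\lambda_k$: the eigenvectors of $E_0$ are of the form $w=(v_0,v_0/\lambda_k,\dots,v_0/\lambda_k^{T-1})$ with $v_0\in\mathbb{C}^m$ arbitrary, and exploiting $\lambda_k^T=1$ one obtains $E_2^\top w=v_0/\lambda_k^{T-1}=\lambda_k v_0$. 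The remaining two PBH rows therefore read
$$\begin{pmatrix}A-\lambda_k I_{n_p} & \lambda_k B\\ C & \lambda_k D\end{pmatrix}\begin{pmatrix}v\\ v_0\end{pmatrix}=0.$$
Scaling the second unknown by $\tilde v_0:=\lambda_k v_0$ (valid since $\lambda_k\neq 0$) converts this system into $G(\lambda_k)(v,\tilde v_0)^\top=0$, and the nonresonance hypothesis $\mathrm{rank}\,G(\lambda_k)=n_p+m$ in the square setting ($m=p$) forces $(v,\tilde v_0)=0$, hence $(v,w)=0$, completing the PBH check.

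The main obstacle is the bookkeeping of the cyclic-shift eigenstructure: one must correctly identify the eigenvectors of $E_0$ (as defined in \eqref{eq:periodic_model}) and then observe that the factor $\lambda_k$ picked up by $E_2^\top$ is precisely what allows the two columns $BE_2^\top$ and $DE_2^\top$ to be re-identified with the $B$- and $D$-columns of Rosenbrock's system matrix $G(\lambda_k)$ after rescaling. Once this algebraic reduction is in place, the proof is routine; the detectability of $(A,C)$ takes care of the non-root frequencies while the nonresonance rank condition exactly covers the resonant modes introduced by the memory state $\xi$.
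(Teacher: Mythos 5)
Your proof is correct and follows essentially the same route as the paper: a Hautus/PBH rank test on the augmented pair, splitting into non-resonant eigenvalues (handled by detectability of $(A,C)$ since $E_0-\lambda I$ is invertible) and the roots of unity $\lambda_k$, where the eigenvectors of the cyclic shift reduce the condition to $\operatorname{rank}G(\lambda_k)=n_p+m$. The only cosmetic difference is that the paper normalizes the eigenvector of $E_0$ so that $E_2^\top\xi^k=1$ (and sets $m=1$ w.l.o.g.), whereas you keep the general block eigenvector and absorb the resulting factor $\lambda_k$ by rescaling the unknown, which is equally valid.
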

\begin{proof}
The augmented plant~\eqref{eq:periodic_model} corresponds to
\begin{align*}
{A}_a=&\begin{pmatrix}
A&B  E_2^\top\\
0&E_0
\end{pmatrix},~
{C}_a=\begin{pmatrix}
C&DE_2^\top\\0&0
\end{pmatrix}.
\end{align*} 
Detectability of $({A}_a,{C}_a)$ is equivalent to
\begin{align}
\label{eq:resonance_periodic_temp}
\text{rank}
\begin{pmatrix}
A-\lambda I_{n_p}&B E_2^\top\\
0&E_0-\lambda I_{mT}\\
C&D E_2^\top
\end{pmatrix}=n+Tm,
\end{align}
for all $\lambda\in\mathbb{C}$, which are eigenvalues of ${A}_a$ and satisfy $|\lambda|\geq 1$. 
First, consider $\lambda_k=e^{2\pi i k/T}$, in which case rank$(E_0-\lambda_k I_{mT})=m(T-1)$.
W.l.o.g. consider $m=1$. There exists one eigenvector $\xi^k=(e^{-2\pi i k/T},\dots e^{-2\pi i kT/T})$ satisfying $(E_0-\lambda_k I)\xi^k=0$. 
Furthermore, we have $E_2^\top \xi^k=e^{-2\pi i kT/T}=1$. Thus, the rank condition~\eqref{eq:resonance_periodic_temp} is equivalent to 
$\text{rank}(G(\lambda_k))=n_p+m$. 
For $\lambda_k\neq e^{2\pi i k/T}$, the rank condition reduces to detectability of $(A,C)$ and thus $(A_a,C_a)$ is detectable.
\end{proof}
We need to consider $\lambda_k=e ^{2ik\pi/T}$  for $k=0,\dots,T-1$ instead of only the eigenvalues of $S$ (cf. \cite[Lemma~4.1]{isidori2017lectures}), since in the incremental input regularization in Section~\ref{sec:input} we only use the fact that $S$ is $T$-periodic, but do not use the explicit eigenvalues of $S$ in the design. 
In case of redundant inputs $m>p$, we may be able to achieve the same output trajectory $y$ with different input trajectories $u$ and thus the augmented plant is not detectable. 

The following proposition shows that the dissipation-based characterization in Assumption~\ref{ass:nonres_nonlin} is equivalent to the rank condition in Proposition~\ref{prop:periodic_IOSS_linear}.
\begin{proposition}
\label{prop:linear_equiv}
Consider a square linear system with $(A,C)$ detectable. 
Assumption~\ref{ass:nonres_nonlin} holds if and only if $\text{rank}(G(\lambda_k))=n_p+m$ for all $\lambda_k=e^{2 i k \pi/T}$, $k=0,\dots ,T-1$. 
\end{proposition}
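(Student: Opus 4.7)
The plan is to prove the two implications separately, exploiting linearity of the plant. By linearity, it suffices to verify Assumption~\ref{ass:nonres_nonlin} in the absolute form obtained by setting $z^p=0$, $\Xi=0$, $v^a=0$, namely
\begin{equation}\label{eq:plan_diss}
V_R(x^{ap+}) \leq V_R(x^{ap}) - \|u\|^2 + c_R(\|u^a\|^2 + \|y\|^2),
\end{equation}
with $u = E_2^\top \xi + u^a$ and $y = Cx^p + Du$, and to construct (or exclude) a quadratic storage function $V_R$.

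For the direction (rank condition $\Rightarrow$ Assumption~\ref{ass:nonres_nonlin}) I would start from Proposition~\ref{prop:periodic_IOSS_linear}, which under the rank condition yields detectability of $(A_a,C_a)$. By standard linear arguments this gives a quadratic i-IOSS Lyapunov function $V_{a,o}(x^{ap}) = (x^{ap})^\top P x^{ap}$ with $P\succ 0$, $\rho\in(0,1)$, $\tilde c_1,\tilde c_2>0$, such that $V_{a,o}(x^{ap+}) \leq \rho V_{a,o}(x^{ap}) + \tilde c_1\|u^a\|^2 + \tilde c_2\|y\|^2$. Using $V_{a,o}(x^{ap}) \geq \lambda_{\min}(P)\|\xi\|^2$ together with $\|u\|^2 \leq 2\|\xi\|^2 + 2\|u^a\|^2$, the candidate $V_R := K V_{a,o}$ with $K := \tfrac{2}{(1-\rho)\lambda_{\min}(P)}$ absorbs $-\|u\|^2$ into the decay term $-K(1-\rho)V_{a,o}$. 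Collecting the remaining coefficients yields~\eqref{eq:plan_diss} and, via $V_R \leq K \lambda_{\max}(P)\|x^{ap}\|^2$, also~\eqref{eq:nonlin_nonresonance_1}.

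For the converse I would argue by contrapositive. Assume $\mathrm{rank}(G(\lambda_k)) < n_p+m$ for some $\lambda_k = e^{2ik\pi/T}$; then there is a nonzero $(\bar x^p,\bar u)\in\mathbb{C}^{n_p+m}$ with $A\bar x^p+B\bar u = \lambda_k\bar x^p$ and $C\bar x^p+D\bar u = 0$. Detectability of $(A,C)$ rules out $\bar u = 0$ (otherwise $\bar x^p$ would be an unobservable eigenvector at an eigenvalue on the unit circle). Define the real sinusoidal trajectory $u_t = \mathrm{Re}(\bar u\lambda_k^t)$, $x^p_t = \mathrm{Re}(\bar x^p\lambda_k^t)$ (switching to $\mathrm{Im}$ if the real part vanishes identically), which satisfies the plant dynamics with $y_t\equiv 0$, is $T$-periodic so that $u^a_t\equiv 0$ and $x^{ap}_T = x^{ap}_0$, yet $u\not\equiv 0$. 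Summing~\eqref{eq:plan_diss} over one period and using $V_R(x^{ap}_T) = V_R(x^{ap}_0)$ gives $0 \leq -\sum_{t=0}^{T-1}\|u_t\|^2$, contradicting $u\not\equiv 0$.

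The main obstacle I expect is the sufficiency direction, where the scaling of $V_R$ must be arranged so that~\eqref{eq:nonlin_nonresonance_1} and \eqref{eq:nonlin_nonresonance_2} hold simultaneously with a single constant $c_R$. A minor technicality in the necessity direction is that the periodic witness must lie in $\mathcal{Z}^{ap}$ for the assumption to apply, which is handled by scaling the trajectory by a small $\varepsilon>0$; the inequality~\eqref{eq:plan_diss} is quadratically homogeneous in $(x^{ap},u^a,y,u)$, so the contradiction persists.
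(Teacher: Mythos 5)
Your proposal is correct and follows essentially the same route as the paper: sufficiency via Proposition~\ref{prop:periodic_IOSS_linear} and a scaled quadratic i-IOSS Lyapunov function for the augmented plant (with the same constant $K=\tfrac{2}{(1-\rho)\lambda_{\min}(P)}$ and the bound $\|u\|^2\leq 2\|\xi\|^2+2\|u^a\|^2$), and necessity via a $T$-periodic zero-output witness trajectory whose dissipation sum forces $u\equiv 0$. The only differences are cosmetic — you close the telescoping sum over one period using $V_R(x^{ap}_T)=V_R(x^{ap}_0)$ rather than via boundedness of the partial sums, and you invoke detectability of $(A,C)$ up front to exclude $\bar u=0$ instead of at the end to conclude $x^p\equiv 0$; your explicit handling of the real-part extraction and the $\mathcal{Z}^{ap}$-membership by scaling is in fact slightly more careful than the paper's.
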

\begin{proof}
\textbf{Part I. }
Suppose Ass.~\ref{ass:nonres_nonlin} holds, but there exists some $\lambda_k$, such that $\text{rank}(G(\lambda_k))<n_p+m$, i.e., there exists some (complex) vector $(x^p,u)\neq 0$, such that $Ax^p+Bu=\lambda_kx^p$, $Cx^p+Du=0$. 
This corresponds to the existence of a $T-$periodic state and input trajectory $(x^p_t,u_t)$, which satisfies $Cx^p_t+Du_t=0$.
The periodicity of this trajectory implies that the augmented plant input satisfies ${u}^a_t=0$. 
Without loss of generality, consider $(z^p_t,v_t,{v}^a_t)=0$. 
Plugging the trajectories in Condition~\eqref{eq:nonlin_nonresonance_2} and using a telescopic sum we arrive at 
$\sum_{t=0}^{k-1}\|u_t\|^2\leq V_R(x_0,0)-V_R(x_k,k)\leq V_R(x_0,0)$.
Since $u_t$ periodic and the sum is upper bounded, this immediately implies $u\equiv 0$. 
Finally, since $y\equiv 0$ and $(A,C)$ detectable, $x^p\equiv 0$. Thus, the only periodic solution that satisfies $y=0$ is the trivial solution $(x^p,u)=0$ and thus $\text{rank}(G(\lambda_k))=n_p+m$. \\
\textbf{Part II. } Suppose $\text{rank}(G(\lambda_k))=n_p+m$ and $(A,C)$ is detectable. 
Then Prop.~\ref{prop:periodic_IOSS_linear} ensures that $({A}_a,{C}_a)$ is detectable and thus (cf.~\cite{cai2008input}) there exists a quadratic i-IOSS Lyapunov function $V_{a,o}({x}^{ap},z^{ap},w)=\|{x}^{ap}-{z}^{ap}\|_{{P}_{a,o}}^2$ satisfying Assumption~\ref{ass:IOSS}. 
Consider $V_R({x}^{ap},z^{ap},w)=c V_{a,o}({x}^{ap},z^{ap},w)$ with $c=\frac{2}{(1-\rho_{a,o})\lambda_{\min}({P}_{a,o})}>0$.
Inequality~\eqref{eq:nonlin_nonresonance_1} holds with $c_{R,u}:=c\cdot \lambda_{\max}({P}_{a,o})$. 
The definition of the input $u$ in~\eqref{eq:periodic_model_4} implies
\begin{align}
\label{eq:input_quad_bound}
&\|u-v\|^2=\|E_2^\top (\xi-\Xi)+{u}^a-{v}^a\|^2\\
\leq &2\|{u}^a-{v}^a\|^2+2\underbrace{\|E_2E_2^\top \|}_{=1}\|\xi-\Xi\|^2.\nonumber
\end{align}
Thus, condition~\eqref{eq:nonlin_nonresonance_2} holds with $c_R=\max\{c\cdot c_{o,2},c\cdot (c_{o,1}+c_{o,2})+2\}>0$ using
\begin{align*}
&\|A_a({x}^{ap}-z^{ap})+{B}_a ({u}^a-v^a)\|_{P_{a,o}}^2-\|x^{ap}-z^{ap}\|_{P_{a,o}}^2\\
\leq &-c  (1-\rho_{a,o})\|{x}^{ap}-{z}^{ap}\|_{P_{a,o}}^2+c  (c_{o,1}+c_{o,2})\|{u}^a-{v}^a\|^2\\
&+c\cdot c_{o,2}\|{C}({x}^p-{z}^p)+{D}({u}-{v})\|^2\\
\leq &-2\|\xi-\Xi\|^2+(c_R-2)\|{u}^a-{v}^a\|^2\\
&+c_R\|h(x^p,u,w)-h(z^p,v,w)\|^2 \\
\stackrel{\eqref{eq:input_quad_bound}}{\leq} &c_R(\|h^a(x^{ap},u^a,w)-h^a(z^{ap},v^a,w)\|^2 
+\|{u}^a-{v}^a\|^2)\\
&-\|u-v\|^2.
&\qedhere
\end{align*}
\end{proof}
We point out again that the rank condition $\text{rank}(G(\lambda_k))=n_p+m$ is similar to the eigenvalue and rank conditions used for input observability/detectability in~\cite[Thm.~2--3]{hou1998input}, which is a closely related problem. A crucial relaxation in the considered problem is that only periodic inputs need to be observable/detectable and thus the rank condition only needs to be checked for the eigenvalues $\lambda_k$ corresponding to the period length $T$, as opposed to all eigenvalues (outside the unit disc).

\subsection{Minimum phase - stable zeros}
In the following, we consider a SISO system as in Section~\ref{sec:minPhase}. 
The relative degree $d\in\mathbb{N}$ in Assumption~\ref{ass:BINF} corresponds to $CA^kB=0$, $k=0,\dots,d$ and $CA^{d+1}B\neq 0$  and the maps $\Phi,\tilde{\Phi}$ are linear. 
Furthermore, the zero dynamics are always well-defined (Ass.~\ref{ass:zero}), using $\alpha(x)=K_\alpha x$ with $K_\alpha=-\frac{CA^{d+2}}{CA^{d+1}B}$.
If we consider the closed-loop system $u=K_\alpha x+\Delta{u}$ we have $\eta^+=A_{\eta}\eta+A_{\eta,z}z+B_{\eta,u}\Delta{u}+B_{\eta,w}w$ and the zero dynamics are stable if $A_{\eta}$ is Schur. 
In this case, the dynamics in $\eta$ are obviously also ISS with a quadratic function $V_\eta=\|\eta-{\eta}_w\|_{P_\eta}^2$ and thus Assumption~\ref{ass:minPhase} holds. 
The eigenvalues of $A_{\eta}$ characterizing the zero dynamics correspond to the zeros of the transfer function (assuming $(A,B,C,D)$ corresponds to a minimal realization), compare, e.g.,~\cite{davison1974properties,isidori2013zero}.

\subsection{Summary}
Suppose we have a linear square system that satisfies
\begin{itemize}
\item $(A,B)$ stabilizable, $(A,C)$ detectable,
\item Eigenvalues of $S$ satisfy $\lambda_k=e^{2\pi i k/T}$, $T\in\mathbb{N}$,
\item Nonresonance  condition: $\text{rank}(G(\lambda_k))=n_p+m$, $\forall\lambda_k=e^{2\pi i k /T}$, $k=0,\dots,T-1$,
\item No constraints: $\mathcal{Z}^p=\mathbb{R}^{n_p+m}$.
\end{itemize}
Then Assumptions~\ref{ass:regulator}--\ref{ass:IOSS} hold and  the MPC schemes based on~\eqref{eq:MPC} and \eqref{eq:MPC_input} both solve the output regulation problem for $N$ sufficiently large. 
Furthermore, in case $\text{rank}(G(\lambda))=n_p+m$ for all $\lambda\in\mathbb{C}$ satisfying $|\lambda|\geq 1$ (stable zeros, Ass.~\ref{ass:minPhase}), also the MPC scheme in Section~\ref{sec:minPhase} solves the output regulation problem for $N$ sufficiently large.
In case the joint system $(x^p,w)$ is detectable, we can design a stable observer and implement an error feedback MPC with noisy output measurements that ensures finite-gain $\mathcal{L}_2$ stability.

In the linear case, we clearly see that the considered conditions align with the typical assumptions employed to solve the output regulation problem, compare, e.g., the necessary and sufficient conditions in~\cite[Thm.~2]{davison1976robust}.

\section{Numerical example}
\label{sec:num}
In Section~\ref{sec:num_1} we demonstrate by means of an academic example the relevance of using an  input regularization  for systems with unstable zero dynamics. 
Then, Section~\ref{sec:num_2} shows the applicability of the proposed approach at the example of \textit{offset-free} MPC with error feedback (Rk.~\ref{rk:error_feedback}) and nonlinear dynamics. 
 The online computations are done with Matlab using IPOPT within CasADi~\cite{andersson2019casadi}.
 %
\subsection{Academic linear example - non-minimum phase}
\label{sec:num_1}
Consider the following academic linear system $x_{t+1}=0.5x_t+u_t$, $y_t=x_t-u_t$.
This system is stable and has a direct feed through.  
For simplicity, suppose we have no constraints, i.e., $\mathbb{X}\times\mathbb{U}=\mathbb{R}^{n+m}$. 
The solution to the MPC optimization problem in Section~\ref{sec:minPhase} with the output stage cost $\ell_y=\|y\|^2=\|x-u\|^2$ satisfies $u^*_{0|t}=x_t$, $x_{t+1}=1.5 x_t$, $y_t=0$ and $V_N(x_t)=0$ for any horizon $N\in\mathbb{N}$ and all $t\geq 0$. 
Thus, the MPC scheme minimizes the output $\|y\|$, but the resulting state and input trajectory is unstable. 
This problem is inherently related to the \textit{singular input cost} $\ell_y$ and the existence of unstable zero dynamics. 
A similar phenomenon appears in high-gain controllers which quickly ensure $\lim_{t\rightarrow\infty}\|y_t\|=0$ and thus often fail to stabilize systems with unstable zero dynamics (non-minimum-phase), compare~\cite[Sec.~3.4]{davison1974properties}. 
If the system is subject to compact input constraints $u_t\in\mathbb{U}$ and has an arbitrarily small  non-zero initial condition $x_0$, the closed-loop error $\|y_t\|$ would be zero for $t\in[0,K]$, until  the input $u_t=x_t\notin\mathbb{U}$ and then the error $y$ becomes nonzero. 
This demonstrates that for general non-minimum-phase systems, an input regularization as used in Sections~\ref{sec:reg} or \ref{sec:input} is needed to ensure stability.

If we use the incremental input penalty from Section~\ref{sec:input} with $\ell_a=\|y\|^2+\|\Delta u\|^2$, we can invoke Corollary~\ref{corol:input} to ensure exponential stability for a sufficiently large horizon $N$.
In particular, for the augmented state $x^a_t=(x_t,u_{t-1})$ and input $u^a=\Delta u$, we define $\sigma^a=\|x^a\|_P^2$ with $P$ from the algebraic Riccati equation corresponding to the LQR.
Thus,  Assumption~\ref{ass:stab} is naturally satisfied with $\gamma_s=1$. 
Assumption~\ref{ass:detect} holds with $W=0$, $\gamma_o=0$, $\epsilon_o=0.3343$, which can be computed based on~\eqref{eq:detect_grimm_2} as a generalized eigenvalue. 
Thus, Corollary~\ref{corol:input} is applicable for $N>N_1=1+(\gamma_s/\epsilon_o)^2\approx 9.9$.
If we further use the improved bound discussed in Remark~\ref{rk:observable}, we can ensure stability for $N>N_{\overline{Y},s}\approx 3.3$ (cf.~Prop.~\ref{prop:less_cons}).

\subsection{Nonlinear offset-free tracking}
\label{sec:num_2}
In the following example, we demonstrate the applicability of the proposed approach to nonlinear offset free MPC using both, the pure output tracking formulation from Section~\ref{sec:minPhase} and the incremental input formulation from Section~\ref{sec:input}.
We consider the following nonlinear model of a cement milling circuit taken from~\cite{magni2001output}:
\begin{subequations}
\label{eq:example_model_offset}
\begin{align}
0.3\dot{x}_1=&-x_1+(1-\alpha(x_2,u_2))\phi(x_2),\\
\dot{x}_2=&-\phi(x_2)+u_1+x_3,\\
0.01\dot{x}_3=&-x_3+\alpha(x_2,u_2)\phi(x_2),\\
\phi(x_2)=&\max\{0,-0.1116\cdot x_2^2+16.50x_2\},\\
\alpha(x_2,u_2)=&\dfrac{\phi^{0.8}(x_2)\cdot u_2^4}{3.56\cdot 10^{10}+\phi(x_2)^{0.8}\cdot u_2^4}.
\end{align}
\end{subequations}
with $x\in\mathbb{R}^3$, $u\in\mathbb{R}^2$. 
The discrete-time model is computed using the $4$th order Runge Kutta method and a sampling time of one minute\footnote{%
The units in equations~\eqref{eq:example_model_offset} are hours.}. 
The system is subject to compact input constraints $u\in\mathbb{U}=[80,150]\times[165,180]$ and no state constraints $\mathbb{X}=\mathbb{R}^n$.
The error is given by $y=(x_1,x_3)-(w_1,w_2)$, where $w$ corresponds to the constant output reference $w=(110,425)$.

In the following, we briefly show that the considered assumptions hold on the subset $x_2\in[45,55]$, $w\in\mathbb{W}= [100,120]\times[410,430]$, which provides a sufficiently large region of attraction. 
First, the unique solution to the regulator equations (Ass.~\ref{ass:regulator}) can be analytically computed as 
\begin{align*}
\pi_x(w)=& [w_1,~73.9-\sqrt{5.5\cdot 10^3-8.9(w_1+w_2)},~w_2]^\top,\\
\pi_u(w)=&[w_1,~434\left(\frac{w_2}{w_1}\right)^{0.25}(w_1+w_2)^{-0.2}]^\top,
\end{align*}
which is also Lipschitz continuous on the considered region. 
The system is open-loop incrementally stable (and hence satisfied Assumptions~\ref{ass:increm_stab}, \ref{ass:stab}), which we verified numerically by computing (via gridding) a constant contraction metric~\cite{manchester2017control} (which corresponds to a quadratic incremental Lyapunov function $V_s$). 
Correspondingly, the system also trivially satisfies the detectability condition (Ass.~\ref{ass:IOSS}) with $V_o=V_s$. 
Furthermore, one can show that the system is flat and contains no zero-dynamics. 
Hence, the conditions regarding the minimum phase property (Ass.~\ref{ass:BINF}--\ref{ass:minPhase}) are trivially satisfied. 
Similarly, the nonresonance condition (Ass.~\ref{ass:nonres_nonlin}) follows due to the absence of zero-dynamics (cf. the discussion in Section~\ref{sec:input_3}) and a corresponding quadratic incremental storage function $V_R$ can be computed similar to~\cite{sanfelice2012metric,verhoek2020convex}. 
Hence, we have shown that all the considered assumptions hold. However, due to the complexity of the system the resulting bounds on the sufficiently long prediction horizon $N$ from the derived theorems are too conservative to be applied. 
Thus, we simply implement the two MPC schemes (Sec.~\ref{sec:minPhase}/\ref{sec:input}) with $N=6$, $Q=I_2$ and $R=0/R=10^{-2}\cdot I_2$.

The resulting closed loop for $x^p_0=(120,55,450)$ can be seen in Figure~\ref{fig:Offset_state}.
Both proposed MPC formulations smoothly track the output reference, while satisfying the active input constraints. 
If we compare the MPC formulation with and without input regularization (Sec.~\ref{sec:minPhase}/\ref{sec:input}), the resulting closed-loop state trajectories are almost indistinguishable, while the absence of input regularization leads to more aggressive control inputs. 
\begin{figure}[hbtp]
\begin{center}
\includegraphics[width=0.43\textwidth]{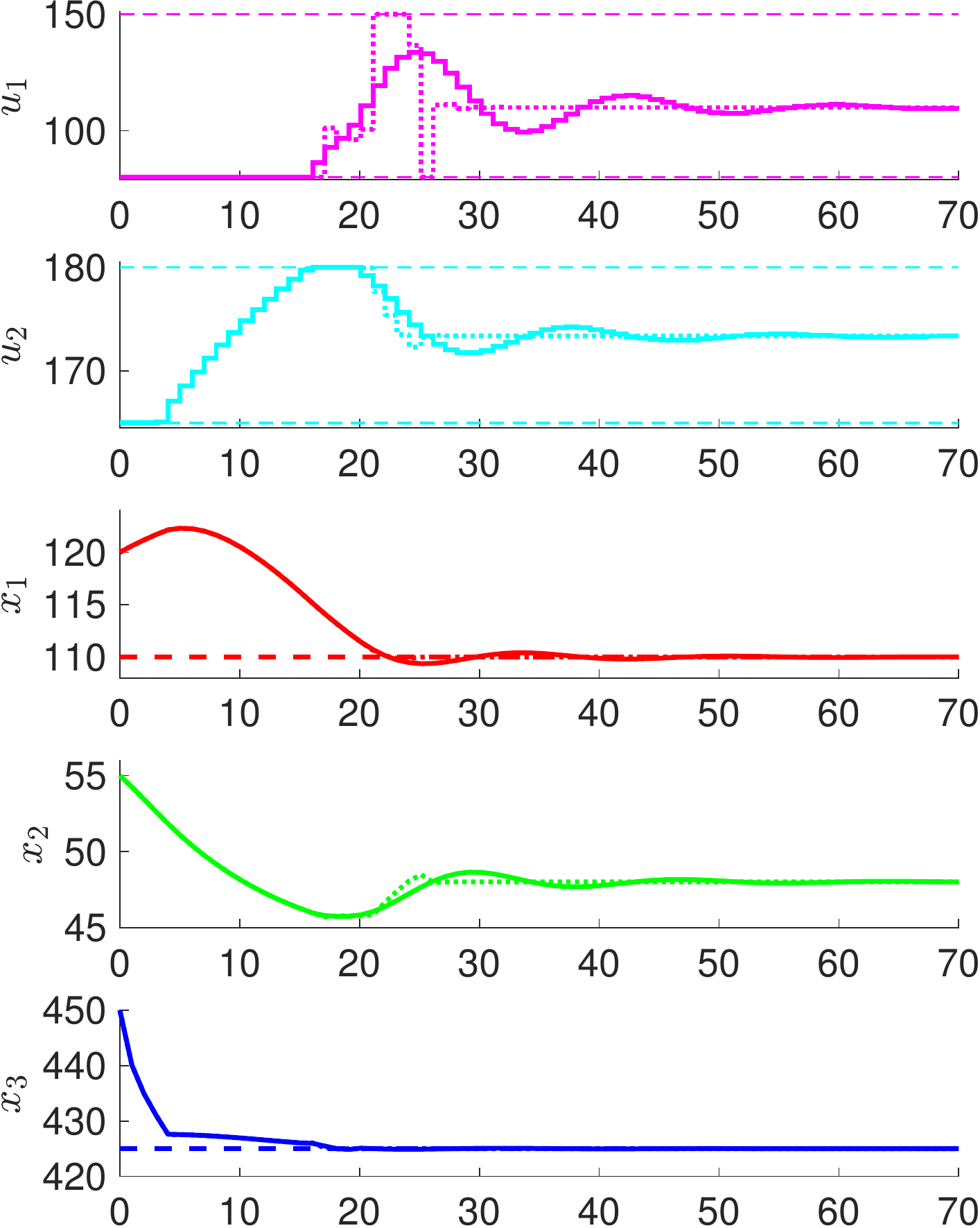}
\end{center}
\caption{Offset free tracking with state feedback: Incremental input regularization ($R=10^{-2}$, solid) and without input regularization ($R=0$, dotted). 
Output reference $(w_1,w_2)=(110,425)$ and input constraints are dashed.}
\label{fig:Offset_state}
\end{figure}

\subsubsection*{Noisy error feedback and inherent robustness}
Now we consider more the realistic scenario of noisy error feedback as discussed in Remark~\ref{rk:error_feedback}, i.e., only noisy output measurements $\tilde{y}=y+\eta$ are available, with $\eta$ uniformly distributed in $[-1,1]^2$. 
As in~\cite{magni2001output}, we design an extended Kalman filter (EKF) as an observer and implemented the output regulation MPC in a certainty equivalent fashion, compare Appendix~\ref{app:error_feedback} for details. 
The EKF uses an initial variance of $\Sigma=100\cdot I_n$ and unit variance for noise and disturbances in the design.
The initial state estimate is given by $\hat{x}_0=(\hat{x}^p_0,\hat{w}_0)=(100,50,400,100,400)$. 
The resulting closed loop can be seen in Figure~\ref{fig:Offset_error_state}. 
We can see that for both MPC formulations the control performance is rather insensitive to the noise and estimation error.
The resulting closed-loop state trajectories for the two MPC formulations are almost indistinguishable, while the absence of input regularization leads to more aggressive control inputs, especially in $u_1$.
Even though the tracking error is almost zero at the end of the simulation time, the observer error $\hat{x}^p_3-x^p_3$ and $\hat{w}_2-w_2$ is of the order $10$ and requires significantly longer to converge close to the origin.

\begin{figure}[hbtp]
\begin{center}
\includegraphics[width=0.43\textwidth]{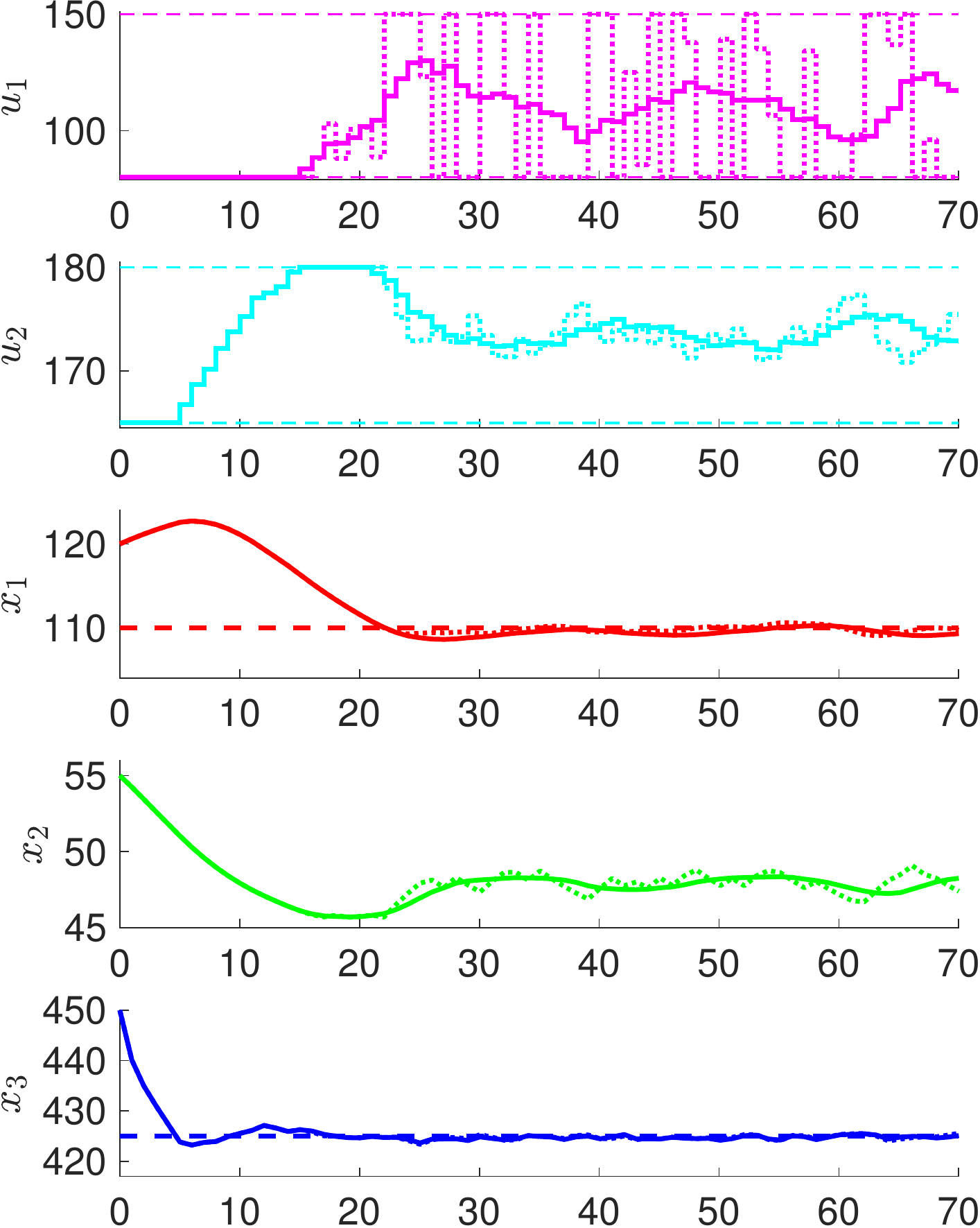}
\end{center}
\caption{Offset free tracking with noisy error feedback: Incremental input regularization ($R=10^{-2}$, solid) and without input regularization ($R=0$, dotted). 
Output reference $(w_1,w_2)=(110,425)$ and input constraints are dashed.}
\label{fig:Offset_error_state}
\end{figure}

\subsubsection*{Message}
The main benefits of the proposed approach is its simplicity. 
For the implementation, we only require a prediction model, a user can suitable tune input and output weights $Q,R$, and a sufficiently long prediction horizon $N$ needs to be chosen. 
In case of error feedback, we additionally need to design a stable observer, e.g., here an extended Kalman filter.
Most importantly, the proposed design did not require any complex offline computations.
This is in contrast to most approaches for output regulation (cf. e.g.~\cite{castillo1993nonlinear,pavlov2006uniform,magni2005solution,falugi2013tracking}), that typically first need to compute a solution to the regulator equations~\eqref{eq:reg}, which is in general non-trivial. 
Furthermore, compared to classical approaches to output regulation (cf., e.g.,~\cite{castillo1993nonlinear,pavlov2006uniform}), the proposed approach offers a large region of attraction despite the presence of hard input constraints. 

Finally, compared to tracking MPC formulations~\cite{limon2018nonlinear,kohler2020nonlinear}, the proposed approach has the following advantages:
\begin{enumerate}[label=(\alph*)]
\item No complex offline design for terminal ingredients,
\item No feasibilities issues and strong stability properties in the noisy error feedback case due to the absence of terminal constraints,
\item A larger region of attraction,
\item No additional decision variables to compute the optimal steady-state $x=\pi_{x}(w)$ online.
\end{enumerate}
The main drawbacks compared to tracking MPC formulations~\cite{limon2018nonlinear,kohler2020nonlinear} is the fact that potentially a larger prediction horizon $N$ may be needed to guarantee stability and that guaranteed performance in case of unreachable trajectories (Ass.~\ref{ass:regulator} does not hold) are difficult to establish.
 
\section{Conclusion}
\label{sec:sum}
We have  presented an MPC framework that solves the nonlinear constrained output regulation problem, given suitable stabilizability and detectability conditions and a sufficiently long prediction horizon. 
In particular, we have presented two MPC formulations (with/without input regularization, Sec.~\ref{sec:minPhase}/\ref{sec:input}), that are suitable for minimum phase systems and periodic exogenous signals, respectively. Both MPC schemes do \textit{not} require a solution to the regulator equations or other complex offline designs.
We have demonstrated the applicability and simplicity of the MPC formulations with a numerical example involving nonlinear offset-free tracking and noisy error feedback. 

Future research focuses on achieving \textit{robust} output regulation.

\bibliographystyle{IEEEtran}  
\bibliography{Literature}

\begin{IEEEbiography}[{\includegraphics[width=1in,height=1.25in,clip,keepaspectratio]{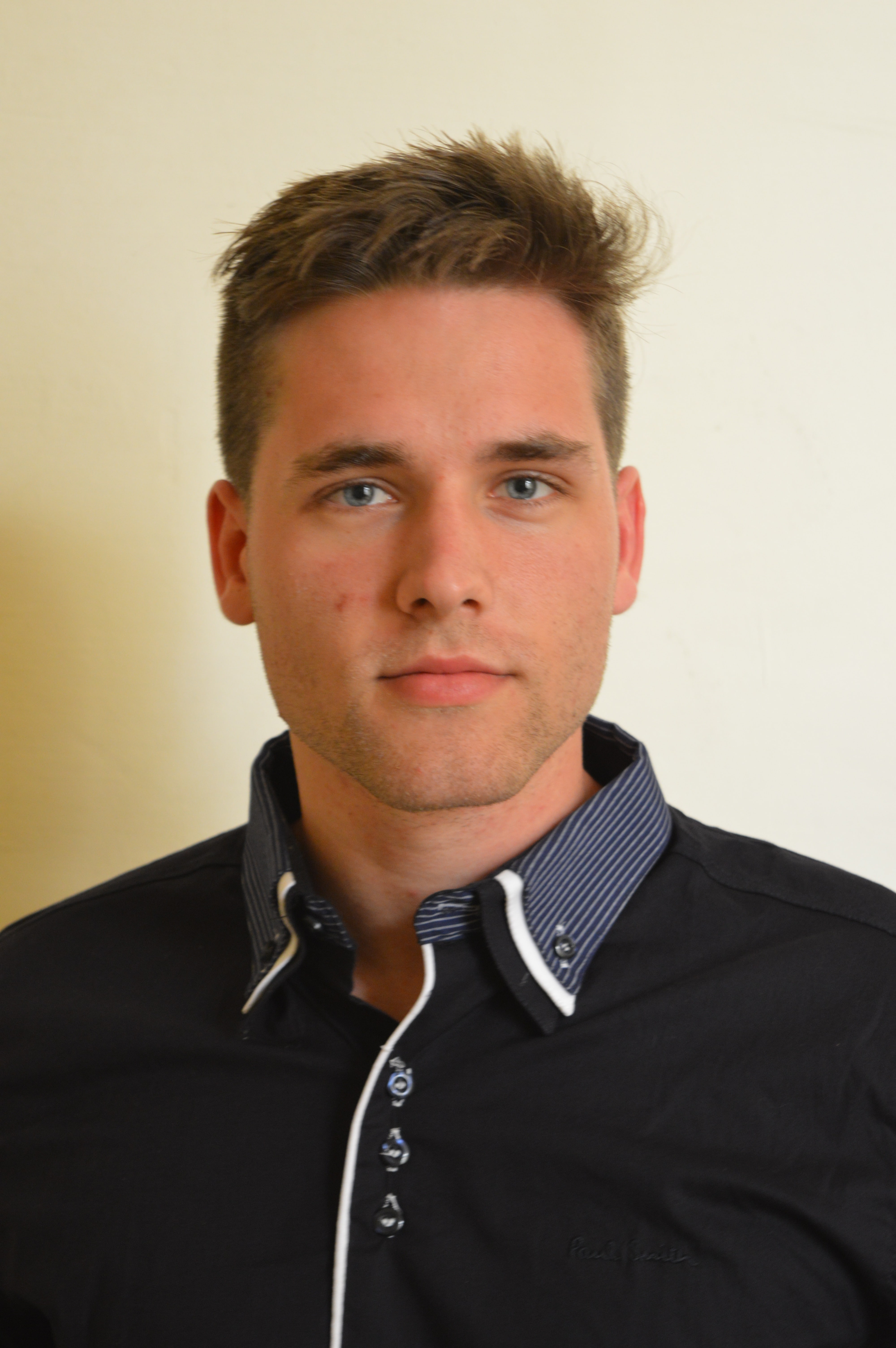}}]{Johannes K\"ohler}
 received his Master degree in Engineering Cybernetics from the University of Stuttgart, Germany, in 2017.
He has since been a doctoral student at the \emph{Institute for Systems Theory and Automatic Control} under the supervision of Prof. Frank Allg\"ower and a member of the International Research Training Group (IRTG) "Soft Tissue Robotics" at the University of Stuttgart. 
His research interests are in the area of model predictive control and nonlinear uncertain systems. 
\end{IEEEbiography}


\begin{IEEEbiography}[{\includegraphics[width=1in,height=1.25in,clip,keepaspectratio]{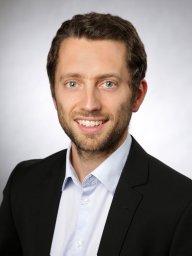}}]{Matthias A. M\"uller}
 received a Diploma degree in Engineering Cybernetics from the University of Stuttgart, Germany, and an M.S. in Electrical and Computer Engineering from the University of Illinois at Urbana-Champaign, US, both in 2009.
In 2014, he obtained a Ph.D. in Mechanical Engineering, also from the University of Stuttgart, Germany, for which he received the 2015 European Ph.D. award on control for complex and heterogeneous systems. Since 2019, he is director of the Institute of Automatic Control and full professor at the Leibniz University Hannover, Germany. 
He obtained an ERC Starting Grant in 2020 and is recipient of the inaugural Brockett-Willems Outstanding Paper Award for the best paper published in Systems \& Control Letters in the period 2014-2018. His research interests include nonlinear control and estimation, model predictive control, and data-/learning-based control, with application in different fields including biomedical engineering.
\end{IEEEbiography}
%
\begin{IEEEbiography}[{\includegraphics[width=1in,height=1.25in,clip,keepaspectratio]{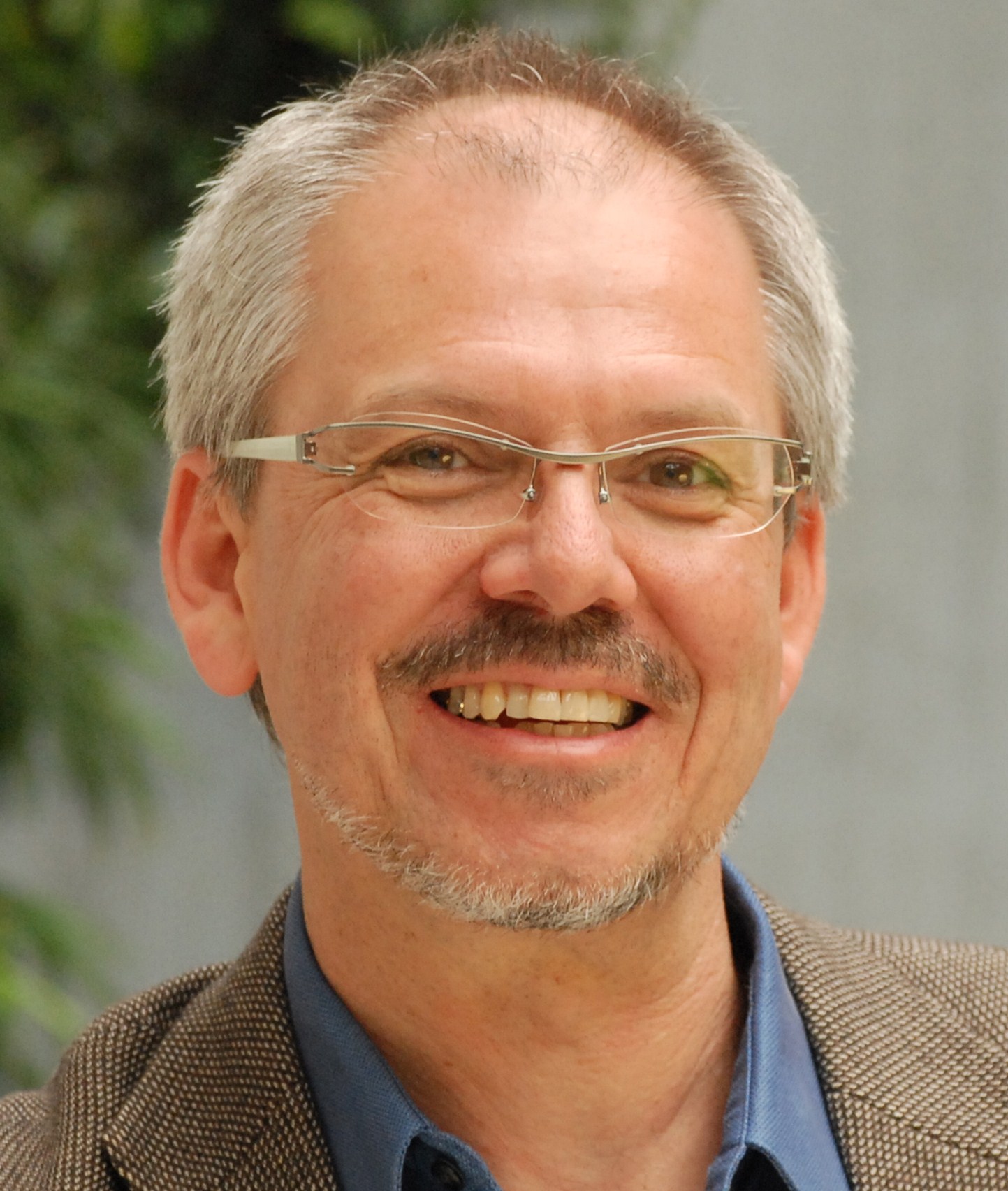}}]{Frank Allg\"ower}
 studied Engineering Cybernetics and Applied Mathematics in Stuttgart and at the University of California, Los Angeles (UCLA), respectively, and received his Ph.D. degree from the University of Stuttgart in Germany. 
Since 1999 he is the Director of the \emph{Institute for Systems Theory and Automatic Control} and professor at the University of Stuttgart.
His research interests include networked control, cooperative control, predictive control, and nonlinear control with application to a wide range of fields including systems biology.
For the years 2017-2020 Frank serves as President of the International Federation of Automatic Control (IFAC) and since 2012 as Vice President of the German Research Foundation DFG.
\end{IEEEbiography}

\clearpage
\appendix
In Appendix~\ref{app:obs}, we show that the performance estimates in Theorem~\ref{thm:MPC}/Corollary~\ref{corol:main}, which are the basis of the proposed MPC framework, can be improved by using an additional observability condition.    
In Appendix~\ref{app:error_feedback}, we extend the results in Sections~\ref{sec:reg}--\ref{sec:input} to the noisy error feedback case.
\subsection{Less conservative bounds using observability}
\label{app:obs}
In the following, we extend the general analysis in Theorem~\ref{thm:MPC} and hence also from~\cite{grimm2005model} to provide less conservative bounds, given an additional observability condition.
The suboptimality index $\alpha_N$~\eqref{eq:def_alpha} in Theorem~\ref{thm:MPC} decreases with $\gamma^2/N$ {(we consider $\gamma\approx \gamma_s\approx \gamma_{\overline{Y}}$ in this discussion to keep in line with the notation in~\cite{grune2012nmpc})}
 and thus the sufficient horizon $N_{\overline{Y}}$ scales quadratically with $\gamma$, similar to~\cite[Variant 1]{grune2012nmpc}. 
However,  under the assumption that $\ell$ is positive definite (Ass.~\ref{ass:detect} holds with $W=0$) the derivations in~\cite{tuna2006shorter} and \cite[Variant 2]{grune2012nmpc}, provide bounds where $N_{\overline{Y}}$ scales with $2\gamma\log \gamma$. 
In the following, we show how we can obtain bounds for $N_{\overline{Y}}$, $\alpha_N$ similar to~\cite[Variant 2]{grune2012nmpc} using an additional/stronger \textit{observability} condition.
Proposition~\ref{prop:less_cons} derives the improved bounds given Assumption~\ref{ass:obs}.
Proposition~\ref{prop:obs_plant} shows that Assumption~\ref{ass:obs} follows naturally from observability.
\begin{assumption}
\label{ass:obs}
(Observability)
There exist constants $\nu\in\mathbb{N}$ and $c_o>0$, such that any trajectory satisfying $(x_t,u_t)\in\mathcal{Z}$, $x_{t+1}=f(x_t,u_t)$ for all $t\geq 0$ satisfies the following bound
\begin{align}
\label{eq:obs}
\sigma(x_{t+\nu})\leq c_o\sum_{k=0}^{\nu -1} \ell(x_{t+k},u_{t+k}),~\forall t\geq 0.
\end{align}
\end{assumption}
In the simple case that $\sigma(x)=\|x\|^2$ and $\ell(x,u)\geq \|x\|^2+\|u\|^2$, Ass.~\ref{ass:obs} holds with $\nu=1$ if $f$ is Lipschitz and thus allows to directly use arguments from~\cite[Variant 2]{grune2012nmpc}.
\begin{proposition}
\label{prop:less_cons}
Let Assumptions~\ref{ass:stab}, \ref{ass:detect}, \ref{ass:obs} hold. 
For any constant $\overline{Y}>0$, there exists a constant $N_{\overline{Y},s}>0$, such that for $N>N_{\overline{Y},s}$ and  initial condition $x_0\in \mathbb{X}_{\overline{Y}}$, the closed loop satisfies $\alpha_{N,s}\sum_{t=0}^{\infty} \ell(x_t,u_t)\leq \overline{Y}$ with 
\begin{align}
\label{eq:alpha_less_cons}
\alpha_{N,s}:=&1-\dfrac{\gamma_{\overline{Y},s}\gamma_s c_o}{\epsilon_o} \left(\dfrac{\gamma_{\overline{Y},s}c_o}{\gamma_{\overline{Y},s}c_o+1}\right)^{N_\nu}>0,\\
N_\nu:=&\left\lfloor \frac{N-\nu}{\nu}\right\rfloor,\quad  \gamma_{\overline{Y},s}:=\max\{\gamma_s,\overline{Y}/\delta_s\}.\nonumber
\end{align}
\end{proposition}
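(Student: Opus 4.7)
The plan is to refine the Lyapunov analysis of Thm.~\ref{thm:MPC} by using observability (Ass.~\ref{ass:obs}) to replace the averaging-based turnpike bound with a geometric one, mirroring~\cite[Variant 2]{grune2012nmpc}. Recursive feasibility, constraint satisfaction, and $Y_N(x_t)\leq\overline{Y}$ along the closed loop are imported directly from Thm.~\ref{thm:MPC}. A preliminary sharpening gives $V_M(x)\leq\gamma_{\overline{Y},s}\sigma(x)$ whenever $V_M(x)\leq\overline{Y}$, by the same case distinction on $\sigma(x)\lessgtr\delta_s$ as in Part~I of Thm.~\ref{thm:MPC} (using that Ass.~\ref{ass:stab} holds for every horizon).

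The new ingredient is a geometric iteration along the open-loop minimizer. Define $a_k:=V_{N-k\nu}(x^*_{k\nu|t})$ and the block cost $L_k:=\sum_{j=k\nu}^{(k+1)\nu-1}\ell(x^*_{j|t},u^*_{j|t})$, so that the dynamic programming principle (DPP) yields $a_k=L_k+a_{k+1}$ and $a_k\leq V_N(x_t)\leq\overline{Y}$. Observability gives $\sigma(x^*_{(k+1)\nu|t})\leq c_o L_k$, and the sharpened stabilizability bound applied to $V_{N-(k+1)\nu}$ yields $a_{k+1}\leq\gamma_{\overline{Y},s}c_o L_k = \gamma_{\overline{Y},s}c_o(a_k-a_{k+1})$, i.e., $a_{k+1}\leq q\,a_k$ with $q$ from~\eqref{eq:alpha_less_cons}. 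Iterating $N_\nu$ times gives $a_{N_\nu}\leq q^{N_\nu}V_N(x_t)$, hence $\sigma(x^*_{M|t})\leq c_o L_{N_\nu}\leq c_o q^{N_\nu}\overline{Y}$ for $M:=(N_\nu+1)\nu\leq N$. For $N$ large enough this forces $x^*_{M|t}\in\mathbb{X}_\delta$, so Ass.~\ref{ass:stab} applies locally to give $V_{N-M+1}(x^*_{M|t})\leq\gamma_s c_o q^{N_\nu}V_N(x_t)$.

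Concatenating the shifted optimal trajectory for $M-1$ steps with an Ass.~\ref{ass:stab} tail from $x^*_{M|t}$ produces a feasible $N$-step candidate for $x_{t+1}$, giving $V_N(x_{t+1}) \leq V_N(x_t) - \ell(x_t,u_t) + \gamma_s c_o q^{N_\nu} V_N(x_t)$. Plugging in the detectability estimate $V_N(x_t)\leq\frac{\gamma_{\overline{Y},s}}{\epsilon_o}[W(x_t)-W(x_{t+1})+\ell(x_t,u_t)]$ and rotating with $\tilde{Y}:=V_N+(1-\alpha_{N,s})W$ yields $\tilde{Y}(x_{t+1})+\alpha_{N,s}\ell(x_t,u_t)\leq\tilde{Y}(x_t)$. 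Telescoping, together with $\tilde{Y}(x_0)\leq Y_N(x_0)\leq\overline{Y}$, produces the claimed performance bound. The threshold $N_{\overline{Y},s}$ is then the smallest $N$ making both $\alpha_{N,s}>0$ and $c_o q^{N_\nu}\overline{Y}\leq\delta_s$, so $N_\nu$ only has to dominate a logarithm of $\gamma_{\overline{Y},s}/\epsilon_o$ and $\overline{Y}/\delta_s$, in contrast to the polynomial growth in Thm.~\ref{thm:MPC}.

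The main subtlety is that the upgraded stabilizability bound $V_M\leq\gamma_{\overline{Y},s}\sigma$ has to be applied to the reduced-horizon value function $V_{N-k\nu}$ along the predicted trajectory, where only the DPP estimate $V_{N-k\nu}(x^*_{k\nu|t})\leq V_N(x_t)\leq\overline{Y}$ (not membership of $x^*_{k\nu|t}$ in $\mathbb{X}_{\overline{Y}}$) is a priori available. Hence the case distinction must be phrased relative to $V_{N-k\nu}$ rather than $Y_N$; once that bookkeeping is in place, the remainder is a direct telescopic calculation paralleling Thm.~\ref{thm:MPC}.
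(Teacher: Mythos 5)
Your proposal is correct and follows essentially the same route as the paper: your recursion $a_{k+1}\leq q\,a_k$ on $a_k=V_{N-k\nu}(x^*_{k\nu|t})$ is, via the dynamic programming principle, identical to the paper's recursion on the tail sums $\sum_{k=p}^{N-1}\ell(x^*_{k|t},u^*_{k|t})$, and the subsequent candidate construction, rotation with $W$, and telescoping match the paper's argument (which places the turnpike point at $x^*_{N|t}$ rather than your $x^*_{(N_\nu+1)\nu|t}$ — an immaterial difference). The only caveat is that recursive feasibility and $V_N(x_t)\leq\overline{Y}$ should not be ``imported'' from Theorem~\ref{thm:MPC} (whose horizon threshold $N_{\overline{Y}}$ need not be below $N_{\overline{Y},s}$), but your own decrease inequality for $\tilde{Y}$ already yields this invariance by induction, so no gap remains.
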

\begin{proof}
\begin{subequations}
For any $x_t\in\mathbb{X}_{\overline{Y}}$ and any $k\in\{0,\dots,N-1\}$, $W,\ell\geq 0$  implies $V_{N-k}(x^*_{k|t})\leq V_N(x_t)\leq Y_N(x_t)\leq \overline{Y}$ and thus $V_{N-k}(x^*_{k|t})\leq \gamma_{\overline{Y},s}\sigma(x^*_{k|t})$, similar to Thm.~\ref{thm:MPC}. 
Abbreviate $\ell_k=\ell(x_{k|t}^*,u_{k|t}^*)$. 
Using observability (Ass.~\ref{ass:obs}), we obtain for any $p\in\{\nu,\dots,N-1\}$:
\begin{align}
\label{eq:inter_3}
\sum_{k=p}^{N-1}\ell_k\leq V_{N-p}(x^*_{p|t})\leq \gamma_{\overline{Y},s} \sigma(x_{p|t}^*)\stackrel{\eqref{eq:obs}}{\leq} \gamma_{\overline{Y},s}c_o\sum_{k=p-\nu}^{p-1}\ell_k.
\end{align}
Furthermore, we obtain
\begin{align}
\label{eq:inter_4}
&\sum_{k=p-\nu}^{N-1}\ell_k=\sum_{k=p-\nu}^{p-1}\ell_k+\sum_{k=p}^{N-1}\ell_k\\
\stackrel{\eqref{eq:inter_3}}{\geq}&\dfrac{1}{\gamma_{\overline{Y},s}c_o} \sum_{k=p}^{N-1}\ell_k+\sum_{k=p}^{N-1}\ell_k\nonumber
= \dfrac{\gamma_{\overline{Y},s}c_o+1}{\gamma_{\overline{Y},s}c_o}\sum_{k=p}^{N-1}\ell_k,
\end{align}
for any $p\in\{\nu,\dots,N-1\}$. 
Applying this inequality recursively, we obtain
\begin{align}
\label{eq:turnpike_exp}
&\dfrac{1}{c_o}\sigma(x^*_{N|t})\stackrel{\eqref{eq:obs}}{\leq} \sum_{k=N-\nu}^{N-1}\ell_k\leq \sum_{k=\nu N_\nu}^{N-1}\ell_k\\
\stackrel{\eqref{eq:inter_4}}{\leq}& \left(\dfrac{\gamma_{\overline{Y},s}c_o}{\gamma_{\overline{Y},s}c_o+1}\right)^{N_\nu} \sum_{k=0}^{N-1}\ell_k
\stackrel{\eqref{eq:MPC}}{\leq}  \left(\dfrac{\gamma_{\overline{Y},s}c_o}{\gamma_{\overline{Y},s}c_o+1}\right)^{N_\nu}V_N(x_t). \nonumber
\end{align}
Note that the bound~\eqref{eq:turnpike_exp} decays exponentially in $N$, compared to the bound in~\eqref{eq:turnpike}, which decays with $1/N$. 
For $N\geq N_{0,s}:=\nu\dfrac{\log(c_o \overline{Y}/\delta_s)}{\log(\gamma_{\overline{Y},s}c_o+1)-\log(\gamma_{\overline{Y},s}c_o)}+\nu$, this implies $x^*_{N|t}\in\mathbb{X}_\delta$. 
Correspondingly, we can use Assumption~\ref{ass:stab} to obtain
 \begin{align*}
&V_N(x_{t+1})-V_N(x_t)+\ell(x_t,u_t)\stackrel{\eqref{eq:stab_grimm}}{\leq} \gamma_s\sigma(x^*_{N|t})\\
\stackrel{\eqref{eq:turnpike_exp}}{\leq}&\gamma_s c_o \left(\dfrac{\gamma_{\overline{Y},s}c_o}{\gamma_{\overline{Y},s}c_o+1}\right)^{N_\nu} V_N(x_t)\\
\leq&\gamma_s c_o \gamma_{\overline{Y},s} \left(\dfrac{\gamma_{\overline{Y},s}c_o}{\gamma_{\overline{Y},s}c_o+1}\right)^{N_\nu}\sigma(x_t).
\end{align*}
Analogous to the derivation in Theorem~\ref{thm:MPC}, we obtain
\begin{align*}
Y_N(x_{t+1})-Y_N(x_t)\leq -\alpha_{N,s}\epsilon_o\sigma(x_t),
\end{align*}
with $\alpha_{N,s}$ according~\eqref{eq:alpha_less_cons} using~\eqref{eq:detect_grimm_2}.  
Finally, $\alpha_{N,s}>0$ and $N>N_{0,s}$ holds for
\begin{align*}
N>N_{\overline{Y},s}:=\nu\dfrac{\max\{\log(\gamma_{\overline{Y},s}\gamma_s c_o/\epsilon_o), \log(c_o\overline{Y}/\delta_s)\}}{\log(\gamma_{\overline{Y},s}c_o+1)-\log(\gamma_{\overline{Y},s}c_o)}+\nu.
\end{align*}
The remainder of the proof is analogous to Theorem~\ref{thm:MPC}. 
\end{subequations}
\end{proof} 
More recently, these bounds have been improved using a proof of contradiction to yield  $N_{\overline{Y},s}$ linear in $\bar{Y}$, compare~\cite{kohler2021dynamic}.
 \subsubsection*{Observability}
We consider the following observability condition, similar to standard conditions used in the literature for optimization based observer design~\cite[Def.~4.28]{rawlings2017model}. 
\begin{assumption}
\label{ass:obs_plant} ($\nu$-step i-OSS)
There exist constants $\nu\in\mathbb{N}$ and $c_{obs}>0$, such that for any trajectories satisfying 
$x^p_{t+1}=f^p(x^p_t,u_t,w_t)$, $z^p_{t+1}=f^p(z^p_t,u_t,w_t)$, $w_{t+1}=s(w_t)$, $(x^p_t,u_t)\in\mathcal{Z}^p$, $(z^p_t,u_t)\in\mathcal{Z}^p$ for all $t\geq 0$, the following inequality holds for all $t\geq 0$:
\begin{align}
\label{eq:observable}
&\|x^p_t-z^p_t\|^2\\
&\leq c_{\text{obs}}\sum_{j=0}^{\nu-1} \|h(x^p_{t+j},u_{t+j},w_{t+j})-h(z^p_{t+j},u_{t+j},w_{t+j})\|^2. \nonumber
\end{align}
\end{assumption}
Assumption~\ref{ass:obs_plant} implies that two trajectories subject to the same input/disturbance $u_t/w_t$ generate the same output $h$ over $\nu$ steps, only if they had the same initial condition.  
In the linear case (Sec.~\ref{sec:linear}), Assumption~\ref{ass:obs_plant} reduces to $(A,C)$ observable.
\begin{proposition}
\label{prop:obs_plant}
Let Assumptions~\ref{ass:regulator} and \ref{ass:obs_plant} hold and suppose that $f^p,h$ are Lipschitz continuous. 
Then Assumption~\ref{ass:obs} holds with the quadratic input-output stage cost~\eqref{eq:ell_reg}.
\end{proposition}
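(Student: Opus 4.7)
The strategy is to bound $\sigma(x_{t+\nu}) = \|x^p_{t+\nu} - \pi_x(w_{t+\nu})\|^2$ by introducing an auxiliary trajectory that allows us to invoke the $\nu$-step i-OSS condition~\eqref{eq:observable}. Specifically, I would define $\tilde z^p_{t}:=\pi_x(w_t)$ and propagate $\tilde z^p_{t+k+1}:=f^p(\tilde z^p_{t+k},u_{t+k},w_{t+k})$ for $k=0,\dots,\nu-1$, i.e., the trajectory that starts on the regulator manifold and is driven forward by the \emph{actual} control $u_{t+k}$. Then a single triangle inequality gives
\begin{align*}
\sigma(x_{t+\nu})\leq 2\|x^p_{t+\nu}-\tilde z^p_{t+\nu}\|^2 + 2\|\tilde z^p_{t+\nu}-\pi_x(w_{t+\nu})\|^2.
\end{align*}
The first term is handled by Assumption~\ref{ass:obs_plant} applied to $x^p$ and $\tilde z^p$ (which share the input sequence $u_{t+k}$), while the second term measures the drift of the auxiliary trajectory away from the regulator manifold caused by the deviation $u_{t+k}-\pi_u(w_{t+k})$.

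Next, I would exploit the regulator equations: since $h(\pi_x(w),\pi_u(w),w)=0$, Lipschitz continuity of $h$ gives
\begin{align*}
\|h(\tilde z^p_{t+j},u_{t+j},w_{t+j})\|^2\leq 2L_h^2\big(\|\tilde z^p_{t+j}-\pi_x(w_{t+j})\|^2+\|u_{t+j}-\pi_u(w_{t+j})\|^2\big),
\end{align*}
and similarly each output-difference term inside the i-OSS bound splits via triangle inequality into $2\|h(x^p_{t+j},u_{t+j},w_{t+j})\|^2+2\|h(\tilde z^p_{t+j},u_{t+j},w_{t+j})\|^2$. The first piece is exactly the output portion of the stage cost (up to $\lambda_{\min}(Q)^{-1}$), and the second piece is controlled by the drift $\|\tilde z^p_{t+j}-\pi_x(w_{t+j})\|^2$.

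It then remains to bound this drift. Since $\tilde z^p_t$ and $\pi_x(w_t)$ coincide at time $t$ but are driven by different inputs ($u_{t+k}$ vs.\ $\pi_u(w_{t+k})$), Lipschitz continuity of $f^p$ with constant $L_f$ together with a discrete Gr\"onwall / telescoping argument yields
\begin{align*}
\|\tilde z^p_{t+j}-\pi_x(w_{t+j})\|^2\leq C_j\sum_{k=0}^{j-1}\|u_{t+k}-\pi_u(w_{t+k})\|^2,\qquad j=0,\dots,\nu,
\end{align*}
with $C_j$ polynomial in $L_f^j$ and $j$. Absorbing the constant $\lambda_{\min}(R)^{-1}$ turns the right-hand side into a multiple of $\sum_{k=0}^{\nu-1}\|u_{t+k}-\pi_u(w_{t+k})\|_R^2$. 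Assembling all pieces produces a finite constant $c_o=c_o(c_{\text{obs}},L_f,L_h,\nu,\lambda_{\min}(Q),\lambda_{\min}(R))$ for which~\eqref{eq:obs} holds.

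\textbf{Main obstacle.} The technical nuisance is ensuring that the auxiliary trajectory $\tilde z^p$ is admissible so that Assumption~\ref{ass:obs_plant} is actually applicable: the condition requires $(\tilde z^p_{t+k},u_{t+k})\in\mathcal{Z}^p$. This is where Assumption~\ref{ass:regulator} (strict feasibility, $\pi_x(w)\in\mathrm{int}(\mathcal{Z}^p)$) together with continuity of $f^p$ is used, or one reads the $\nu$-step i-OSS condition as holding for the unconstrained dynamics (as is standard for i-OSS in the observer literature). Aside from this, the proof is a careful bookkeeping exercise of Lipschitz constants and eigenvalue bounds, with no deep additional ideas.
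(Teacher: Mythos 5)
Your plan is correct and rests on the same ingredients as the paper's proof (an auxiliary trajectory, the $\nu$-step i-OSS bound applied to a pair sharing an input, Lipschitz propagation over the finite window, the regulator equations to make $h$ vanish on the reference, and $Q,R\succ 0$ to absorb constants), but you make the dual choice of auxiliary trajectory. The paper keeps the true initial condition and swaps the input: it defines $\tilde x^p$ with $\tilde x^p_0=x^p_0$ driven by the feedforward $v_t=\pi_u(w_t)$, applies Assumption~\ref{ass:obs_plant} to the pair $(\tilde x^p, \pi_x(w))$ (both driven by $v$), obtains $\sigma(x_0)\leq c_{obs}\sum_j\|h(\tilde x^p_j,v_j,w_j)\|^2$, and then relates $\sigma(x_\nu)$ to $\sigma(x_0)$ and $\tilde x^p$ to $x^p$ via $\sum_j\|u_j-v_j\|^2$. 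You instead keep the true input and swap the initial condition ($\tilde z^p_t=\pi_x(w_t)$ driven by $u$), apply the i-OSS bound to $(x^p,\tilde z^p)$, and push the input deviation into a Gr\"onwall-type drift bound for $\|\tilde z^p_{t+j}-\pi_x(w_{t+j})\|$. Both routes work and have essentially the same bookkeeping cost. Two small remarks: (i) Assumption~\ref{ass:obs_plant} bounds the state difference at the \emph{start} of the window, so your first term $\|x^p_{t+\nu}-\tilde z^p_{t+\nu}\|^2$ additionally needs the forward propagation $\|x^p_{t+\nu}-\tilde z^p_{t+\nu}\|^2\leq L_f^{2\nu}\|x^p_t-\tilde z^p_t\|^2$ before the i-OSS bound applies; this is a one-line consequence of the Lipschitz assumption you already use, but it should be stated (the paper's proof performs the analogous step explicitly when bounding $\sigma(x_k)$ by $\sigma(x_0)$). (ii) The constraint-admissibility issue you flag for the auxiliary trajectory is real but is present to exactly the same degree in the paper's proof, which never verifies $(\tilde x^p_j,v_j)\in\mathcal{Z}^p$ either; so your treatment is no less rigorous than the original.
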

\begin{proof}
\begin{subequations}
\label{eq:obs_proof}
Consider $z^p_t=\pi_x(w_t)$, $v_t=\pi_u(w_t)$ (Ass.~\ref{ass:regulator}) and $\tilde{x}^p_{t+1}=f^p(\tilde{x}^p_t,v_t,w_t)$, $x^p_0=\tilde{x}^p_0$, $x^p_{t+1}=f^p(x^p_t,u_t,w_t)$, which implies
\begin{align}
\label{eq:obs_proof_0}
\sigma({x}_0)\stackrel{\eqref{eq:sigma}}{=}\|x ^p_0-z^p_0\|^2\stackrel{\eqref{eq:reg2}, \eqref{eq:observable}}{\leq }c_{obs}\sum_{j=0}^{\nu-1}\|h(\tilde{x}^p_j,v_j,w_j)\|^2.
\end{align}
Lipschitz continuity of $f^p$ implies for any $k=0,\dots,\nu$:
\begin{align}
\label{eq:obs_proof_1}
\|x^p_k-\tilde{x}^p_k\|^2\leq c_1 \sum_{j=0}^{\nu-1}\|u_j-v_j\|^2, 
\end{align}
with some constant $c_1>0$. 
Similarly, Lipschitz continuity implies for any $k=1,\dots,\nu$: 
\begin{align}
\label{eq:obs_proof_2}
c_2\sigma(x_k)=&c_2\|x^p_k-z^p_k\|^2\leq 2c_2\|\tilde{x}^p_k-z^p_k\|^2+2c_2\|x_p^k-\tilde{x}^p_k\|^2\nonumber\\
\stackrel{\eqref{eq:obs_proof_1}}{\leq}& 2c_2L_f^{2k}\|x^p_0-z^p_0\|^2+2c_2c_1\sum_{j=0}^{\nu-1}\|u_j-v_j\|^2\nonumber\\
\leq & \sigma(x_0)+\sum_{j=0}^{\nu-1}\|u_j-v_j\|^2,
\end{align}
with the Lipschitz constant $L_f>0$ and $c_2=\frac{1}{2\max\{1,L_f^{2\nu},c_1\}}>0$, where we used $\|a+b\|^2\leq 2(\|a\|^2+\|b\|^2)$ in the first inequality. 
Using further Lipschitz continuity of $h$ and  $\|a+b\|^2\leq 2(\|a\|^2+\|b\|^2)$, we have
\begin{align}
\label{eq:obs_proof_3}
&\|h(\tilde{x}^p_j,v_j,w_j)\|^2\\
\leq&2\|h(x^p_j,u_j,w_j)\|^2+2L_h^2(\|\tilde{x}^p_j-x^p_j\|^2+\|u_j-v_j\|^2), \nonumber
\end{align}
for $j=0,\dots,\nu-1$. 
Combining all these inequalities, we arrive at
\begin{align*}
&c_2\sigma(x_\nu)\stackrel{\eqref{eq:obs_proof_2}}{\leq} \sigma(x_0)+\sum_{j=0}^{\nu-1}\|u_j-v_j\|^2\\
\stackrel{\eqref{eq:obs_proof_0}}{\leq}& \sum_{j=0}^{\nu-1}c_{obs}\|h(\tilde{x}^p_j,v_j,w_j)\|^2+\|u_j-v_j\|^2\\
\stackrel{\eqref{eq:obs_proof_3}}{\leq}&\sum_{j=0}^{\nu-1}2 c_{obs} (\|h(x^p_j,u_j,w_j)\|^2+L_h^2(\|\tilde{x}_j^p-x_j^p\|^2+\|u_j-v_j\|^2))\\
&+\|u_j-v_j\|^2\\
\stackrel{\eqref{eq:obs_proof_1}}{\leq}&\sum_{j=0}^{\nu-1} 2c_{obs}\|h(x^p_j,u_j,w_j)\|^2+(1+2c_{obs}L_h^2(1+\nu c_1))\|u_j-v_j\|^2\\
\stackrel{\eqref{eq:ell_reg}}{\leq }&\max\left\{\frac{2c_{obs}}{\lambda_{\max}(Q)},\frac{1+2c_{obs}L_h^2(1+\nu c_1)}{\lambda_{\min}(R)}\right\} \sum_{j=0}^{\nu-1}\ell(x_j,u_j). &\qedhere
\end{align*}
\end{subequations}
\end{proof}
We note that Condition~\eqref{eq:obs} is analogous to final-state observability (cf~\cite[Def.~4.29]{rawlings2017model}) and hence the proof of Proposition~\ref{prop:obs_plant} is analogous to the more general result in~\cite[Prop.~4.31]{rawlings2017model}, which showed that given a (uniform) continuity bound, observability implies final-state observability. 
 \begin{remark}(Input-output models)
Consider for simplicity $\ell(x,u)=\|y\|^2_Q+\|u\|_R^2$ with some output $y=h(x,u)$. 
In the following, we show how the considered detectability and observability conditions trivially hold for final-state observable systems with the (typically non-minimal) state $x_t = (Q^{1/2}y_{t-\nu},\dots Q^{1/2}y_{t-1},R^{1/2}u_{t-\nu},\dots, R^{1/2}u_{t-1})$, $\sigma(x)=\|x\|^2$. 
Similar state-space representations are often considered for Nonlinear AutoregRessive models with eXogeneous inputs (NARX) and data-driven input-output models, compare, e.g.,~\cite{bongard2021robust}. 
First, note that Assumption~\ref{ass:obs} is trivially satisfied with $c_o=1$. 
Furthermore, a suitable storage function satisfying Assumption~\ref{ass:detect} can be constructed using
\begin{align}
W(x_t)=\dfrac{1}{\nu}\sum_{k=1}^\nu (\nu+1-k)\ell(x_{t-k},u_{t-k}),
\end{align}
which satisfies~\eqref{eq:detect_grimm} with $\gamma_o=1$, $\epsilon_o=1/\nu$.
\end{remark}

 \begin{remark}
\label{rk:flat}
(Flat systems and observability)
Consider the setup in Section~\ref{sec:minPhase}.
In the special case that $y$ is a flat output, we have no zero dynamics, i.e. $d=n_p-1$ in Assumption~\ref{ass:BINF}. 
Thus, the system reduces to an FIR filter with the input $F_1$. 
Hence, similar to Prop.~\ref{prop:obs_plant} and Prop.~\ref{prop:minphase_detect}, one can show that the stage cost $\ell_d$ satisfies the stronger observability condition (Ass.~\ref{ass:obs}) with $c_o=1$, $\nu=d+1=n_p$.
\end{remark} 

\clearpage
\subsection{Error feedback output regulation} 
\label{app:error_feedback}
In this section, we show how the results in Sections~\ref{sec:reg}--\ref{sec:input} can be extended to error feedback output regulation, i.e., when only a noisy output can be measured. 
The setup and observer are discussed in Section~\ref{sec:error_feedback_1}, continuity bounds are derived in Section~\ref{sec:error_feedback_2}, finite-gain $\mathcal{L}_2$ stability is shown in Section~\ref{sec:error_feedback_3} and some discussion on the results can be found in Section~\ref{sec:error_feedback_4}. 
%
\subsubsection{Error-feedback -  Setup and ISS observer}
\label{sec:error_feedback_1}
We consider the case where only a noisy output $\tilde{y}_t=y_t+\eta_t$ can be measured, 
with some noise $\eta_t$. 
While the problem of output feedback is always of relevance in MPC~\cite{findeisen2003state}, this is particularly the case in output regulation, which is classically solved with a dynamic error feedback~\cite{isidori1990output}. 
The basic idea is to use an observer to obtain estimates $(\hat{x}^p,\hat{w})$ using ($\tilde{y},u$) and implement a certainty equivalence MPC.
Then, by combining stability properties of the observer (Ass.~\ref{ass:exp_obs}), with continuity and  stability properties of the nominal output regulation MPC (cf. Thm.~\ref{thm:MPC} and Lemma~\ref{lemma:cont} below), we show finite-gain $\mathcal{L}_2$ stability in Theorem~\ref{thm:dyn_error}. 

We consider the following simplifying assumptions. 
\begin{assumption}
\label{ass:no_state_con}
(No state constraints)
The constraint set is given by  $\mathcal{Z}^p=\mathbb{R}^{n_p}\times\mathbb{U}$.
\end{assumption}
\begin{assumption}
\label{ass:increm_stable}
(Incrementally stable plant)
Assumption~\ref{ass:increm_stab} holds with $\kappa(x^p,t)=v_t$ and $\delta_{loc}=\infty$. 
\end{assumption}
%
\begin{assumption}
\label{ass:increm_w}
(Stable exosystem)
There exists a continuous incremental Lyapunov function $V_w:\mathbb{W}\times\mathbb{W}\rightarrow\mathbb{R}_{\geq 0}$ and constants $c_{l,w},c_{u,w}>0$, such that
\begin{subequations}
\label{eq:increm_w}
\begin{align}
\label{eq:increm_w_1}
c_{l,w}\|w_1-w_2\|\leq V_w(w_1,w_2)\leq & c_{u,w}\|w_1-w_2\|,\\
\label{eq:increm_w_2}
V_w(s(w_1),s(w_2))\leq& V_w(w_1,w_2).
\end{align}
\end{subequations}
\end{assumption}
The relaxation of Assumptions~\ref{ass:no_state_con}--\ref{ass:increm_stable}  requires additional tools from robust MPC to ensure recursive feasibility, constraint satisfaction and stability, which are beyond the scope of this paper and will be briefly discussed in Remark~\ref{rk:robust_output} below. 
Assumption~\ref{ass:increm_w}, similar to the classical conditions in~\cite[(A1)]{castillo1993nonlinear}, \cite[H1]{isidori1990output}, \cite[Ass.~A4.]{magni2001output}, is natural since an unstable exosystem cannot be stabilized and an asymptotically stable exosystem can be neglected, and hence via~\eqref{eq:increm_w} we only assume Lyapunov stability. 

For simplicity, we consider a  Luenberger-like observer with state estimate $\hat{x}_t=(\hat{x}_t^p,\hat{w}_t)\in\mathbb{R}^n$ and observer gain $L_t\in\mathbb{R}^{n\times p}$:
\begin{align}
\label{eq:obs_Luenberg}
\hat{x}_{t+1}={f}(\hat{x}_t,u_t)+L_t\cdot (\tilde{y}_t-h(\hat{x}^p_t,u_t,\hat{w}_t)),
\end{align} 
which includes standard observer designs like extended Kalman Filter (EKF) or high-gain observers.  
We denote the observer error by $\hat{e}_t=x_t-\hat{x}_t$. 
The following assumption captures the desired properties of the observer, similar to~\cite[Ass.~1]{kohler2019simple}.
\begin{assumption}
\label{ass:exp_obs} (ISS observer)
There exist constants $\gamma_1$, $\gamma_2$, $\gamma_3$, $\gamma_4>0$, $\rho\in(0,1)$, such that for any trajectory satisfying $(x^{p}_t,u_t)\in\mathcal{Z}^{p}$ and \eqref{eq:sys} for all $t\geq 0$, there exists an incremental Lyapunov function $\hat{V}_{o}:\mathbb{R}^n\times\mathbb{N}\rightarrow\mathbb{R}_{\geq 0}$ satisfying the following inequalities for all $t\geq 0$
\begin{subequations}
\label{eq:obs_exp}
\begin{align}
\label{eq:obs_exp_1}
\gamma_1\|\hat{e}_t\|^2\leq \hat{V}_o(\hat{x}_t,t)\leq& \gamma_2\|\hat{e}_t\|^2,\\
\label{eq:obs_exp_2}
\hat{V}_o(\hat{x}_{t+1},t+1)\leq& \rho\hat{V}_o(\hat{x}_t,t)+\gamma_3\|\eta_t\|^2,\\
\label{eq:obs_exp_3}
 \|\hat{x}_{t+1}-f(\hat{x}_t,u_t)\|^2\leq& \gamma_4 \hat{V}_o(\hat{x}_t,t)+\gamma_5\|\eta_t\|^2, 
\end{align}
\end{subequations}
with $\hat{x}_{t+1}$ according to~\eqref{eq:obs_Luenberg}. 
\end{assumption}
Conditions~\eqref{eq:obs_exp_1}--\eqref{eq:obs_exp_2} imply that the observer error is input-to-state stable (ISS). 
Condition~\eqref{eq:obs_exp_3} directly follows from the structure\footnote{%
Condition~\eqref{eq:obs_exp_3} also follows from~\eqref{eq:obs_exp_1}--\eqref{eq:obs_exp_2}, if $f$ is Lipschitz continuous:
$\|\hat{x}_{t+1}-f(\hat{x}_t,u_t)\|^2
\leq 2\|\hat{e}_{t+1}\|^2+2\|f(x_{t},u_t)-f(\hat{x}_t,u_t)\|^2
\leq 2\frac{L_f^2+\rho}{\gamma_1}\hat{V}_o(\hat{x}_t,t)+\frac{2\gamma_3}{\gamma_1}\|\eta_t\|^2.$
Thus, the following results equally apply to more general observers, e.g.,  constrained moving horizon estimation (MHE)~\cite{muller2017nonlinear}.}~\eqref{eq:obs_Luenberg} if the observer-gain $L_t$ is uniformly bounded and $h$ is Lipschitz continuous. 
In order to design such a stable observer, a detectability condition similar to Assumption~\ref{ass:IOSS} for the state $x=(x^p,w)$ is necessary (cf.~\cite[Prop.~23]{sontag1997output}).  
 Similar detectability conditions are also required for error-feedback  output regulation in the classical setup, compare e.g. \cite[A3/H3]{isidori1990output}. 
\subsubsection*{Dynamic error feedback MPC}
Given the stable observer (Ass.~\ref{ass:exp_obs}),  error-feedback output regulation is simply achieved by replacing the measured state $x_t=(x^p_t,w_t)$ by the estimate $\hat{x}_t=(\hat{x}^p_t,\hat{w}_t)$ in the MPC problem~\eqref{eq:MPC}:
\begin{subequations}
\label{eq:MPC_dyn}
\begin{align}
\label{eq:MPC_dyn_1}
V_N(\hat{x}_t):=\min_{u_{\cdot|t}\in\mathbb{U}^N}& \sum_{k=0}^{N-1} \ell(x_{k|t},u_{k|t})\\
\label{eq:MPC_dyn_2}
\text{s.t. }&{x}_{0|t}=\hat{x}_t,~{x}_{k+1|t}=f(x_{k|t},u_{k|t}),\\
&k=0,\dots,N-1,\nonumber
\end{align}
\end{subequations}
with the stage cost~\eqref{eq:ell_reg}. 
Thus, in each time step $t$, the observer updates the estimate $\hat{x}_t$ using $(u_{t-1},\tilde{y}_{t-1})$, the optimization problem~\eqref{eq:MPC_dyn} is solved and the first part of the resulting optimal input trajectory is applied to the system, i.e. $u_t=u^*_{0|t}$.

\subsubsection{Continuity properties}
\label{sec:error_feedback_2}
In Lemmas~\ref{lemma:increm_full}--\ref{lemma:cont}, we establish incremental stability properties of the full state $x$ and subsequently continuity properties of the value function $V_N$, using the following additional (global) Lipschitz continuity conditions. 
\begin{assumption}
\label{ass:continuity}(Lipschitz continuity)
The function $f^p$ is Lipschitz continuous w.r.t. $w$, the output map $h$  and the regulator maps $\pi_x,\pi_u$ from Assumption~\ref{ass:regulator} are Lipschitz continuous, and $\sqrt{V_s}$ from Assumption~\ref{ass:increm_stable} is Lipschitz continuous w.r.t. $x^p$. 
\end{assumption}
\begin{lemma}
\label{lemma:increm_full}
Let Assumptions~\ref{ass:no_state_con}--\ref{ass:increm_w} and \ref{ass:continuity} hold.
Then the system~\eqref{eq:sys} is incrementally stable, i.e., there exists a constant  $c_{sw}>0$, such that for any two initial conditions $x_0,z_0\in\mathbb{R}^n$ and any input trajectory $u_t=v_t\in\mathbb{U}$, the resulting state trajectories $x_t,z_t$ satisfy
\begin{align}
\label{eq:increm_full}
\|x_t-z_t\|^2 \leq c_{sw} \|x_0-z_0\|^2.
\end{align}
\end{lemma}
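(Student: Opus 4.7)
The plan is to decompose the trajectory difference into two pieces — one capturing the error in the initial plant state under a common exogenous signal, the other capturing the propagation of the exogenous-signal mismatch through the plant — bound each by means of Assumption~\ref{ass:increm_stable} and Assumption~\ref{ass:increm_w}, and then combine them via the triangle inequality. Write the two trajectories as $x_t=(x^p_t,w_t)$ and $z_t=(z^p_t,\omega_t)$ under the common input $u_t=v_t$, and introduce the auxiliary plant trajectory $\bar z^p_t$ defined by $\bar z^p_0=z^p_0$ and $\bar z^p_{t+1}=f^p(\bar z^p_t,u_t,w_t)$, i.e.\ starting from $z^p_0$ but driven by the \emph{first} exogenous signal.

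First I would dispense with the exosystem. Applying \eqref{eq:increm_w_2} inductively and sandwiching with~\eqref{eq:increm_w_1} yields the Lyapunov-stability estimate $\|w_t-\omega_t\|\leq (c_{u,w}/c_{l,w})\|w_0-\omega_0\|$. Next, since $x^p_t$ and $\bar z^p_t$ share the same input and the same exogenous signal, Assumption~\ref{ass:increm_stable} (which is Assumption~\ref{ass:increm_stab} with $\kappa=v$ and $\delta_{loc}=\infty$) gives $V_s(x^p_{t+1},\bar z^p_{t+1},w_{t+1})\leq \rho_s V_s(x^p_t,\bar z^p_t,w_t)$, hence iterating and using \eqref{eq:increm_b} yields $\|x^p_t-\bar z^p_t\|^2\leq (c_{s,u}/c_{s,l})\,\rho_s^t\,\|x^p_0-z^p_0\|^2$.

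The main step is to bound $\|\bar z^p_t-z^p_t\|$, where the two trajectories start from the same $z^p_0$ but are driven by different exogenous sequences $w_t$ and $\omega_t$. Defining $d_t:=\sqrt{V_s(\bar z^p_t,z^p_t,w_t)}$ and inserting the intermediate point $\hat y_{t+1}:=f^p(z^p_t,u_t,w_t)$, the triangle-type inequality available from Lipschitz continuity of $\sqrt{V_s}$ w.r.t.\ $x^p$ (Assumption~\ref{ass:continuity}) gives
\[
d_{t+1}\;\leq\;\sqrt{V_s(\bar z^p_{t+1},\hat y_{t+1},w_{t+1})}+L_V\|\hat y_{t+1}-z^p_{t+1}\|.
\]
The first term is $\leq \sqrt{\rho_s}\,d_t$ by Assumption~\ref{ass:increm_stable} applied to $(\bar z^p_t,z^p_t)$ under the common $w_t$, and the second term is $\leq L_V L_{f,w}\|w_t-\omega_t\|$ by Lipschitz continuity of $f^p$ w.r.t.~$w$. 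Since $d_0=\sqrt{V_s(z^p_0,z^p_0,w_0)}=0$, summing the geometric recursion together with the already-established bound on $\|w_t-\omega_t\|$ produces $d_t\leq \tfrac{L_V L_{f,w}(c_{u,w}/c_{l,w})}{1-\sqrt{\rho_s}}\|w_0-\omega_0\|$, whence $\|\bar z^p_t-z^p_t\|^2\leq c_{s,l}^{-1}d_t^2$ is a constant multiple of $\|w_0-\omega_0\|^2$.

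Finally I would combine the two pieces via $\|x^p_t-z^p_t\|^2\leq 2\|x^p_t-\bar z^p_t\|^2+2\|\bar z^p_t-z^p_t\|^2$, add the exosystem bound $\|w_t-\omega_t\|^2\leq (c_{u,w}/c_{l,w})^2\|w_0-\omega_0\|^2$, and choose $c_{sw}$ as the maximum of the resulting constants on $\|x^p_0-z^p_0\|^2$ and $\|w_0-\omega_0\|^2$ to get \eqref{eq:increm_full}. The only delicate point is the use of the Lipschitz property of $\sqrt{V_s}$ in its \emph{second} argument in the first displayed inequality above; this is legitimate since Assumption~\ref{ass:continuity} states Lipschitzness of $\sqrt{V_s}$ in the plant-state variable (and $V_s(z,z,w)=0$ together with the quadratic sandwich~\eqref{eq:increm_b} makes this symmetric up to constants).
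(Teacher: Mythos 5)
Your overall strategy is sound and, at its core, the same as the paper's: the exosystem mismatch is first frozen via Assumption~\ref{ass:increm_w}, and then fed as an additive perturbation into a geometric recursion on $\sqrt{V_s}$, using Lipschitz continuity of $f^p$ with respect to $w$ and of $\sqrt{V_s}$ with respect to the plant state. The paper runs this as a single recursion on $\sqrt{V_s(x^p_t,z^p_t,w_t)}$ with the two trajectories driven by the two different exogenous signals directly (no auxiliary trajectory); your extra decomposition through $\bar z^p_t$ is harmless and yields the same conclusion, at the cost of one more triangle inequality.

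However, the point you yourself flag as ``delicate'' is a genuine gap. In the recursion for $d_t=\sqrt{V_s(\bar z^p_t,z^p_t,w_t)}$ you insert the intermediate point $\hat y_{t+1}=f^p(z^p_t,u_t,w_t)$ in the \emph{second} argument of $V_s$, whereas Assumption~\ref{ass:continuity} only provides Lipschitz continuity of $\sqrt{V_s}$ in its first argument. Your proposed repair via $V_s(z,z,w)=0$ and the sandwich~\eqref{eq:increm_b} does not deliver what is needed: from~\eqref{eq:increm_b} alone one only obtains $\sqrt{V_s(x,z_1,w)}\leq \sqrt{c_{s,u}/c_{s,l}}\,\sqrt{V_s(x,z_2,w)}+\sqrt{c_{s,u}}\,\|z_1-z_2\|$, i.e.\ a multiplicative factor $\sqrt{c_{s,u}/c_{s,l}}\geq 1$ in front of the contraction term. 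Carried into your recursion this turns the decay rate into $\sqrt{\rho_s\,c_{s,u}/c_{s,l}}$, which need not be smaller than one, so the geometric summation that gives the uniform bound on $d_t$ breaks down; genuine Lipschitz continuity in the second argument is simply not implied by the stated assumptions (think of $V_s(x,z,w)=(x-z)^\top P(z,w)(x-z)$ with a state-dependent metric). The fix is immediate and is exactly what the paper does: place the trajectory driven by the ``wrong'' exogenous signal in the \emph{first} slot of $V_s$ (equivalently, insert the intermediate point $f^p(z^p_t,u_t,w_t)$ into the first argument), so that the perturbation $L_V L_{f,w}\|w_t-\omega_t\|$ is absorbed by the Lipschitz property that is actually assumed, and \eqref{eq:increm_a} with the common $w_t$ then yields the clean contraction $\sqrt{\rho_s}\,d_t$. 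With that modification the rest of your argument goes through.
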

\begin{proof}
Assumption~\ref{ass:increm_w} ensures $\|w_k-\tilde{w}_k\|\leq \dfrac{c_{u,w}}{c_{l,w}}\|w_0-\tilde{w}_0\|$ for any two initial conditions $w_0,\tilde{w}_0$ of the exosystem. 
Consider $z_t=(z^p_t,w_t)$, $x_t=(x^p_t,\tilde{w}_t)$ and $u_t=v_t$. 
Lipschitz continuity of $f^p,\sqrt{V_s}$ (Ass.~\ref{ass:continuity}) and Inequality~\eqref{eq:increm_a} from Assumption~\ref{ass:increm_stab}/\ref{ass:increm_stable} imply 
the existence of a positive constant $L_s$, such that
\begin{align*}
&\sqrt{V_s(x^p_{t+1},t+1)}=\sqrt{V_s(f^p(x^p_t,v_t,\tilde{w}_t),t+1)}\\
\leq &L_s\|w_t-\tilde{w}_t\|+\sqrt{V_s(f^p(x^p_t,v_t,{w}_t),t+1)}\\
\stackrel{\eqref{eq:increm_a},\eqref{eq:increm_w}}{\leq}& \sqrt{\rho_s} \sqrt{V_s(x^p_t,t)}+L_s\dfrac{c_{u,w}}{c_{l,w}}\|w_0-\tilde{w}_0\|,
\end{align*}
which recursively applied ensures 
\begin{align*}
\sqrt{V_s(x^p_t,t)}
\leq &\max\left\{\sqrt{V_s(x^p_0,0)},\frac{L_{sw}c_{u,w}}{(1-\sqrt{\rho_s})c_{l,w}}\|w_0-\tilde{w}_0\|\right\}
\end{align*}
for all $t\geq 0$. 
Thus, condition~\eqref{eq:increm_full} holds with $c_{sw}:=\frac{c_{u,w}^2}{c_{l,w}^2}+\frac{1}{c_{s,l}}\max\{\sqrt{c_{s,u}},\frac{L_{sw}c_{u,w}}{(1-\sqrt{\rho_s})c_{l,w}}\}^2$.
\end{proof}
\begin{lemma}
\label{lemma:cont}
Let Assumption~\ref{ass:continuity} hold. 
There exist constants $c_\ell$, $c_\sigma>0$, such that for any $x,\tilde{x}\in\mathbb{R}^n$, $u\in\mathbb{U}$ and all $\epsilon>0$, the following inequalities hold:
\begin{subequations}
\label{eq:cont}
\begin{align}
\label{eq:ell_cont}
\ell(x,u)\leq &(1+\epsilon)\ell(\tilde{x},u)+\frac{1+\epsilon}{\epsilon}c_\ell\|x-\tilde{x}\|^2,\\
\label{eq:sigma_cont}
\sigma(x)\leq& (1+\epsilon)\sigma(\tilde{x})+\frac{1+\epsilon}{\epsilon}c_\sigma\|x-\tilde{x}\|^2.
\end{align}
Suppose further that Assumptions~\ref{ass:no_state_con}--\ref{ass:increm_w} hold. 
There exists a constant $c_V>0$, such that the following continuity property holds for any $x,\tilde{x}\in\mathbb{R}^n$, $N\in\mathbb{N}$ and all $\epsilon>0$:
\begin{align}
\label{eq:value_cont}
V_N(x)\leq (1+\epsilon)V_N(\tilde{x})+N\frac{1+\epsilon}{\epsilon}c_V\|x-\tilde{x}\|^2.
\end{align}
\end{subequations}
\end{lemma}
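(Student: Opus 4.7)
The three inequalities all follow from the elementary Young-type bound $\|a+b\|_M^2 \leq (1+\epsilon)\|a\|_M^2 + \tfrac{1+\epsilon}{\epsilon}\|b\|_M^2$ (valid for any positive semidefinite $M$ and $\epsilon>0$), combined with Lipschitz continuity of the relevant maps and, for the value function bound, the incremental stability result from Lemma~\ref{lemma:increm_full}. I will prove \eqref{eq:ell_cont} and \eqref{eq:sigma_cont} as straightforward pointwise estimates, and then bootstrap to \eqref{eq:value_cont} by propagating the two open-loop predictions driven by a common optimal input sequence.

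\textbf{Proof of \eqref{eq:ell_cont} and \eqref{eq:sigma_cont}.} For \eqref{eq:ell_cont}, write $x=(x^p,w)$, $\tilde x=(\tilde x^p,\tilde w)$ and apply the Young-type bound to both terms of $\ell$ in~\eqref{eq:ell_reg}:
\begin{align*}
\|h(x^p,u,w)\|_Q^2 &\leq (1+\epsilon)\|h(\tilde x^p,u,\tilde w)\|_Q^2 + \tfrac{1+\epsilon}{\epsilon}\|h(x^p,u,w)-h(\tilde x^p,u,\tilde w)\|_Q^2,\\
\|u-\pi_u(w)\|_R^2 &\leq (1+\epsilon)\|u-\pi_u(\tilde w)\|_R^2 + \tfrac{1+\epsilon}{\epsilon}\|\pi_u(w)-\pi_u(\tilde w)\|_R^2.
\end{align*}
Lipschitz continuity of $h$ and $\pi_u$ (Ass.~\ref{ass:continuity}) bounds the second summands by a multiple of $\|x-\tilde x\|^2$, yielding \eqref{eq:ell_cont} with an appropriate $c_\ell>0$. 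For \eqref{eq:sigma_cont}, apply the same splitting to $\sigma(x)=\|x^p-\pi_x(w)\|^2$:
\begin{equation*}
\|x^p-\pi_x(w)\|^2 \leq (1+\epsilon)\|\tilde x^p-\pi_x(\tilde w)\|^2 + \tfrac{1+\epsilon}{\epsilon}\|(x^p-\tilde x^p)-(\pi_x(w)-\pi_x(\tilde w))\|^2,
\end{equation*}
and use Lipschitz continuity of $\pi_x$ (Ass.~\ref{ass:continuity}) together with $\|a-b\|^2\leq 2\|a\|^2+2\|b\|^2$ to absorb both difference terms into $c_\sigma\|x-\tilde x\|^2$.

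\textbf{Proof of \eqref{eq:value_cont}.} Let $u^*_{\cdot|t}\in\mathbb{U}^N$ be an optimal input for $\tilde x$ in~\eqref{eq:MPC_dyn} with predicted trajectory $\tilde x_{\cdot|t}$, $\tilde x_{0|t}=\tilde x$. Since there are no state constraints (Ass.~\ref{ass:no_state_con}), applying the same input sequence $u^*_{\cdot|t}$ from initial condition $x$ is feasible; denote the resulting trajectory by $x_{\cdot|t}$ with $x_{0|t}=x$. Lemma~\ref{lemma:increm_full} then yields $\|x_{k|t}-\tilde x_{k|t}\|^2 \leq c_{sw}\|x-\tilde x\|^2$ for every $k=0,\dots,N-1$. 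Using the candidate input in $V_N(x)$ and applying \eqref{eq:ell_cont} termwise,
\begin{equation*}
V_N(x) \leq \sum_{k=0}^{N-1}\ell(x_{k|t},u^*_{k|t}) \leq (1+\epsilon)\sum_{k=0}^{N-1}\ell(\tilde x_{k|t},u^*_{k|t}) + \tfrac{1+\epsilon}{\epsilon}c_\ell\sum_{k=0}^{N-1}\|x_{k|t}-\tilde x_{k|t}\|^2,
\end{equation*}
and the last sum is bounded by $Nc_{sw}\|x-\tilde x\|^2$, giving \eqref{eq:value_cont} with $c_V := c_\ell c_{sw}$.

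\textbf{Main obstacle.} The only non-routine step is \eqref{eq:value_cont}: the pointwise Young bounds on $\ell$ and $\sigma$ are immediate, but transferring them to the value function requires that the two open-loop predictions remain close along the entire horizon, uniformly in $N$ and in the common input sequence. This is precisely what Lemma~\ref{lemma:increm_full} provides, and the absence of state constraints (Ass.~\ref{ass:no_state_con}) is essential to ensure that the optimal input for $\tilde x$ is admissible from $x$ as well.
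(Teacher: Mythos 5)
Your proposal is correct and follows essentially the same route as the paper's proof: the Young-type splitting $\|a+b\|_M^2\leq(1+\epsilon)\|a\|_M^2+\tfrac{1+\epsilon}{\epsilon}\|b\|_M^2$ combined with the Lipschitz bounds on $h,\pi_u,\pi_x$ for \eqref{eq:ell_cont}--\eqref{eq:sigma_cont}, and then the candidate-input argument with Lemma~\ref{lemma:increm_full} and Assumption~\ref{ass:no_state_con} for \eqref{eq:value_cont}, yielding the same constant $c_V=c_\ell c_{sw}$. No gaps.
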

\begin{proof}
\textbf{Part I. }
Cauchy-Schwarz and Young's inequality imply that for any $a,b\in\mathbb{R}^n$, $\epsilon>0$, $Q=Q^\top\succ 0$, we have 
\begin{align}
\label{eq:cauchy_schwarz}
\|a+b\|_Q^2\leq (1+\epsilon) \|a\|_Q^2+\frac{1+\epsilon}{\epsilon}\|b\|_Q^2.
\end{align}
Using $\ell$  quadratic and $h,\pi_u$ Lipschitz continuous, we have
\begin{align*}
\ell(x,u)\stackrel{\eqref{eq:cauchy_schwarz}}{\leq}&(1+\epsilon)\ell(\tilde{x},u)+\frac{1+\epsilon}{\epsilon}\|h(x^p,u,w)-h(\tilde{x}^p,u,\tilde{w})\|_Q^2\\
&+\frac{1+\epsilon}{\epsilon}\|\pi_u(w)-\pi_u(\tilde{w})\|_R^2\\
\leq & (1+\epsilon)\ell(\tilde{x},u)+\frac{1+\epsilon}{\epsilon}c_\ell\|x-\tilde{x}\|^2.
\end{align*}
with $c_\ell:=L_h^2\lambda_{\max}(Q)+L_{\pi_u}^2\lambda_{\max}(R)>0$, where $L_h,L_{\pi_u}$ are the Lipschitz constants of $h$ and $\pi_u$, respectively.
Similarly, $\pi_x$ Lipschitz continuous with $L_{\pi_x}$ and $\sigma$ quadratic ensure that condition~\eqref{eq:sigma_cont} holds with $c_\sigma:=2\max\{1,L_{\pi_x}^2\}$.\\
\textbf{Part II. }Consider two initial conditions $x_0,\tilde{x}_0$, subject to the same input trajectory $u \in\mathbb{U}^N$ resulting in state trajectories $x_t,\tilde{x}_t$. 
Lemma~\ref{lemma:increm_full} ensures that $\|x_t-\tilde{x}_t\|^2\leq c_{sw}\|x_0-\tilde{x}_0\|^2$. 
Using Inequality~\eqref{eq:ell_cont}, this ensures 
\begin{align*}
J_N(x_\cdot,u_\cdot)\leq (1+\epsilon)J_N(\tilde{x}_\cdot,u_\cdot)+N\frac{1+\epsilon}{\epsilon}c_{\ell} c_{sw}\|x_0-\tilde{x}_0\|^2.
\end{align*}
Assumption~\ref{ass:no_state_con} ensures that Problem~\eqref{eq:MPC} is feasible for all $x\in\mathbb{R}^n$ , $u\in\mathbb{U}^N$ and thus~\eqref{eq:value_cont} holds with $c_V:=c_lc_{sw}$.
\end{proof} 

\subsubsection{Stability analysis} 
\label{sec:error_feedback_3}
The following theorem summarizes the stability properties of the proposed MPC using inherent robustness properties. 
\begin{theorem}
\label{thm:dyn_error}
Let Assumptions~\ref{ass:regulator}, \ref{ass:IOSS} and \ref{ass:no_state_con}-\ref{ass:continuity} hold. 
There exists a constant $N_1>0$, such that for any $N\in\mathbb{N}$ satisfying $N>N_1$, the closed loop is finite-gain $\mathcal{L}_2$ stable, i.e., there exist constants $\gamma_{\mathcal{L}_2,N},C_N>0$, such that for all $K\in\mathbb{N}$:
\begin{align*}
\sum_{t=0}^{K} \|\hat{e}_t\|^2+\sigma(x_t)\leq C_N(\sigma(x_0)+\|\hat{e}_0\|^2)+\sum_{t=0}^K \gamma_{\mathcal{L}_2,N}\|\eta_t\|^2.
\end{align*} 
\end{theorem}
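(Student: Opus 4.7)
The plan is to construct a composite Lyapunov function that combines the rotated MPC value function $Y_N = V_N + W$ from Theorem~\ref{thm:MPC} (applied to the state \emph{estimate} $\hat{x}_t$) with the observer's ISS Lyapunov function $\hat{V}_o$ from Assumption~\ref{ass:exp_obs}, and to show that this composite function decays modulo a noise term that is quadratic in $\eta_t$, from which $\mathcal{L}_2$ stability follows by telescoping and using the continuity bound~\eqref{eq:sigma_cont} to pass from $\sigma(\hat{x}_t)$ to $\sigma(x_t)$.

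First, I would apply the nominal descent of Theorem~\ref{thm:MPC} treating $\hat{x}_t$ as the initial condition of the MPC problem~\eqref{eq:MPC_dyn}. The absence of state constraints (Ass.~\ref{ass:no_state_con}) together with Ass.~\ref{ass:increm_stable} makes Ass.~\ref{ass:stab} globally valid, so we may take $\overline{Y}$ arbitrarily large and the bounds~\eqref{eq:Lyap_MPC} hold at every $\hat{x}_t\in\mathbb{R}^n$; in particular $Y_N(f(\hat{x}_t,u_t))\le Y_N(\hat{x}_t)-\alpha_N\epsilon_o\sigma(\hat{x}_t)$. The actual observer update $\hat{x}_{t+1}$ differs from the prediction $f(\hat{x}_t,u_t)$ only by the perturbation in~\eqref{eq:obs_exp_3}. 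I would then extend Lemma~\ref{lemma:cont} from $V_N$ to $Y_N$ (this requires a Lipschitz-type regularity of $W$ inheritable from Ass.~\ref{ass:IOSS} together with the Lipschitz assumptions in Ass.~\ref{ass:continuity}) and combine the resulting continuity bound with~\eqref{eq:obs_exp_3} to obtain, for any $\epsilon>0$,
\begin{align*}
Y_N(\hat{x}_{t+1})-Y_N(\hat{x}_t) &\le \epsilon\, Y_N(\hat{x}_t)-(1+\epsilon)\alpha_N\epsilon_o\sigma(\hat{x}_t)\\
&\quad +\tfrac{1+\epsilon}{\epsilon}Nc_Y\bigl(\gamma_4\hat{V}_o(\hat{x}_t,t)+\gamma_5\|\eta_t\|^2\bigr).
\end{align*}
Using the upper bound $Y_N(\hat{x}_t)\le\gamma_{\overline{Y}}\sigma(\hat{x}_t)$ from~\eqref{eq:Lyap_1}, the $\epsilon Y_N$ term is absorbed into the dissipation $\sigma(\hat{x}_t)$ provided $\epsilon\gamma_{\overline{Y}}<\alpha_N\epsilon_o$.

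Next, I form the composite Lyapunov function $\tilde{Y}_t:=Y_N(\hat{x}_t)+\lambda\hat{V}_o(\hat{x}_t,t)$ with $\lambda>0$ to be determined, and add $\lambda$ times~\eqref{eq:obs_exp_2} to the inequality above. Choosing first $\epsilon$ small enough so that $c_\sigma^\star:=\alpha_N\epsilon_o-\epsilon\gamma_{\overline{Y}}>0$, and then $\lambda$ large enough so that $c_o^\star:=\lambda(1-\rho)-\tfrac{1+\epsilon}{\epsilon}Nc_Y\gamma_4>0$, yields
\begin{align*}
\tilde{Y}_{t+1}-\tilde{Y}_t \le -c_\sigma^\star\sigma(\hat{x}_t)-c_o^\star\hat{V}_o(\hat{x}_t,t)+c_\eta^\star\|\eta_t\|^2,
\end{align*}
with $c_\eta^\star:=\lambda\gamma_3+\tfrac{1+\epsilon}{\epsilon}Nc_Y\gamma_5$. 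Telescoping from $0$ to $K$ and using $\tilde{Y}_t\ge 0$ gives an $\mathcal{L}_2$ bound on $\sigma(\hat{x}_t)$ and $\hat{V}_o(\hat{x}_t,t)$; since $\|\hat{e}_t\|^2\le\hat{V}_o(\hat{x}_t,t)/\gamma_1$ by~\eqref{eq:obs_exp_1}, this bounds $\sum_t\|\hat{e}_t\|^2$. Finally, the continuity bound~\eqref{eq:sigma_cont} together with the just-established summability of $\|\hat{e}_t\|^2$ converts the $\mathcal{L}_2$ bound on $\sigma(\hat{x}_t)$ into one on $\sigma(x_t)$, yielding the claimed constants $C_N$ and $\gamma_{\mathcal{L}_2,N}$.

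The main obstacle is that the continuity bound from Lemma~\ref{lemma:cont} is a \emph{multiplicative} inflation $(1+\epsilon)Y_N$ rather than the additive-in-noise perturbation typically encountered in robust MPC; naively, this could destroy the descent even for small disturbances. The crucial observation that makes the analysis go through is that $Y_N$ admits a quadratic upper bound in $\sigma$ (eq.~\eqref{eq:Lyap_1}), which is itself a consequence of the (now global) cost stabilizability Ass.~\ref{ass:stab}; this lets us convert the multiplicative $\epsilon Y_N$ inflation into a small fraction of the dissipation budget $\alpha_N\epsilon_o\sigma(\hat{x}_t)$ and thereby retain a genuinely descending composite Lyapunov function. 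The sufficient horizon $N_1$ is inherited from Theorem~\ref{thm:MPC} (ensuring $\alpha_N>0$) together with the requirement $\alpha_N\epsilon_o>\epsilon\gamma_{\overline{Y}}$ for an admissible $\epsilon$.
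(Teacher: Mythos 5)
Your overall architecture (a composite Lyapunov function combining the MPC certificate with the observer's ISS Lyapunov function, absorbing the multiplicative $(1+\epsilon)$ inflation into the dissipation budget via the quadratic upper bound, then telescoping) matches the paper's. But there is a genuine gap at the one step you dismiss in parentheses: you evaluate $Y_N=V_N+W$ at the \emph{estimate} $\hat x_t$ and claim the continuity bound of Lemma~\ref{lemma:cont} extends from $V_N$ to $Y_N$ because a Lipschitz-type regularity of $W$ is ``inheritable from Ass.~\ref{ass:IOSS} together with Ass.~\ref{ass:continuity}''. It is not. The storage function from Proposition~\ref{prop:detect} is $W(x)=c\,V_o(x^p,\pi_x(w),w)$, and Assumption~\ref{ass:IOSS} only provides the quadratic sandwich $c_{o,l}\|\cdot\|^2\le V_o\le c_{o,u}\|\cdot\|^2$ plus continuity; Assumption~\ref{ass:continuity} imposes Lipschitz continuity of $\sqrt{V_s}$, not of $\sqrt{V_o}$. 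From the sandwich bounds alone the best you can derive is $W(x)\le \frac{c_{o,u}}{c_{o,l}}(1+\epsilon)W(\tilde x)+C_\epsilon\|x-\tilde x\|^2$, and the factor $c_{o,u}/c_{o,l}\ge 1$ is in general strictly larger than one and not tunable by $\epsilon$, so it cannot be absorbed into the dissipation the way the $(1+\epsilon)$ factor for $V_N$ can. Moreover, the dissipation inequality~\eqref{eq:detect_grimm_2} for $W$ only holds along the nominal transition $f(\hat x_t,u_t)$, whereas $\hat x_{t+1}$ contains the innovation term, so without that continuity you cannot even write a decrease for $W(\hat x_{t+1})-W(\hat x_t)$. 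The paper flags exactly this obstruction in Remark~\ref{rk:output_novel}.

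The paper's fix is a \emph{mixed} Lyapunov function $\hat Y_N(\hat x_t,x_t)=W(x_t)+V_N(\hat x_t)$: since only the measurement is noisy, the true state evolves nominally, $x_{t+1}=f(x_t,u_t)$, so~\eqref{eq:detect_grimm_2} applies to $W(x_t)$ exactly and no continuity of $W$ is ever needed. The price is that the descent of $V_N(\hat x_\cdot)$ must then be re-expressed at the true state: the paper keeps the $-\ell(\hat x_t,u_t)$ term from the nominal decrease, converts it via~\eqref{eq:ell_cont} and \eqref{eq:sigma_cont} into $-\ell(x_t,u_t)+c_1\|\hat e_t\|^2+c_3\sigma(x_t)$, and lets the $+\ell(x_t,u_t)$ from the $W$-inequality cancel it, leaving a net decrease $-(\epsilon_o-c_3)\sigma(x_t)$ for $\epsilon_1$ small. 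Your remaining steps --- choosing $\epsilon$ and the weight on the observer Lyapunov function, telescoping, and converting between $\sigma(\hat x_t)$ and $\sigma(x_t)$ via~\eqref{eq:sigma_cont} --- are sound and would carry over once the $W$-term is relocated to the true state.
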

\begin{proof}
First, we show a bound on the value function $V_N(\hat{x})$, then we provide a Lyapunov function to show ISS of $x$ w.r.t the noise $\eta$ and the estimation error $\hat{e}_t$. 
Finally, we provide a joint Lapunov function for the state $x$ and estimation error $\hat{e}$ w.r.t. the noise $\eta$, which implies $\mathcal{L}_2$-stability.\\
\begin{subequations}
\label{eq:proof_error_feedback}
\textbf{Part I. }Similar to Proposition~\ref{prop:stab}, Assumptions \ref{ass:regulator} and~\ref{ass:no_state_con}--\ref{ass:increm_stable} ensure that Assumption~\ref{ass:stab} holds with $\delta_s=\infty$ and  $x_t$ replaced by $\hat{x}_t$.
Using the nominal analysis in Theorem~\ref{thm:MPC} Part II, with $\gamma_{\overline{Y}}=\gamma_s+\gamma_o$ (since $\delta_s=\infty$), 
we have  
\begin{align}
\label{eq:proof_error_feedback_1}
V_N(x^*_{1|t})+\ell(\hat{x}_t,u_t)\stackrel{\eqref{eq:Value_dec}}{\leq} V_N(\hat{x}_t)+\underbrace{\dfrac{\gamma_s(\gamma_s+\gamma_o)}{\epsilon_o(N-1)}}_{=:(1-\alpha_N)\epsilon_o}\sigma(\hat{x}_t),
\end{align} 
with $\alpha_N>0$ for $N>N_1:=1+\gamma_s(\gamma_s+\gamma_o)/\epsilon_o^2$. 
From the properties of the observer (Ass.~\ref{ass:exp_obs}), we know that
\begin{align}
\label{eq:proof_error_feedback_0}
\overline{e}_t:=\|\hat{x}_{t+1}-x^*_{1|t}\|^2\stackrel{\eqref{eq:obs_exp_1},\eqref{eq:obs_exp_3}}{\leq}  \gamma_4\gamma_2 \|\hat{e}_t\|^2+\gamma_5\|\eta_t\|^2.
\end{align}
The perturbed closed loop satisfies the following inequality similar to~\eqref{eq:proof_error_feedback_1}:
\begin{align}
\label{eq:proof_error_feedback_2}
&V_N(\hat{x}_{t+1})-V_N(\hat{x}_t)\nonumber\\
\stackrel{\eqref{eq:value_cont}}{\leq} &
(1+\epsilon_1)V_N(x^*_{1|t})-V_N(\hat{x}_t)\nonumber\\
&+N\dfrac{1+\epsilon_1}{\epsilon_1}c_V\|x^*_{1|t}-\hat{x}_{t+1}\|^2\nonumber\\
\stackrel{\eqref{eq:proof_error_feedback_1},\eqref{eq:proof_error_feedback_0}}\leq &\epsilon_1 V_N(\hat{x}_t)-(1+\epsilon_1)\ell(\hat{x}_t,u_t)+N\dfrac{1+\epsilon_1}{\epsilon_1}c_V\overline{e}_t\nonumber\\
&+(1+\epsilon_1)(1-\alpha_N)\epsilon_o\sigma(\hat{x}_t)\nonumber\\
\stackrel{\eqref{eq:stab_grimm}}{\leq}&-(1+\epsilon_1)\ell(\hat{x}_t,u_t)+N\dfrac{1+\epsilon_1}{\epsilon_1}c_V\overline{e}_t\nonumber\\
&+((1+\epsilon_1)(1-\alpha_N)\epsilon_o+\epsilon_1\gamma_s)\sigma(\hat{x}_t)\nonumber\\
\stackrel{\eqref{eq:ell_cont},\eqref{eq:sigma_cont}}{\leq}&-\ell(x_t,u_t)+\dfrac{1+\epsilon_1}{\epsilon_1}c_\ell\|\hat{e}_t\|^2+N\dfrac{1+\epsilon_1}{\epsilon_1}c_V\overline{e}_t\nonumber\\
&+(1+\epsilon_1)((1+\epsilon_1)(1-\alpha_N)\epsilon_o+\epsilon_1\gamma_s)\sigma(x_t)\nonumber\\
&+\dfrac{1+\epsilon_1}{\epsilon_1}((1+\epsilon_1)(1-\alpha_N)\epsilon_o+\epsilon_1\gamma_s)c_\sigma\|\hat{e}_t\|^2\nonumber\\
=:&-\ell(x_t,u_t)+c_1\|\hat{e}_t\|^2+c_2\overline{e}_t+c_3\sigma(x_t),
\end{align}
with constants $c_1,c_2,c_3>0$ in dependence of $\epsilon_1>0$ chosen later.  \\
\textbf{Part II. } Consider the following  ISS Lyapunov function $\hat{Y}_N(\hat{x}_t,x_t)=W(x_t)+V_N(\hat{x}_t)$. 
The closed loop satisfies
\begin{align}
\label{eq:proof_error_feedback_3}
&\hat{Y}_N(\hat{x}_{t+1},x_{t+1})-\hat{Y}_N(\hat{x}_{t},x_{t})\nonumber\\
\stackrel{\eqref{eq:detect_grimm_2},\eqref{eq:proof_error_feedback_2}}{\leq} &
c_1\|\hat{e}\|_t^2+c_2\overline{e}_t-(\epsilon_o-c_3)\sigma(x_t)\nonumber\\
\stackrel{\eqref{eq:proof_error_feedback_0}}{\leq} &\tilde{c}_1\|\hat{e}_t\|^2+\tilde{c}_2\|\eta_t\|^2-\tilde{\epsilon}_o\sigma(x_t),
\end{align}
with $\tilde{c}_1:=c_1+c_2\gamma_2\gamma_4$, $\tilde{c}_2=c_2\gamma_5$, $\tilde{\epsilon}_o=\epsilon_o-c_3$.
Note that $\tilde{\epsilon}_o=\epsilon_o(1+\epsilon_1)^2\alpha_N-(1+\epsilon_1)\epsilon_1\gamma_s$. 
Thus, choosing $\epsilon_1:=\min\{1,\frac{\alpha_N \epsilon_o}{2(2\gamma_s+3\epsilon_o(1-\alpha_N))}\}\in(0,1]$ ($\epsilon_1\leq 1$ is only employed to simplify the expressions), we have $\tilde{\epsilon}_o\geq \epsilon_o\alpha_N/2>0$ and \eqref{eq:proof_error_feedback_3} can be used to show that $\hat{Y}_N$ is an ISS Lyapunov function w.r.t. the noise $\|\eta_t\|$ and the estimation error $\|\hat{e}_t\|$.\\
\textbf{Part III. }Consider the joint Lyapunov function $Y_{N,o}(\hat{x}_t,x_t,t):=\hat{Y}_N(\hat{x}_t,x_t)+\hat{c}_o\hat{V}_o(\hat{x}_t,t)$, with $\hat{c}_o:=2\frac{\tilde{c}_1}{(1-\rho)\gamma_1}>0$. 
The decrease condition follows using
 \begin{align}
\label{eq:proof_error_feedback_7}
&{Y}_{N,o}(\hat{x}_{t+1},x_{t+1},t+1)-Y_{N,o}(\hat{x}_t,x_t,t)\nonumber\\
\stackrel{\eqref{eq:obs_exp_2},\eqref{eq:proof_error_feedback_3}}{\leq} &-\tilde{\epsilon}_o\sigma(x_t)-\tilde{c}_1\|\hat{e}_t\|^2+(\tilde{c}_2+\hat{c}_o\gamma_3)\|\eta_t\|^2.
\end{align}
An upper bound follows from
\begin{align}
\label{eq:proof_error_feedback_4}
&Y_{N,o}(\hat{x}_t,x_t,t)\\
\stackrel{\eqref{eq:stab_grimm},\eqref{eq:detect_grimm_1},\eqref{eq:obs_exp_1}}{\leq} &\gamma_o\sigma({x}_t)+\gamma_s\sigma(\hat{x}_t)+\hat{c}_o\gamma_2\|\hat{e}_t\|^2 \nonumber\\ 
\stackrel{\eqref{eq:sigma_cont}}{\leq} &  (\gamma_o+2\gamma_s)\sigma(x_t)+(2c_\sigma\gamma_s+\hat{c}_o\gamma_2)\|\hat{e}_t\|^2,\nonumber
\end{align}
where we used \eqref{eq:sigma_cont} with $\epsilon=1$. 
The lower bound follows with
\begin{align}
\label{eq:proof_error_feedback_5}
&Y_{N,o}(\hat{x}_t,x_t,t)\stackrel{\eqref{eq:obs_exp_1}}{\geq} \ell(\hat{x}_t,u_t)+W(x_t)+\hat{c}_o\gamma_1\|\hat{e}_t\|^2\nonumber\\
\stackrel{\eqref{eq:ell_cont}}{\geq}&\frac{1}{2}(\ell(x_t,u_t)+W(x_t))+(\hat{c}_o\gamma_1-c_\ell)\|\hat{e}_t\|^2\nonumber\\
\stackrel{\eqref{eq:detect_grimm_2}}{\geq} &\epsilon_o/2\sigma(x_t)+(\hat{c}_o\gamma_1-c_\ell)\|\hat{e}_t\|^2,
\end{align}
\end{subequations}
where we used~\eqref{eq:ell_cont} with $\epsilon=1$ and $W\geq 0$.
The lower bound is positive definite w.r.t. $(\sigma(x)+\|\hat{e}_t\|^2)$, i.e. $(\hat{c}_o\gamma_1-c_\ell)>0$ since $\hat{c}_o=2\frac{\tilde{c}_1}{(1-\rho)\gamma_1}>\frac{c_1}{\gamma_1}>\frac{c_\ell}{\gamma_1}>0$. 
Inequalities~\eqref{eq:proof_error_feedback_7}--\eqref{eq:proof_error_feedback_5} directly imply that $Y_{N,o}$ is an ISS Lyapunov function for the state $(\sigma(x),\|\hat{e}\|)^2$ and input $\|\eta_t\|^2$. 
Furthermore, since all the bounds are linear/quadratic, the ISS gain is linear and thus the $\mathcal{L}_2$-gain holds, compare e.g.~\cite[Thm.~1]{sontag1998comments}.
\end{proof}

\subsubsection{Discussion}
\label{sec:error_feedback_4}
\begin{remark}
\label{rk:output_novel}
(Novelty output-feedback MPC)
The stability analysis in Theorem~\ref{thm:dyn_error} has two main differences compared to most existing inherent robustness analyses for output-feedback MPC:\\
1. We establish strong stability properties (finite-gain $\mathcal{L}_2$):  
This is due to the usage of the quadratic bounds from Lemma~\ref{lemma:cont}, similar to the finite-gain stability in~\cite{lorenzen2019robust} for adaptive MPC.
For comparison, most robust MPC and output-feedback analyses use simple Lipschitz bounds resulting in general practical stability/nonlinear ISS gains (cf.~\cite{messina2005discrete}, \cite[Thm.~6]{magni2004stabilization}), or even assume that no noise is present (cf.~\cite{magni2001output,findeisen2003output}). \\
2. The value function $V_N$ is not a Lyapunov function: 
The nominal stability analysis in Theorem~\ref{thm:MPC} involves the storage function $W$, which only provides suitable bounds~\eqref{eq:detect_grimm_2} for the nominal dynamics (unless additional stronger continuity conditions are imposed for $W$). 
Thus, the continuity properties from  Lemma~\ref{lemma:cont} do not necessarily imply similar continuity conditions on the Lyapunov function $Y_N(x)=W(x)+V_N(x)$ from Theorem~\ref{thm:MPC}, which is why $\hat{Y}_N(\hat{x}_t,x_t)=W(x_t)+V_N(\hat{x}_t)$ is considered in Theorem~\ref{thm:dyn_error}. 
Hence, extending the analysis to include additive disturbances on the plant and/or exosystem would require additional continuity conditions on the storage function $W$ (Ass.~\ref{ass:detect}). 
Thus, also the general analysis in~\cite{roset2008robustness} is not applicable, which assumes ISS w.r.t. additive disturbances. \\
To the best knowledge of the authors, Theorem~\ref{thm:dyn_error} is the first result to establish \textit{finite-gain} stability for noisy output-feedback MPC and in particular for the considered setup\footnote{%
As discussed in~\cite[Remark~2]{grimm2005model}, if $W,\ell,f$ are continuous, $N$ finite and $\mathbb{X}=\mathbb{R}^n$, the system has some (possibly small) inherent robustness property.}, where the value function is not a Lyapunov function.
\end{remark}

\begin{remark}
\label{rk:uniform_bounds}
(Uniform $\mathcal{L}_2$-gain in $N$)
We point out that the $\mathcal{L}_2$-gain derived in Theorem~\ref{thm:dyn_error} is not uniform in $N$, but instead deteriorates for $N\rightarrow\infty$ due to the usage of the bound~\eqref{eq:value_cont}. 
While this is also the case for most existing inherent robustness/ISS results in MPC, e.g., most evident in the bounds in~\cite{yu2014inherent,limon2009input},  in MPC without terminal constraints the stability properties ideally improve as $N$ increases, compare e.g.~\cite{kohler2018novel}.
The additional difficulty in the considered setup is that the exosystem is only (marginally) stable (Ass.~\ref{ass:increm_w}), cannot be stabilized,  but is subject to estimation errors. 
We conjecture that a uniform $\mathcal{L}_2$-gain (for $N$ arbitrary large) can be shown by using a modified input candidate solution in the stability proof. 
\end{remark}

\begin{remark}
\label{rk:robust_output}
(Robust constraint satisfaction)
In the presence of hard state constraints and/or unstable systems (Ass.~\ref{ass:no_state_con} and/or Ass.~\ref{ass:increm_stable} does not hold), the certainty equivalence output-feedback MPC~\eqref{eq:MPC_dyn} is in general not recursively feasible and/or may not necessarily ensure stability.  
Given a known set bound on the initial estimation error $\hat{e}_0\in\mathbb{E}$ and the measurement noise $\eta_t\in\mathbb{D}$, this problem can be circumvented by using constraint tightening methods from robust (output-feedback) MPC, compare e.g.~\cite{kohler2019simple,kohler2018novel,chisci2002feasibility,mayne2009robust,koehler2021Output}.  
If the bounds $\mathbb{E},\mathbb{D}$ are such that Assumption~\ref{ass:regulator} remains true with the tightened constraints (compare, e.g., \cite[Ass.~4]{kohler2018novel}), the stability result from Theorem~\ref{thm:dyn_error} remains true, albeit with more conservative bounds $N_1,\alpha_N,\gamma_{\mathcal{L}_2,N}$.
\end{remark}

\begin{remark}
\label{rk:state_feedback}
 (Partial state feedback)
An important special case of the error-feedback setup is that in addition to the noisy output $\tilde{y}_t$ also the plant state $x^p$ can be measured, which simplifies the observer design. 
In particular, in case the plant dynamics $f^p$ are independent of the exogenous signal $w$, the problem corresponds to (partially) unknown references.
These simplified scenarios also allow for simpler solutions using a robust design, compare~\cite{monasterios2018model} and \cite{di2016reference,falugi2015model}.
\end{remark}

\begin{remark}
\label{rk:offset}
(Applicability to formulations in Sections~\ref{sec:minPhase} and \ref{sec:input})
The presented error-feedback analysis is equally applicable to the formulations in Sections~\ref{sec:minPhase} and \ref{sec:input}.
In particular, since $u_t$ is known, the state $\xi_t=(u_{t-1},\dots,u_{t-T})$ is known and thus the observer is also ISS for the augmented state $\tilde{x}=(x,\xi)$ (Ass.~\ref{ass:exp_obs} holds). 
Furthermore, the periodicity condition (Ass.~\ref{ass:exo_periodic}) in combination with $s$ Lipschitz continuous implies Assumption~\ref{ass:increm_w} and the two conditions are in fact equivalent for linear systems (cf. Sec.~\ref{sec:linear}).
We point out that for $s(w)=w$ (constant  exogenous signal $w$), the proposed error-feedback output regulation MPC provides a natural solution to the \textit{offset-free} control problem often considered in MPC~\cite{morari2012nonlinear}. 
\end{remark}

\begin{remark}
\label{rk:rexp_obs}
(Exponential-linear bounds)
In Sections~\ref{sec:reg}--\ref{sec:input} we used simple linear/quadratic/exponential bounds (Ass.~\ref{ass:increm_stab}, \ref{ass:IOSS}, \ref{ass:nonres_nonlin}), but the results for exact state measurements can be directly extended to ensure asymptotic stability with more general bounds using comparison functions, as, e.g., done in~\cite{grimm2005model}. 
However, for the nonlinear error/output-feedback problem considered in this section, the restriction to linear/quadratic bounds is \textit{not} without loss of generality, since the separation principle does not hold for general nonlinear systems, but, e.g., for nonlinear systems with linear/exponential bounds (cf., e.g.,~\cite{manchester2014output}).
\end{remark}

\begin{remark}
\label{rk:error_linear} (Linear case)
In the special case of linear systems (Sec.~\ref{sec:linear}), Assumption~\ref{ass:increm_stable} is equivalent to $A$ Schur and Assumption~\ref{ass:continuity} holds since $f^p,h,\pi_x,\pi_u$ are linear and $V_s$ is quadratic.
A  Luenberger observer  ($L_t$ constant in~\eqref{eq:obs_Luenberg}) satisfying Assumption~\ref{ass:exp_obs} can be designed, if $\begin{pmatrix}\begin{pmatrix}A&P_x\\0&S\end{pmatrix}&\begin{pmatrix}C&-P_y\end{pmatrix}\end{pmatrix}$ is detectable. 
For the special case of linear systems with constant exogenous signals ($S=I$), detectability is also treated extensively in~\cite{muske2002disturbance} for offset free tracking MPC. 
We point out that in the linear case, $W$ (Ass.~\ref{ass:detect}) is quadratic and hence the stability proof in Theorem~\ref{thm:dyn_error} can be extended to encompass additional random   disturbances.

Furthermore, in the linear unconstrained case the error feedback MPC reduces to a linear stabilizing feedback with a stable observer. 
Since $u_t=K_N\hat{x}_t$ is linear in the estimation error $\hat{e}_t$, which is ISS (Ass.~\ref{ass:exp_obs}), we get finite-gain $\mathcal{L}_2$ stability with a gain $\gamma_{\mathcal{L}_2,N}$ that improves as $N\rightarrow\infty$, unlike in Theorem~\ref{thm:dyn_error}.
\end{remark}

\end{document}